\newcommand{\bea}{\begin{eqnarray}}
\newcommand{\eea}{\end{eqnarray}}
\newcommand{\beq}{\begin{equation}}
\newcommand{\eeq}{\end{equation}}
\newcommand\wt{\widetilde}
\newtheorem{theorem}{Theorem}[section]
\newtheorem{proposition}[theorem]{Proposition}
\theoremstyle{definition}
\newtheorem{definition}{Definition}[section]
\theoremstyle{remark}
\newtheorem{remark}[theorem]{Remark}
\newtheorem{example}[theorem]{Example}
\numberwithin{equation}{section}
\numberwithin{theorem}{section}
\newcommand\cA{\mathcal{A}}
\newcommand\cE{\mathcal{E}}
\newcommand\cH{\mathcal{H}}
\newcommand\cG{\mathcal{G}}
\newcommand\cL{\mathcal{L}}
\newcommand\cP{\mathcal{P}}
\newcommand\cS{\mathcal{S}}
\newcommand\cV{\mathcal{V}}
\newcommand\cW{\mathcal{W}}
\newcommand{\CC}{{\mathbb C}}
\newcommand{\RR}{{\mathbb R}}
\newcommand{\TT}{{\mathbb T}}
\newcommand{\ZZ}{{\mathbb Z}}
\newcommand{\HH}{{\mathbb H}}
\newcommand{\PP}{{\mathbb P}}
\newcommand{\mV}{{\mathscr V}}
\newcommand{\vect}[1]{\bm{{#1}}}
\newcommand{\im}{\text{i}}
\newcommand{\floor}[1]{\lfloor #1 \rfloor}
\newcommand{\eul}{\mathfrak{Eul}}
\newcommand{\coeul}{\mathfrak{coEul}}
\newcommand{\kerv}{\mathfrak{Kerv}}
\newcommand{\cokerv}{\mathfrak{coKerv}}
\begin{document}

\title[Differential topology of semimetals]{Differential topology of semimetals}

\author{Varghese Mathai}
\address{Department of Pure Mathematics, University of Adelaide,
Adelaide 5005, Australia}
\email{mathai.varghese@adelaide.edu.au}

 \author{Guo Chuan Thiang}
\address{Department of Pure Mathematics, University of Adelaide,
Adelaide 5005, Australia}
\email{guochuan.thiang@adelaide.edu.au}







\begin{abstract}
The subtle interplay between local and global charges for topological semimetals exactly parallels that for singular vector fields. Part of this story is the relationship between cohomological semimetal invariants, Euler structures, and ambiguities in the connections between Weyl points. Dually, a topological semimetal can be represented by Euler chains from which its surface Fermi arc connectivity can be deduced. These dual pictures, and the link to topological invariants of insulators, are organised using geometric exact sequences. We go beyond Dirac-type Hamiltonians and introduce new classes of semimetals whose local charges are subtle Atiyah--Dupont--Thomas invariants globally constrained by the Kervaire semicharacteristic, leading to the prediction of torsion Fermi arcs.
\end{abstract}

\keywords{Topological semimetals, Topological insulators, Fermi arcs, Euler structures, Euler characteristic, Poincar\'e--Hopf theorem, Kervaire structures, Kervaire semicharacteristic, Atiyah--Dupont theorem, Mayer--Vietoris sequence, Gysin sequence}

\subjclass[2010]{81H20,  82D35 }
  \maketitle
  \date{}

\setcounter{tocdepth}{2}
\tableofcontents


\section{Introduction}
As theoretically predicted in \cite{Wan} and experimentally discovered in \cite{Xu1,Xu2,Lv,Zhang}, two-band solid state systems in 3D can host topologically protected ``Weyl semimetallic phases'' in which the quasiparticle excitations at the band crossings (``Weyl points'') share some features with Weyl fermions from relativistic quantum mechanics. The experimental signature, namely ``Fermi arcs'' of surface states which connect the projected Weyl points in the surface Brillouin zone, is just as remarkable. Based on this initial success, much effort has been put into the general study of topological semimetallic phases in the hope of predicting and eventually realising new exotic fermionic quasiparticles in condensed matter systems.

Most proposals have focused on \emph{local} aspects in the sense of finding new types of topological obstructions to locally opening up gaps in semimetal band crossings. Staying in the two-band case, there are generalisations of the basic Weyl semimetal phase to ``Type-II'' ones \cite{XutypeII,Sol}, as well as ``quadratic'' ones \cite{Huang}. One can consider model Hamiltonians with more than two bands and in a different number of spatial dimensions \cite{ZhangLian}. Following the example set by topological insulator theory, one may also introduce antiunitary symmetry constraints such as time-reversal and charge-conjugation \cite{Zhao1}. To circumvent certain difficulties in achieving non-trivial obstructions, point symmetries were also introduced into the game, and a host of possibilities arise \cite{Bradlyn}. Our results in this paper have a different focus, concentrating on (1) isolating in a conceptually simple but rigorous way the general mathematical mechanism allowing for local semimetallic charges, (2) providing the full \emph{global} topological characterisation of semimetal band structures, Fermi arcs, and the relation to topological insulator invariants, and (3) introducing a new family of semimetals whose topological invariants have a very different character to those commonly used in the literature. Figure \ref{fig:Flowchart} summarises the physical and mathematical concepts which we introduce in this paper.

 \begin{figure}
 \centering
        \includegraphics[width=0.9\textwidth]{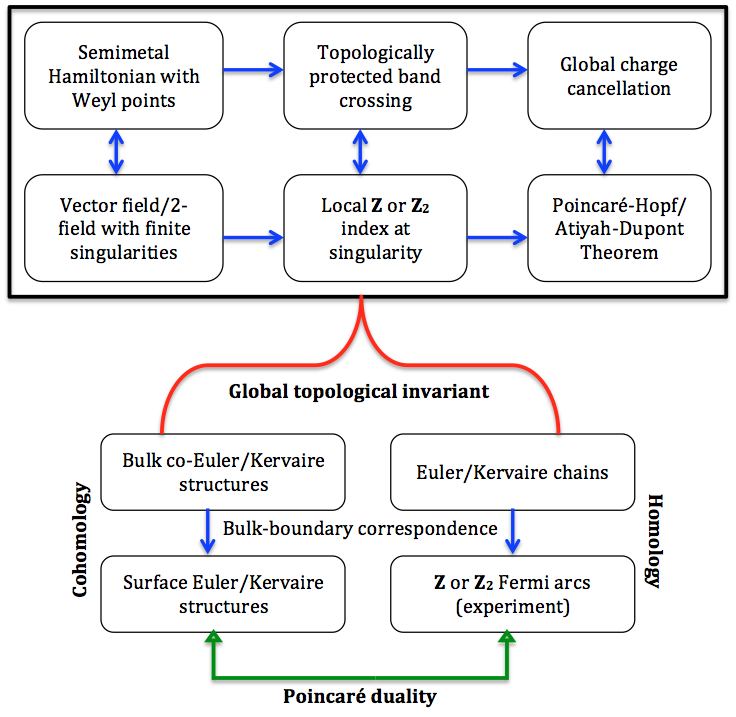}
        \caption{Semimetals have Bloch Hamiltonians with band crossings at certain isolated Weyl points in the Brillouin torus $T$, corresponding to vector fields over $T$ with finite isolated singularities at the Weyl points. These band crossings may be topologically protected, with their $\ZZ$ topological charges corresponding to the local indices of the vector field at the singularities. The charge-cancellation condition corresponds to the Poincar\'{e}--Hopf theorem. The global semimetal band structure can be studied in two complementary ways (Definition \ref{defn:semimetaltopinv}) which are dual to each other and related by Poincar\'{e} duality. Semimetals have characteristic classes in cohomology, identifiable as co-Euler structures, while in homology they are represented by the dual Euler chains. Under a bulk-boundary correspondence, these invariants are mapped to surface invariants, which in homology is just the projection of the Euler chain onto the Fermi arcs seen in experiments. These ideas can be generalised to semimetal Hamiltonians specified by tangent 2-fields, which now have $\ZZ_2$-charges. Euler structures and chains are replaced by Kervaire ones, and the Poincar\'{e}--Hopf theorem replaced by a theorem of Atiyah--Dupont.}\label{fig:Flowchart}
 \end{figure}
 
With regards to (1), we see, with much hindsight, that the local topological charges protecting basic types of semimetal crossings are completely inherited from the singular vector field which specifies the Hamiltonian \cite{Witten,MTFermi}. The Poincar\'{e}--Hopf theorem \cite{MilnorDT} then shows how the Brillouin zone topology forces a global charge cancellation condition, and the prediction of Fermi arcs becomes a corollary of the bulk-boundary correspondence. This brings us to point (2). While the 3D Weyl semimetal may be intuitively understood using a Stokes' Theorem argument to predict ``jumps'' in 2D Chern numbers (or ``weak invariants'') across the Weyl points, there are several subtleties involved, not the least of which is the fact that the local charges (equal to the jumps) fail to completely characterise the semimetal, and cannot predict, a priori, the Fermi arc topology \cite{MTFermi}. It is here that the finer topology of the Brillouin torus comes into play: the fact that $\TT^d$ has non-trivial cycles means that ``the (Fermi) arc joining two Weyl points'' is an ambiguous notion, even up to deformations of the arc (contrast with the simply-connected spheres). The resolution of this ambiguity requires additional global data which captures the full topology of semimetal band structures, and is closely related to the concept of Euler structures introduced by Turaev to resolve ambiguities in torsions of Reidemeister type \cite{TuraevEuler}; both hinge on the fact that the first homology of the underlying manifold may be non-zero. By considering a homological version of \emph{Euler structures} \cite{Burg,Hutchings}, we find that the language of \emph{Euler chains} very concisely represents all the \emph{topological} features of a semimetal. The Euler chain representation directly predicts the \emph{topology} of the Fermi arcs, and the latter are the experimental signature of a topological semimetal. Furthermore, the notion of topological equivalence between semimetals becomes clear in this language, as a kind of homotopy between their defining vector fields. A semimetal Hamiltonian determines a Fermi arc, e.g.\ through a transfer matrix formalism \cite{Hatsugai2,Avila,Dwivedi2}, and this determination descends to the level of topological invariants in the sense that equivalent Hamiltonians give rise to equivalent Fermi arcs. 

Having isolated the basic mathematical principles underlying the geometry and topology of semimetals, we move on to (3): the generalisation of our theory for 2-frames or tangent 2-fields along the lines of the Atiyah--Dupont theorem \cite{AD} which is the analogue of the Poincar\'{e}--Hopf theorem\footnote{The latter accounts for the Nielsen--Ninomiya no-go theorem \cite{NN} excluding chiral fermions in lattice gauge theory.} in this context, where the  \emph{Kervaire semicharacteristic} plays the role of the Euler characteristic.
It is a topological invariant of a very different nature to those that had been previously utilised in the literature on topological phases (e.g.\ Chern classes and related characteristic classes, $K$-theory, winding numbers/degrees).
This leads us to define some new mathematical notions --- \emph{Kervaire structures} and \emph{Kervaire chains} --- which are two mutually dual objects characterising a new class of $\ZZ_2$-topological semimetals introduced for the first time in this paper, predicting in principle that torsion Fermi arcs can be experimentally found.

\subsection{Outline}
In Section \ref{sec:physback}, we give a quick overview of the use of cohomological characteristic classes to study topological insulating phases, and then connect it to semimetal phases via the Mayer--Vietoris technique of ``patching together'' the non-singular portion (avoiding the Weyl points) and topologically trivial portions (small neighbourhoods of the Weyl points). The basic ``semimetal Mayer--Vietoris sequence'' is given a geometric interpretation as an extension problem, parallel to the physical problem of deforming a semimetal band structure into a globally insulating one, in Section \ref{sec:extension}. In Section \ref{sec:Euler}, we introduce the concept of Euler structures and the Euler chain representation of a semimetal. We illustrate with diagrams the basic intuitive ideas, and apply the Euler chain concept to Fermi arc topology and their rewirings. In Section \ref{sec:higherdimension}, we study Dirac-type Hamiltonians parametrised by singular vector fields, emphasising the merits of an abstract coordinate-free analysis with regards to symmetry considerations. In particular, we justify the generality of Dirac-type Hamiltonians through natural symmetry constraints, in order to avoid introducing ``spurious topology'' through ad-hoc parametrisations of toy model Hamiltonians. In Section \ref{sec:semimetalgerbe}, a geometric connection to gerbes is made in the 4D case (we refer the reader to Appendix \ref{appendix} for a primer on gerbes and the construction of the basic gerbe on $\text{SU}(2)$), and in Section \ref{sec:5Dsemimetal}, the connection to quaternionic valence bundles in 5D is made, and the geometric interpretation as an extension problem is  in Section \ref{sec:extension}. In Section \ref{sec:torsionsemimetal}, we introduce a new class of semimetals modelled on Hamiltonians which are bilinear in gamma matrices. The topology of such semimetals is characterised by the notion of Kervaire structures and chains, and we predict an interesting phenomenon of $\ZZ_2$ surface Fermi arcs. Finally we suggest some directions for future work in Section \ref{sec:Outlook}.

\bigskip

{\bf Conventions.} In this paper, $T$ will always be a compact connected smooth $d$-dimensional manifold without boundary, playing the role of a general parameter manifold for a family of Hamiltonians. In specific instances, it may have an orientation, metric, spin${}^c$ structure etc., and may be the Brillouin zone for a family of Bloch Hamiltonians, as additionally specified. All (co)-homology groups have integer coefficients unless otherwise indicated. The real Clifford algebra $Cl_{r,s}$ has $r$ generators squaring to $-1$ and $s$ generators squaring to $+1$, corresponding to an orthonormal basis $e_1,.\ldots,e_{r+s}$ in $\RR^{r+s}$ for the bilinear form $(-,\ldots,-,+\ldots +)$; $Cl_n$ refers to $Cl_{0,n}$, and the complex Clifford algebra $\CC l_n$ has $n$ anticommuting generators squaring to $+1$.


\section{Physical background: Topological phases, cohomology}\label{sec:physback}
\subsection{Topological band insulators and cohomology groups}\label{sec:2DBerry}
2D Chern insulators are classified by a first Chern number, equal to the integrated Berry curvature 2-form of the valence bands over the Brillouin zone $T=\TT^2$. When the Fermi energy $E_F$ lies in a spectral gap, the Fermi projection onto the valence bundle $\cE_F$ may be defined. Over $\TT^2$ the $\cE_F$ are classified by their rank and first Chern class, with the latter generally considered to be the interesting invariant. Since the first Chern class of $\cE_F$ is that of the determinant line bundle, the classification problem of 2D Chern insulators is formally equivalent to that of U(1) line bundles over the Brillouin zone for which the \emph{integral cohomology group} $H^2(\TT^2,\ZZ)\cong\ZZ$ provides a complete answer. In terms of differential forms, $H^2(\TT^2,\ZZ)\neq 0$ measures the failure of the (Berry) curvature 2-form $\mathcal{F}$ of $\cE_F$ to be globally exact (the Berry connection 1-form $\cA$ only exists locally). The integral of $\mathcal{F}$ over $\TT^2$ yields an integer-valued first Chern number, which may be interpreted as a topological obstruction to globally defining a basis of valence Bloch eigenstates for $\cE_F$.

A typical way to construct a valence line bundle with first Chern class $n\in \ZZ$ is to consider $2\times 2$ traceless\footnote{We normalize the Fermi level to $0$.} Bloch Hamiltonians $H(k)=\vect{h}(k)\cdot\vect{\sigma},\, k\in\TT^2$, parametrised by a nowhere-vanishing smooth vector field $\vect{h}$ on $\TT^2$. Here $\vect{\sigma}=(\sigma_1,\sigma_2,\sigma_3)$ are the Pauli matrices. The spectrum is $\pm |\vect{h}(k)| \neq 0$, and the spectrally-flattened Hamiltonian $\widehat{H}$ is parametrised by the unit vector map $\hat{\vect{h}}:\TT^2\rightarrow S^2\subset\RR^3$. Note that $H$ and $\widehat{H}$ determine the same valence line subbundle $\cE_F$ of the Bloch bundle $\TT^2\times\CC^2$. The map $\hat{\vect{h}}$ has an integer degree given by the formula $\hat{\vect{h}}_*([\TT^2])=\text{deg}(\hat{\vect{h}})[S^2]$, where $[\TT^2], [S^2]$ denote the fundamental classes. The degree is a homotopy invariant, and all degrees do occur. By identifying $S^2$ with $\CC\PP^1$ (the Bloch sphere construction), we see that $\cE_F$ is given by the pullback of the tautological (Hopf) line bundle over $\CC\PP^1$ under $\hat{\vect{h}}$. Due to the low-dimensionality of $\TT^2$, all maps into $\CC\PP^\infty=K(\ZZ,2)$ are approximated by maps into $\CC\PP^1$, and so the first Chern class of $\cE_F$ is precisely the degree of $\hat{\vect{h}}$.

In higher dimensions, and also in the presence of additional symmetries (e.g.\ time-reversal or point group symmetries) constraining the form of the Hamiltonians, topological band insulators are classified by more complicated invariants. In the mathematical physics literature, these have included higher degree cohomology invariants playing the role of characteristic classes for the valence bundles, e.g.\ higher Chern/instanton numbers, symplectic bundle invariants \cite{ASSS,ASSS2,Hatsugai}, Fu--Kane--Mele/FKMM-type invariants \cite{dNG}, or generalised cohomology invariants ($K$-theory) \cite{FM,Thiang}. We are primarily interested in $3, 4$ or $5$ spatial dimensions, and in those dimensions, the relevant ``Berry curvature form'' $\mathcal{F}$ can be a higher-degree form \cite{Gaw,Gaw2,LinYau}, and so a higher degree cohomology group $H^n(X,\ZZ)$ comes into play. Cohomology groups with other coefficients also arise quite generally in obstruction theory, which can be used to study semimetal-insulator transitions. All these (generalised) cohomological invariants are also homotopy invariants, and are closely related to degree theory (as we saw from the 2D Chern insulator described above).

\subsection{Topological semimetals}
Although the prototypical topological semimetal is the Weyl semimetal in $d=3$, the general theory presented in this paper works in higher dimensions. This is not merely of theoretical interest. In the case of topological insulators, there are concrete proposals and experiments in which topological phases of physical systems in three or fewer dimensions formally realise those in $d>3$. Some examples are quasicrystalline systems \cite{Kraus}, time-periodic (Floquet) insulators \cite{Lindner,Gaw} and their photonic counterparts \cite{Rechtsman}, and general ``virtual'' topological insulators \cite{Prodan}. It is in this spirit that we embark on the mathematical study of semimetallic phases for general $d$. 

\subsubsection{Local aspects of semimetal topology}
In a semimetal, the Fermi level does not lie in a spectral gap, but instead passes through band crossings at some Weyl submanifold $W$ of the momentum space manifold $T$. Such crossings may arise as ``accidental degeneracies'' \cite{vN,Herring} and, for ``Dirac-type Hamiltonians'' in $d$-dimensions occur generically at points. A quick way to see this is to consider Dirac-type Bloch Hamiltonians of the form
\begin{equation}
H(k)=\vect{h}(k)\cdot\vect{\gamma}, \qquad k\in T;\label{Dirac-type-Hamiltonian}
\end{equation}
Here each $k\mapsto\vect{h}(k)$ is a smooth assignment of $d$-component vectors, and $\vect{\gamma}=(\gamma_1,\ldots,\gamma_d)$ is the vector of traceless Hermitian $\gamma$-matrices representing irreducibly and self-adjointly the $d$-generators of a Clifford algebra (for $d=3$, these are the usual $2\times 2$ Pauli matrices). By the Clifford algebra relations $\gamma_i\gamma_j+\gamma_j\gamma_i=2\delta_{ij}$, the square of $H(k)$ is the scalar $|\vect{h}(k)|^2$, so the eigenvalues of $H(k)$ are $\pm |\vect{h}(k)|$, each with degeneracy equal to half the dimension of the $\gamma$-matrices. This means that a band crossing at $k\in T$ requires $d$ conditions $\vect{h}(k)=\vect{0}$, and we see that the locus $W$ of band crossings generically has codimension $d$. When $T$ is a $d$-manifold, $W$ is generically a collection of isolated ``Weyl points'', the (rather misleading) terminology arising from the $d=3$ case in which the low-energy excitations around a band crossing are essentially described by the Weyl equation \cite{Wan,Turner}. Furthermore, each Weyl point $w\in W$ serves as a generalised {\em monopole}, and may be assigned a {\em local topological charge} via the unit-vector map restricted to a small enclosing $d-1$-sphere $S_w$,
\begin{equation} 
\hat{\vect{h}}_w = \frac{\vect{h}(k)}{ |\vect{h}(k)|} : S_w \longrightarrow
S^{d-1}\subset\RR^d, \qquad d=3,4,5,\label{unitvector}
\end{equation}
and defining the local charge at $w$ to be the degree of $\hat{\vect{h}}_w$, equal to the local index $\text{Ind}_w(\vect{h})$ of $\vect{h}$ at $w$. The local charge information of all the crossings for the Hamiltonian $\vect{h}\cdot\vect{\gamma}$ is concisely summarised by the 0-chain $\cW_{\vect{h}}\coloneqq \sum_{w\in W}\text{Ind}_w(\vect{h}) w \in C_0(T,\ZZ)$, where $C_i(\cdot,\ZZ)$ denotes the set of singular $i$-chains.

The unit vector map $\hat{\vect{h}}$ on $T\setminus W$, and  its restrictions \eqref{unitvector} to $S_w$, can be used to pullback the Hopf line bundle with connection when $d=3$, the basic gerbe with connection when $d=4$ and the quaternionic Hopf line bundle with connection when $d=5$ (see Appendix \ref{appendix} for their definitions). As explained in Section \ref{sec:extension}, these geometrical structures arise naturally when analysing Dirac-type Hamiltonians, their band crossings, and the topology of their valence bundles.

\subsubsection{Global aspects of semimetal topology}
Interesting phenomena arise when we study the band crossings \emph{globally}. For instance, if $\vect{h}$ can be identified as a tangent vector field over $T$, as is often the case in physical models, then the classical Poincar\'{e}--Hopf theorem guarantees that $\sum_{w\in W}\text{Ind}_w(\vect{h})=\chi(T)=0$ where $\chi(T)$ is the Euler characteristic of $T$, which is zero for $T=\TT^d$ or $d$ odd. This global ``charge-cancellation'' condition is well-known in lattice gauge theory (where $T=\TT^d$) as the Nielsen--Ninomiya theorem \cite{NN} and discussed in the semimetal context in \cite{Witten,MTFermi}. Together with the bulk-boundary correspondence, it predicts the appearance of surface Fermi arcs connecting the projected Weyl points on a surface Brillouin zone. These arcs are an experimental signature of topological non-triviality in a semimetal band structure, and have been discovered recently \cite{Xu1,Lv}.

More generally, $\vect{h}$ could be a section of some oriented rank-$d$ vector bundle $\cE$ over a compact oriented $d$-manifold $T$. The local topological charges at the zeros of $\vect{h}$ are defined as before, as is the Euler characteristic of $\cE$ (evaluating the Euler class on the fundamental class). When the Euler characteristic $\chi$ vanishes (and also when $\chi\neq 0$ following \cite{Burg}), a notion of \emph{Euler structures} \cite{TuraevEuler} can be defined. The set of Euler structures of $T$ form a torsor over $H_1(T,\ZZ)$, and it provides an elegant way to understand some global aspects of Fermi arc topology, including certain ambiguities first studied in \cite{MTFermi}, the ``rewiring'' of Fermi arcs \cite{Dwivedi,LFF}, and Weyl point creation/annihilation (Section \ref{sec:Eulerheuristic}). This is already interesting when $T=\TT^d$, for which there are non-trivial Euler structures: the various different ways of connecting two Weyl points by an arc are permuted amongst each other under $H_1(T,\ZZ)$. We remark that Euler structures had previously been been used to clarify certain ambiguities in the Reidemeister torsion and Seiberg--Witten invariants of 3-manifolds. Their appearance in the semimetal context indicates a similar subtlety and richness in the structure of semimetals. 

We take the following as a working definition:
\begin{definition}\label{defn:abstractHamiltonian}
Let $T$ be a compact, oriented $d$-dimensional Riemannian spin$^{c}$ manifold with $d\geq 3$. An \emph{abstract Dirac-type Hamiltonian} is a smooth vector field $\vect{h}$ over $T$, which is \emph{insulating} if $\vect{h}$ is non-singular (nowhere zero), and \emph{semimetallic} if $\vect{h}$ has a finite set $W$ of singularities.
\end{definition}
This definition is motivated by the construction of concrete Bloch Hamiltonians from such vector fields, generalising \eqref{Dirac-type-Hamiltonian}, given in Section \ref{sec:higherdimension}.

\subsection{Mayer--Vietoris principle: connecting topological insulators and semimetals}
In \cite{MTFermi}, we carried out an analysis of the global structure of semimetal band structures and Fermi arcs. The Mayer--Vietoris principle provided the key to understanding the connection between topological semimetals and insulators (Fig.\ \ref{fig:metal-or-insulator}). In particular, we showed that the local charge information at Weyl points needs to be supplemented with some global data to fully characterise semimetal band structure and Fermi arcs. We summarise the key constructions in \cite{MTFermi}, then provide several different ways to understand them.

 \begin{figure}
 \centering
        \includegraphics[width=0.7\textwidth]{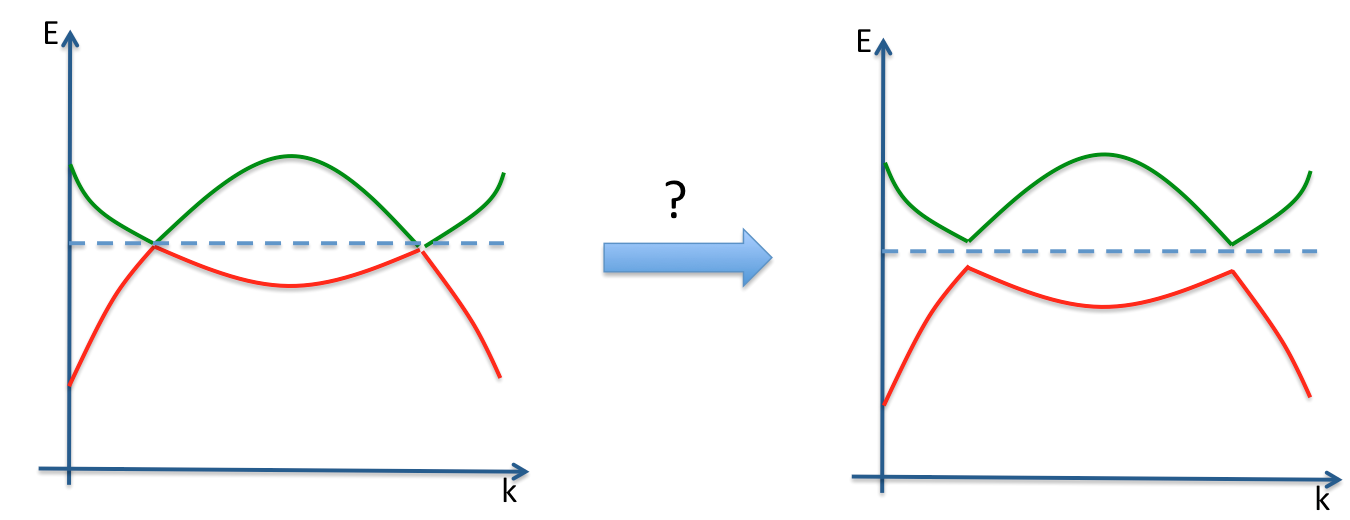}
        \caption{There are local obstructions to opening up a gap at a band crossings, related by a global consistency condition.}\label{fig:metal-or-insulator}
 \end{figure}

{\bf Definition of $W, S_W, D_W$.} Let $T$ be a compact oriented $d$-manifold (usually the Brillouin torus $\TT^d$ in concrete models) with $d\geq 3$, and consider a smooth family of (concrete) $n\times n$ Bloch Hamiltonians $T\ni k\mapsto H(k)$. The spectrum comprises $n$ bands, and we assume that band crossings at the Fermi level (normalized to $0$) occur on a finite set $W\subset T$ of isolated ``Weyl points''. On the complement $T\setminus W$, the Bloch Hamiltonians are gapped, and there is no difficulty in defining the valence bundle $\cE_F$ over $T\setminus W$ and its characteristic classes in $H^*(T\setminus W)$. For each $w\in W$, choose (mutually disjoint) open balls $D_w$ containing $w$, and $(d-1)$-spheres $S_w\subset D_w$ surrounding $w$. Let $S_W=\coprod_{w\in W}S_w$ and $D_W=\coprod_{w\in W}D_w$ be the respective disjoint unions. Thus $T$ is covered by $D_W$ and $T\setminus W$, and the intersection of the covering sets is $S_W\times (-\epsilon,\epsilon)$ which is homotopy equivalent to $S_W$ (Figure \ref{fig:MV-subspaces} illustrates the setting for two Weyl points).

 \begin{figure}
 \centering
        \includegraphics[width=0.8\textwidth]{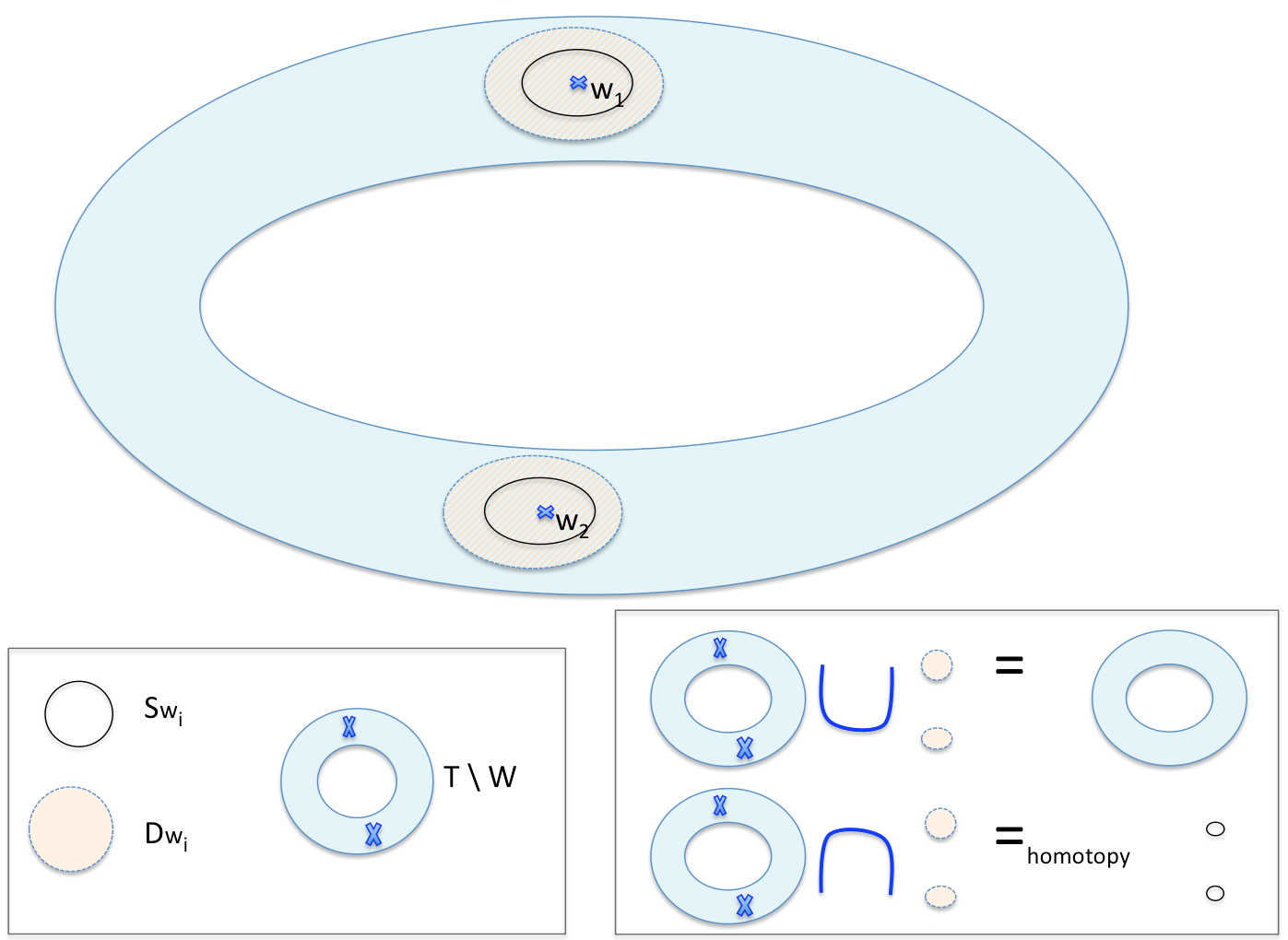}
        \caption{Covering of $T$ used in the semimetal MV-sequence, for two Weyl points $w_1, w_2$.}\label{fig:MV-subspaces}
 \end{figure}

The MV sequence links the cohomology groups of $T$ with those of the covering subspaces and the intersection $S_W$, by an exact sequence
\begin{equation}
\cdots \rightarrow H^{*-1}(S_W)\rightarrow H^*(T)\rightarrow H^*(T\setminus W)\oplus H^*(D_W)\rightarrow H^*(S_W)\rightarrow H^{*+1}(T)\rightarrow\cdots.\nonumber
\end{equation}
The maps which increase cohomological degree are the MV connecting maps, while the others are (differences of) restriction maps. Note that the $H^*(D_W)$ terms vanish for $*>0$ since $D_W$ is a disjoint union of contractible sets.

When a semimetallic Bloch Hamiltonian is specified abstractly by a vector field $\vect{h}$ as in Definition \ref{defn:abstractHamiltonian}, the zero set of $\vect{h}$ will correspond to the Weyl points $W$, and we have the subsets $S_W, D_W$ exactly as above. Then $\vect{h}$ defines a topological invariant for the semimetal which lives in $H^{d-1}(T\setminus W)$ (Section \ref{sec:Euler}). The most interesting part of the MV sequence is at $*=d-1$, which we call the cohomological \emph{semimetal MV sequence},
\begin{equation}
\cdots 0\longrightarrow \underbrace{H^{d-1}(T)}_{\text{(weak) insulator}}\overset{\iota^*}{\longrightarrow} \underbrace{H^{d-1}(T\setminus W) }_{\text{insulator/semimetal}}
\overset{\beta}{\longrightarrow} \underbrace{H^{d-1}(S_W)}_{\text{local charges}}\overset{\Sigma}{\longrightarrow}\underbrace{H^d(T)}_{\text{total charge}}\longrightarrow 0\cdots,\label{MVsequence}
\end{equation}
Here $H^{d-1}(S_W)=\oplus_{w\in W} H^{d-1}(S_w)$ is the direct sum of the local charge groups for each $w$, and the last MV connecting map $\Sigma$ is the ``total charge operator'' which adds up the local charges \cite{BottTu}. For integer coefficients, $H^{d-1}(S_w)\cong\ZZ\cong H^d(T)$ and the charges are $\ZZ$-valued, but we will also consider $\ZZ_2$ coefficients in Section \ref{sec:torsionsemimetal}.

\subsection{Basic two-band Weyl semimetal in 3D}
Let us be more explicit in the basic 3D situation, with $2\times 2$ Bloch Hamiltonians parametrised by $k\in\TT^3$ and $W=\{w_1,w_2\}$. The MV-sequence reads
\begin{equation}
\cdots 0\rightarrow H^{2}(\TT^3)\xrightarrow{\iota^*} H^{2}(\TT^3\setminus W) 
\xrightarrow{\beta} H^{2}(S^2_{w_1}\coprod S^2_{w_2})\xrightarrow{\Sigma} H^3(\TT^3)\rightarrow 0\ldots
\end{equation}
Consider local charges $(+q,-q)\in H^{2}(S^2_{w_1}\coprod S^2_{w_2})$ which cancel, then there exists (by exactness) a line bundle $\cE_F$ over $\TT^3\setminus W$ whose restriction to $S_{w_i}$ has Chern class given by the local charge at $w_i$.

 \begin{figure} 
 \centering
        \includegraphics[width=0.6\textwidth]{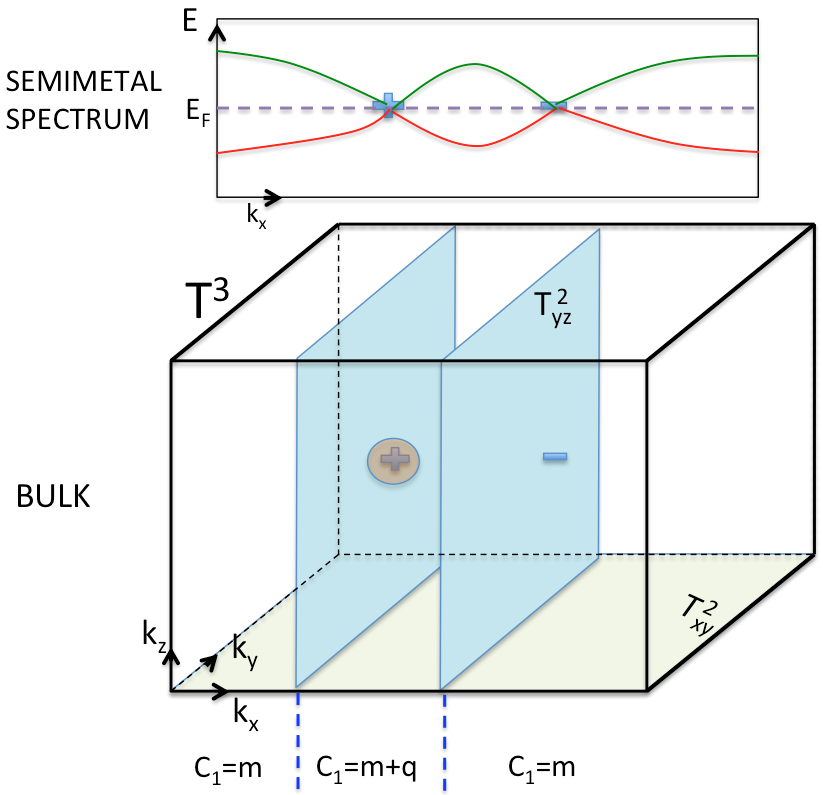}
        \caption{For each $k_x$ away from $W=\{+,-\}$ (with charges $(+q,-q)$), $\mathcal{E}_F$ has a first Chern number $c_1(k_x)\equiv c_1^{yz}(k_x)$ on the 2D subtorus in the $y$-$z$ direction (blue). Consider the region $X$ between the two blue 2-tori, with a small ball around $+$ removed. This is a 3-manifold whose boundary $\partial X$ consists of the two blue 2-tori (oppositely oriented) together with a small 2-sphere around $+$. Note that the top and bottom faces are identified, as are the front and back faces so they do not contribute to $\partial X$. Everywhere on $X$, the valence line bundle is well-defined, and its Berry curvature 2-form $\mathcal{F}$ is closed. By Stokes' theorem, $0=\int_X \text{d}\mathcal{F}=\int_{\partial X}\mathcal{F}$, so the difference between the first Chern numbers for the left 2-torus and the right 2-torus is precisely the first Chern number on the small 2-sphere, which is $+q$. This argument is easily generalised to higher dimensions and higher curvature forms (Remark \ref{remark:higherjumps}).}\label{fig:Chern-jump}

 \end{figure}

{\bf Intuitive idea behind relation between local charges and first Chern numbers of a Weyl semimetal.} Suppose the vector field $\vect{h}$ on $\TT^3$ vanishes at $w_1, w_2$ with local indices $+q,-q$ respectively. For simplicity, we first assume that $w_i$ have the same $k_y, k_z$-coordinates, as in Figure \ref{fig:Chern-jump}. Take a trivial Bloch bundle $\mathscr{S}_0=\TT^3\times\CC^2$, the Pauli matrices $\sigma_1,\sigma_2,\sigma_3$ with respect to a trivialisation, and let $H(k)=\vect{h}(k)\cdot\vect{\sigma}$. Then on $\TT^3\setminus W$, the valence line bundle $\cE_F$ is well-defined. At each $k_x$ away from the Weyl points, $\cE_F$ has first Chern numbers $c_1^{yz}(k_x)$ obtained by pairing $c_1(\cE_F)$ with the fundamental class of the 2-torus at $k_x$ (integrating the Berry curvature over the 2-torus). The integer $c_1^{yz}$ is sometimes called a ``weak invariant'' for ``weak'' topological insulators supported on lower-dimensional subtori of the Brillouin torus. As a function of $k_x$, $c_1^{yz}$ remains constant unless a Weyl point is traversed, whence it jumps by an amount equal to the local charge there. A possible function $c_1^{yz}$ is given at the bottom of Figure \ref{fig:Chern-jump}, and different choices of $\vect{h}$ can give rise to different $m$. 

Note that even with $m$ and the local charges $\pm q$ still do not completely specify the class of $\cE_F$ --- there are two other independent first Chern numbers $c_1^{xy}, c_1^{zx}$ which are constant functions of $k_z$ and $k_y$ respectively (since both Weyl points are traversed simultaneously as either $k_z$ or $k_y$ is varied). In general, the $w_i$ need not have the same $k_y, k_z$-coordinates, and so there are similar jumps in the functions $c_1^{xy}(k_z)$ and $c_1^{zx}(k_y)$, see Figures \ref{fig:Intersection}-\ref{fig:Intersection2}. There is a simple way to concisely and invariantly (i.e.\ coordinate-free) capture all the jumps in the various first Chern numbers, through a Poincar\'{e} dual picture in terms of Euler chains (see Section \ref{sec:semimetalEulerchain} and Remark \ref{remark:higherjumps}).

{\bf All first Chern numbers can appear in a Weyl semimetal.}
Let us also consider $\cE'_F$ which arise as a subbundle of a possibly non-trivial Bloch bundle $\mathscr{S}$. Such $\cE'_F$ are easily obtained by tensoring $\mathscr{S}_0$ with a line bundle $\cL$ with $c_1(\cL)\in H^2(\TT^3)$. The ``twisted'' Hamiltonian $H'(k)=(\vect{h}(k)\cdot\vect{\sigma})\otimes 1$ acts on the ``twisted'' Bloch bundle $\mathscr{S}=\mathscr{S}_0\otimes \cL$, and has valence line bundle $\cE'_F=\cE_F\otimes\cL$ (defined over $\TT^3\setminus W$). Therefore $c_1(\cE'_F)=c_1(\cE_F)+\iota^*(c_1(\cL))$ and, in particular, the first Chern numbers $c_1^{yz}(\cE'_F)$ increases by $c_1^{yz}(\cL)$ and similarly for $c_1^{xy}, c_2^{zx}$. In this way, all elements of $H^2(\TT^3\setminus W)$ can be realised as valence line bundles of some semimetallic Bloch Hamiltonian.

\subsection{Geometric interpretation of MV sequence}\label{sec:extension}
In $d=3$, the exact sequence \eqref{MVsequence} has a direct interpretation as a solution to a geometrical extension problem, which we can interpret in terms of insulator-semimetal transitions. Let us recall that $H^2(T,\ZZ)$ classifies line bundles over $T$ (e.g.\ determinant valence bundles). So for $T=\TT^3$, the exactness of \eqref{MVsequence} tells us the following:
\begin{enumerate}
\item A collection of line bundles $\cL_w\rightarrow S_w$ with Chern classes $c_w$ may be extended to a line bundle over $T\setminus W$ (comes from a semimetal) iff $\sum_{w\in W}c_w=0$ (local charges cancel).
\item A line bundle $\cL\rightarrow T\setminus W$ extends to all of $T$ iff each local charge vanishes. Thus a semimetal (whose valence bundle represents a class in $H^2(T\setminus W)$) can be gapped into an insulator iff all the local charges are zero. A \emph{topological semimetal} with a topologically protected crossing at $w$ has non-zero local charge there.
\item The insulator invariants in $H^2(T)$ (``weak'' Chern insulators) appear faithfully in $H^2(T\setminus W)$. Thus two semimetals with the same local charge (same image under $\beta$) can be \emph{globally inequivalent}, differing by some element in (the image under $\iota^*$ of) $H^2(T)$ (the weak invariants).
\end{enumerate}

An important consequence of Point 3 is that the local charge information is not sufficient to determine the global band structure of the semimetal. As explained in \cite{MTFermi}, this ambiguity has crucial consequences for the determination of the resulting surface Fermi arcs. We will have more to say about this in Section \ref{sec:BBC}, see also Figure \ref{fig:Bulk2boundary}.

More generally, in $d\geq 3$, a Dirac-type Bloch Hamiltonian in the sense of Definition \ref{defn:abstractHamiltonian} determines by restriction smooth maps $\hat{\vect{h}}_w: S_w \longrightarrow S^{d-1}, w\in W$, and the semimetal MV sequence can be interpreted at this level. In Appendix \ref{appendix} we describe the relevant geometric structures on $S^{d-1}$, $d=3,4,5$, and these can be pulled back to $S_W$ via all the $\hat{\vect{h}}_w$. 
\begin{enumerate}
\item In $d=3$, let $\cL$ be the Hopf line bundle over $S^2$. Consider a collection of pullback line bundles $\hat{\vect{h}}_w^*(\cL)$ over $S_w$, such that $\sum_{w\in W} c_1(\hat{\vect{h}}_w^*(\cL))=0=
\sum_{w\in W} \deg(\hat{\vect{h}}_w)$, where $\deg(\hat{\vect{h}}_w)$ denotes the degree of the map $\hat{\vect{h}}_w$. Although they individually may not extend to $T\setminus \{w\}$, together $\coprod_{w\in W}  \hat{\vect{h}}_w^*(\cL)$ over $S_W$  does extend to $T\setminus W$ by exactness of the MV sequence. See the Appendix for more information.

\item In $d=4$, the $H^3$ part of the semimetal Mayer--Vietoris sequence 
\begin{equation}
\cdots 0\rightarrow H^3(T)\xrightarrow{\iota^*} H^3(T\setminus W) 
\xrightarrow{\beta} H^3(S^3_W)\xrightarrow{\Sigma} H^4(T)\rightarrow 0\ldots\label{MV4D}
\end{equation}
has the following analogous geometric interpretation. We refer the reader to Appendix \ref{appendix} for a primer on gerbes and the construction of the basic gerbe on $\text{SU}(2)$.
A collection of pullback basic gerbes $\cG_w\rightarrow S_w$ with Dixmier--Douady numbers $DD_w$ may be extended to a gerbe over $T\setminus W$ (comes from a gerbe semimetal) iff $\sum_{w\in W}DD_w=0$ (local DD charges cancel).
The pullback basic gerbe $\cG\rightarrow T\setminus W$ extends to all of $T$ iff each local DD charge vanishes. Thus a gerbe semimetal (with topological invariant in $H^3(T\setminus W)$) can be gapped into an insulator iff all the local DD charges are zero. A \emph{topological gerbe semimetal} with a topologically protected crossing at $w$ has non-zero local DD charge there.
The insulator invariants in $H^3(T)$ (``weak'' gerbe insulators) appear faithfully in $H^3(T\setminus W)$. Thus two gerbe semimetals with the same local DD charge (same image under $\beta$) can be \emph{globally inequivalent}, differing by some element in (the image under $\iota^*$ of) $H^3(T)$. 
Conversely, the pullback gerbes $\hat{\vect{h}}_w^*(\cG)$ over $S_w$, where $\cG$ is the basic gerbe over $S^3\cong \text{SU}(2)$, such that $\sum_{w\in W} DD(\hat{\vect{h}}_w^*(\cG))=0=
\sum_{w\in W} \deg(\hat{\vect{h}}_w)$. Although the $\hat{\vect{h}}_w^*(\cG)$ may not individually extend to $T\setminus \{w\}$, the gerbe $\coprod_{w\in W} \hat{\vect{h}}_w^*(\cG)$  does extend to $T\setminus W$ by exactness of the MV sequence. See Section
\ref{sec:semimetalgerbe} and the Appendix for more details.

\item In $d=5$, the analysis is analogous to the case of line bundles done above. One replaces the line bundle $\cL$ over $\CC \PP^1$ by the quaternionic line bundle $\cH$ over $\HH \PP^1$ and one uses the 
MV sequence in \eqref{MV5D}, where the first Chern class is replaced by the 2nd Chern class. See  Section \ref{sec:5Dsemimetal} and the Appendix for more details.


\end{enumerate}

Although there is a MV sequence analysis in the case of Kervaire semimetals (Section \ref{sec:Kervairestructures}), we are unable to formulate a geometric extension problem in this context (see Section \ref{sec:Outlook}).

\subsubsection{Adiabatic connections}
In Section \ref{sec:2DBerry}, we briefly alluded to the relation between Chern numbers and Berry connections associated to valence subbundles for a family of Hamiltonians over $T$. In general, a rank-$n$ eigen-subbundle has a $\text{U}(n)$ adiabatic (non-abelian Berry) connection, whose curvature gives the real (de Rham) Chern classes of the subbundle \cite{Simon}. For $n=1$, the curvature 2-form is involved in computing first Chern numbers. In the presence of time-reversal symmetry, the subbundle is complex even-dimensional and in fact a quaternionic symplectic bundle with quaternionic dimension $m=\frac{n}{2}$; the Berry connection becomes a $\text{Sp}(m)$ one, see \cite{ASSS,ASSS2,Hatsugai} and Section \ref{sec:quaternionbundle}. For $m=1$, the 4-form obtained from squaring the curvature enters in the computation of second Chern numbers. In a similar vein, the additional imposition of a chiral symmetry allows a ``semimetal gerbe'' to be associated to the family of Hamiltonians, see Section \ref{sec:semimetalgerbe}. There is a ``Berry connection 2-form'' with 3-form curvature, which computes the Dixmier--Douady invariants of the gerbe.

\subsection{Dual MV sequence and Fermi arcs}\label{sec:dualMV}
Fermi arcs are more naturally analyzed in homology, and for this reason, we Poincar\'{e}-dualise \eqref{MVsequence}
\begin{equation}
\cdots 0\rightarrow \underbrace{H_1(T)}_{\text{Euler structures}}\rightarrow \underbrace{H_1(T,W)}_{\text{Euler chains}}\overset{\partial}{\rightarrow} \underbrace{H_0(S_W)\cong H_0(W)}_{\text{local charges}}\rightarrow \underbrace{H_0(T)}_{\text{total charge}}\rightarrow 0\cdots,\label{dualMVsequence}
\end{equation}
with the physical meanings of the first two groups explained in Section \ref{sec:Euler}. Roughly speaking, the Poincar\'{e} dual of a semimetal invariant in $H^2(T\setminus W)$ is an Euler chain in the \emph{relative} homology group $H_1(T,W)$, whose boundary is the 0-chain of local charge data.

Equation \eqref{dualMVsequence} can also be exhibited directly as a Mayer--Vietoris sequence for \emph{Borel--Moore} homology $H^{\text{BM}}_*$ \cite{BM,Iversen}. The groups $H^{\text{BM}}_*(\cdot)$ may be defined for locally compact spaces $X$ using \emph{locally finite} chains. For compact spaces, $H^{\text{BM}}_*$ and the usual (singular) homology $H_*$ coincide. However, $H^{\text{BM}}_0(D_w)=0$ whereas $H_0(D_w)\cong\ZZ$, for $D_w$ an open ball containing $w$. There are isomorphisms $H_1(T,W)\cong H^{\text{BM}}_1(T\setminus W)$, $H_*(X)\cong H^{\text{BM}}_{*+1}(X\times\RR)$, Poincar\'{e} duality $H^{d-1}(T)\cong H^{\text{BM}}_1(T\setminus W)$, and restriction maps to $H^{\text{BM}}_*$ of open subsets, see Chapter 2.6 of \cite{Chriss}. The Poincar\'{e} dual of \eqref{MVsequence} written in terms of $H^{\text{BM}}_*$ is
\begin{equation}
\cdots 0\longrightarrow H^{\text{BM}}_1(T)\longrightarrow H^{\text{BM}}_1(T\setminus W) \longrightarrow H^{\text{BM}}_1(S_W\times (-\epsilon,\epsilon)) \longrightarrow H^{\text{BM}}_0(T)\longrightarrow 0\cdots,\label{dualMVsequenceBM}
\end{equation}
which is precisely the MV-sequence for $H^{\text{BM}}_*$ (IX.2.3 of \cite{Iversen}), with respect to the open cover $\{D_W, T\setminus W\}$ of $T$; c.f.\ the ``localization'' long exact sequence, IX.2.1 of \cite{Iversen} and 2.6.10 of \cite{Chriss}. The sequences \eqref{dualMVsequenceBM} and \eqref{dualMVsequence} are the same on account of $H^{\text{BM}}_1(S_W\times (-\epsilon,\epsilon))\cong  H_0(S_W)$. The language of Borel--Moore homology has certain advantages, but for this paper we stay in the more elementary setting of relative homology so as to avoid having to introduce too much technical machinery.

In \cite{MTFermi}, we explained how Fermi arcs represent \emph{relative homology} classes in $H_1(\widetilde{T},\widetilde{W})$, where $\pi:(T,W)\rightarrow(\widetilde{T},\widetilde{W})$ projects out one torus direction, so $\widetilde{T}$ is the surface Brillouin zone. The Euler chain representing the topological data of a semimetal is projected onto a surface Fermi arc, and this is the Poincar\'{e} dual picture of the bulk-boundary correspondence, see Section \ref{sec:BBC}.


\section{Euler chains and semimetal Hamiltonians}\label{sec:Euler}
Since we parametrise Dirac-type Bloch Hamiltonians by vector fields $\vect{h}$ on $T$, we would also like to analyse the semimetal MV-sequence \eqref{MVsequence} in terms $\vect{h}$ and its homotopies. In the insulating case, we might want to maintain the gap condition and so consider homotopies through non-singular (i.e.\ nowhere vanishing) vector fields. In the semimetal case, we want to at least maintain the gap condition everywhere except perhaps at a finite set of isolated Weyl points. We may require the set of Weyl points and their charges to be kept fixed, or we may allow them to move around and be created/annihilated in pairs. In Section \ref{sec:Eulerheuristic}, we explain why it is important to keep track of the ``history'' of the creation/annihilation of Weyl points as part of the topological data of a semimetal band structure. This motivates the mathematical notion of Euler structures, which we introduce in Section \ref{sec:smoothEuler}, as well as the dual notion of Euler chains for keeping track of Weyl point connectivity, introduced in Section \ref{sec:Eulerchain}. The Euler chain representation of a semimetal is then explained in Section \ref{sec:semimetalEulerchain}.

\subsection{Weyl pair creation and annihilation}\label{sec:Eulerheuristic}
Heuristically, a pair of Weyl points with opposite charges $+1$ and $-1$ may be created at a particular point in $T$ where a valence and conduction band cross. Then the Weyl points are moved apart in $T$ to form two separate band crossings, each of which is topologically protected and cannot be gapped out. In the reverse direction, if two separate Weyl points with opposite charges come together at a single point, their charges annihilate and a gap can be opened at that point. What is particularly interesting is that a $T$ with some non-trivial 1-cycle (such as $T=\TT^3$) admits a scenario in which a pair of of Weyl points is created at a single point, moved apart in opposite directions along the 1-cycle, and then annihilated when they meet again. This creation/annihilation process is \emph{global}, and the history traces out the 1-cycle (Fig.\ \ref{fig:Zero-creation}-\ref{fig:Zero-annihilation}). 

 \begin{figure}
 \centering
        \includegraphics[width=0.7\textwidth]{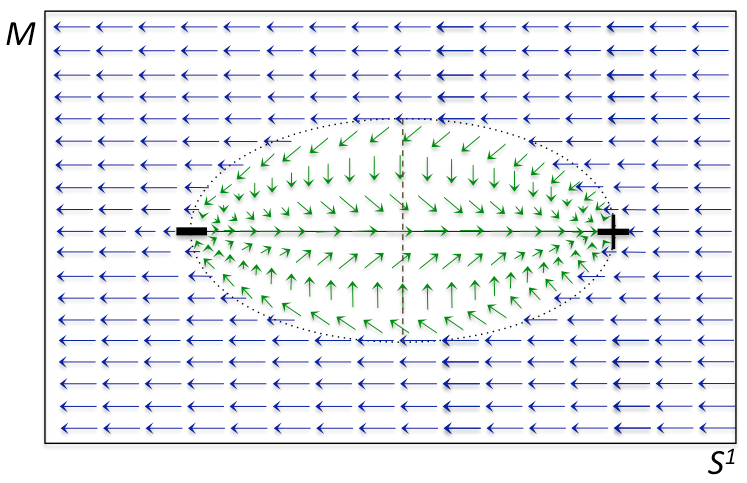}
        \caption{A typical way to locally introduce a pair of opposite-degree zeroes into a vector field. Here, the compact oriented $d$-manifold $T$ is $S^1\times M$, and we perturb an initial global non-vanishing vector field (blue) by modifying it within a ball containing the desired zeroes (black $\pm$ symbols). The region enclosed by the dotted line is a tubular neighbourhood to the open black line connecting $-$ to $+$, so that each cross-section (dashed line) is an open ball $D^{d-1}$ in $M$. For each cross-section, we assign tangent directions (green arrows, elements of $S^{d-1}$) to the closed ball $\overline{D}^{d-1}$ by the surjective map $\overline{D}^{d-1}\rightarrow S^{d-1}$ which collapses $\partial\overline{D}^{d-1}$ to the point corresponding to the blue arrow. On a small $d-1$-sphere $S^{d-1}_{\pm}$ surrounding $\pm$, the assignment of tangent directions is a degree $\pm 1$ map $S^{d-1}_{\pm}\rightarrow S^{d-1}$ as required. This is essentially the Pontryagin--Thom construction, see \cite{MilnorDT}.}\label{fig:Zero-creation}
 \end{figure}
 
 \begin{figure}
 \centering
        \includegraphics[width=0.7\textwidth]{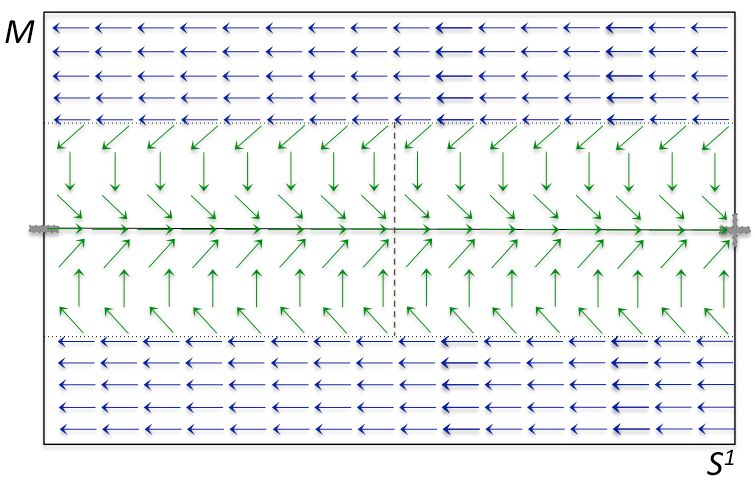}
        \caption{If the locally created zeroes of Fig.\ \ref{fig:Zero-creation} are moved apart in opposite directions along the cycle $S^1$, and then annihilated when they meet (grey $\pm$ symbols), the result is a global vector field modified (as in the green arrows) from the initial one (blue arrows) within a tubular neighbourhood $S^1\times D^{d-1}$ to the trajectory of the zeroes.}\label{fig:Zero-annihilation}
 \end{figure}

Even though both the initial and final band structures have no band crossings, they can generally become topologically inequivalent during this process. For example, take $T=\TT^3$ (so $M=\TT^2$) in Fig.\ \ref{fig:Zero-creation}-\ref{fig:Zero-annihilation}. The initial constant vector field $\vect{h}$ over $\TT^3$ corresponds to an initial gapped Hamiltonian $H$ with trivial valence band $\hat{\vect{h}}^*(\cL)$, where $\cL$ is the Hopf line bundle over $S^2\cong\CC\PP^1$ as in Sec.\ \ref{sec:extension}. On the other hand, the final vector field $\vect{h}'$ is such that the unit vector map $\hat{\vect{h}'}:\TT^3\rightarrow S^2$ restricts to a degree-1 map $\TT^2\rightarrow S^2$ on each cross-section $\TT^2$ to the Weyl points' trajectory $S^1$. Thus the valence band of the final gapped Hamiltonian $H'$, which is $\hat{\vect{h}'}^*(\cL)$, ``acquires'' a weak Chern class supported in the $\TT^2$ directions (which is Poincar\'{e} dual to $S^1$). We learn that a local Weyl point creation-annihilation process does not change the weak invariants of a Hamiltonian, but a global one can. This motivates, and provides a physical interpretation of, a certain notion of equivalence of vector fields (thus of Hamiltonians) defined formally in the next subsection. 

\subsection{Euler structures}\label{sec:smoothEuler}

{\bf Smooth Euler structures.} Let $T$ be a compact oriented $d$-manifold with $d\geq 2$, and suppose $T$ has Euler characteristic $\chi(T)=0$ (this is automatic in $d=3$) so that it has a global non-vanishing vector field. 
\begin{definition}[\cite{TuraevEuler}]\label{defn:smootheuler}
Two non-singular smooth vector fields $\vect{h}, \vect{h}'$ on $T$ are said to be \emph{homotopic} (also called \emph{homologous}\footnote{The stronger notion of homotopy on $T$ through non-singular vector fields can also be analysed \cite{TuraevEuler}, but we do not consider this in this paper.} in \cite{TuraevEuler}) if for some open ball $D$, the fields $\vect{h}, \vect{h}'$ are homotopic on $T\setminus D$ in the class of non-singular vector fields on $T\setminus D$. The set $\mathfrak{vect}(T)$ of homotopy classes of non-singular vector fields is called the set of \emph{smooth Euler structures} on $T$.
\end{definition}
As explained in Section 5 of \cite{TuraevEuler}, the first obstruction to such a homotopy between $\vect{h},\vect{h}'$ is an element (written as $\vect{h}/\vect{h}'$) of $H^{d-1}(T,\pi_{d-1}(S^{d-1}))\cong H^{d-1}(T,\ZZ)$, canonically isomorphic to $H_1(T,\ZZ)$ by Poincar\'{e} duality. It turns out that there is a natural free and transitive action of the group of obstructions $H^{d-1}(T,\ZZ)\cong H_1(T,\ZZ)$ on $\mathfrak{vect}(T)$. 

\begin{remark}
If we pick a reference nonsingular $\vect{h}_{\text{ref}}$ as the zero, then we can identify $\mathfrak{vect}(T)$ with $H^{d-1}(T,\ZZ)$. For $T=\TT^d$, a natural choice for $\vect{h}_{\text{ref}}$ is a constant length vector field pointing along a torus cycle (this gives a trivial insulating phase); reference to such a $\vect{h}_{\text{ref}}$ will be implicit in this case. We remark that the space of spin${}^c$ structures for $T$ is a $H^2(T,\ZZ)$-torsor, and that when $d=3$, a non-singular $\vect{h}_{\text{ref}}$ determines a spin${}^c$ structure through the unit vector field $\hat{\vect{h}}_{\text{ref}}$ \cite{TuraevSpinc}, and vice-versa. 
\end{remark}

\subsubsection{Another picture of Euler structures}\label{sec:coEuler}
An equivalent definition of Euler structures, which is useful for our generalisation to Kervaire structures in Section \ref{sec:Kervairestructures}, was given in \cite{TuraevSpinc} as follows. 
\begin{definition}\label{defn:coEuler}
Let $T$ be a compact oriented $d$-manifold with $\chi(T)=0$, and let $\cS \equiv\cS(\cE)$ be the unit sphere bundle for the tangent bundle $\cE\rightarrow T$ (with a Riemannian metric). Define the set of \emph{Euler structures} to be the subset  $\widetilde{\coeul}(T)\subset H^{d-1}(\cS ,\ZZ)$ for which the restriction to each (oriented) fibre $\cS _k\cong S^{d-1}$ generates $H^{d-1}(S^{d-1},\ZZ)$. 
\end{definition}
Then there is a free and transitive action of $H_1(T,\ZZ)\cong H^{d-1}(T,\ZZ)$ on $\wt{\coeul}(T)$ under pullback to $\cS $ and addition in $H^{d-1}(\cS ,\ZZ)$. To pass from the first picture of smooth Euler structures as $\mathfrak{vect}(T)$ to this second picture, notice that the unit vector field $\hat{\vect{h}}$ for a non-singular $\vect{h}$ is a map $T\rightarrow \cS$ and so a $d$-cycle in $\cS $. Orient $\cS $ by requiring the intersection number of every $\cS_k$ with this cycle to be $+1$. Then the $d$-cycle $\hat{\vect{h}}$ Poincar\'{e} dualises to a $(d-1)$-cocycle which we write as $\text{PD}_{\vect{h}}$, and the latter represents an Euler structure in $\wt{\coeul}(T)\subset H^{d-1}(\cS,\ZZ)$ in the second picture. To emphasize the cohomology definition of Euler structure, we will sometimes use the term \emph{co-Euler structure}, to distinguish it from a homological definition to be given in Section \ref{sec:Eulerchain}.

As with $\mathfrak{vect}(T)$, there is an identification of the affine space $\widetilde{\coeul}(T)$ with $H^{d-1}(T,\ZZ)$ upon fixing a reference non-singular $\vect{h}_{\text{ref}}$. Writing $p:\cS \rightarrow T$ for the projection, we note that $\hat{\vect{h}}_{\text{ref}}^*\circ p^*=\text{id}$ in the Gysin sequence
\begin{equation}
	0\longrightarrow H^{d-1}(T,\ZZ)\overset{p^*}{\underset{\hat{\vect{h}}_{\text{ref}}^*}{\rightleftarrows}}H^{d-1}(\cS ,\ZZ)\overset{p_*}{\longrightarrow}H^0(T,\ZZ)\longrightarrow 0,\label{gysin}
\end{equation}
with $p_*$ the pushforward, or integration over the $S^{d-1}$ fibers.

The subset $\wt{\coeul}(T)\subset H^{d-1}(\cS ,\ZZ)$ comprises $[\text{PD}_{\vect{h}_{\text{ref}}}]$ and all its translates by $p^*(H^{d-1}(T,\ZZ))$, so $\hat{\vect{h}}_{\text{ref}}^*$ gives a bijection from $\wt{\coeul}(T)\rightarrow H^{d-1}(T,\ZZ)$ taking $[\text{PD}_{\vect{h}_{\text{ref}}}]$ to the identity element.

\subsection{Euler chains and singular vector fields}\label{sec:Eulerchain}
There is a dual picture of Euler structures involving vector fields with finite singularities, which is closely related to the intuition behind Fermi arcs. We sketch the construction here for the case of non-degenerate zeroes, see \cite{Hutchings,Burg,Molina} for the general case.

For a vector field $\vect{h}$ on $T$ with finite singularity set $W_{\vect{h}}$, let $\cW_{\vect{h}}$ denote the singular 0-chain $\cW_{\vect{h}}\coloneqq \sum_{w\in W_{\vect{h}}} \text{Ind}_w(\vect{h}) w\in C_0(T,\ZZ)$ which encodes the local charge information. We can also think of $\cW_{\vect{h}}$ as an element of $H_0(W_{\vect{h}},\ZZ)$ whose weights sum to zero. 
\begin{definition}[\cite{Hutchings}]
Let $T$ be a compact $d$-manifold with $d\geq 2$ and $\chi(T)=0$. An \emph{Euler chain} for a 0-chain $\cW$ in $T$ whose weights sum to zero, is a 1-chain $l\in C_1(T,\ZZ)$ such that $\partial l$ is $\cW$. An Euler chain $l$ defines a relative homology class in $H_1(T,W,\ZZ)$ where $W$ is the set of points that $\cW$ is defined on\footnote{$\cW$ is allowed to be zero on points in $W$.}. Let $\eul(T,\cW)\subset H_1(T,W,\ZZ)$ denote the subset of relative homology classes of 1-chains in $T$ whose boundary is $\cW$. 
\end{definition}
From the exact sequence
\begin{equation}
0=H_1(W,\ZZ)\rightarrow H_1(T,\ZZ)\rightarrow H_1(T,W,\ZZ)\overset{\partial}{\rightarrow} H_0(W,\ZZ)\overset{\Sigma}{\rightarrow} H_0(T,\ZZ)\rightarrow H_0(T,W,\ZZ)=0,\nonumber
\end{equation}
we see that $\eul(T,\cW)$ is a coset of $H_1(T,\ZZ)$ in $H_1(T,W,\ZZ)$ and so an affine space for $H_1(T,\ZZ)$. In particular, the difference of $[l],[l']\in\eul(T,\cW)$ is an element of $H_1(T,\ZZ)$. Note that any 0-chain $\cW$ with total weight zero can be realised as the 0-chain of charges $\cW_{\vect{h}}$ for some singular vector field $\vect{h}$, by the (converse of the) Poincar\'{e}--Hopf theorem. Euler chains can be thought of as extra global data encoding how the local charge configuration for a singular vector field is ``connected''. As explained in Section \ref{sec:semimetalEulerchain}, Euler chains are in a precise sense Poincar\'{e} dual to cohomological semimetal invariants. 

{\bf Non-degenerate homotopies.} Let $\vect{h}, \vect{h}'$ be two smooth vector fields, then there is a canonical identification of $\eul(T,\cW_{\vect{h}})$ and $\eul(T,\cW_{\vect{h}'})$ as affine spaces, using a notion of \emph{non-degenerate homotopy} defined as follows. Let $p^*\cE$ be the pullback of the tangent bundle $\cE\rightarrow T$ under the projection $p:[0,1]\times T\rightarrow T$. A \emph{non-degenerate homotopy} between $\vect{h}$ and $\vect{h}'$ is a section $\breve{\vect{h}}$ of $p^*\cE$ transverse to the zero section, which restricts to $\vect{h}$ at $t=0$ and $\vect{h}'$ at $t=1$. Such a homotopy exists by perturbing, for instance, a linear homotopy, and allows for the movement $t\mapsto W_{\vect{h}_t}$ of the local charges $W_{\vect{h}_t}\coloneqq\vect{h}_t^{-1}\{0\}$ including the creation and annihilation of pairs of zeros with equal and opposite charges (Fig.\ \ref{fig:Euler-chain}). For generic $t$, the intermediate vector field $\vect{h}_t$ intersects the zero vector field transversally (e.g.\ transversality of $\vect{h}_t$ fails when a pair of Weyl points are created or annihilated). The zero set $\breve{\cW}$ of $\breve{\vect{h}}$ is a canonically oriented 1-submanifold with boundary $\cW_{\vect{h}'}^{(1)}-\cW_{\vect{h}}^{(0)}$, where the superscript indicates whether the 0-chain lies on $\{0\}\times T$ or on $\{1\}\times T$. Let $[l]\in \eul(T,\cW_{\vect{h}})$, then $\breve{\cW}+l$ is a 1-cycle on $[0,1]\times T$ relative to $\{1\}\times T$, since $\partial(\breve{\cW}+l)=\cW_{\vect{h}'}^{(1)}$. Because $H_1([0,1]\times T, \{1\}\times T)=0$, there is a 2-chain $P$ on $[0,1]\times T$ whose boundary is $\breve{\cW}+l$ relative to $\{1\}\times T$. Taking $l'\coloneqq \breve{\cW}+l-\partial P$, we see that $l'$ is a 1-chain on $\{1\}\times T$ with $\partial l'=\cW_{\vect{h}'}^{(1)}$, which defines a class $[l']\in \eul(T,\cW_{\vect{h}'})$. Intuitively, we can think a homotopy $\breve{\vect{h}}$ as providing the data of how an initial $l$ is to be ``carried along $\breve{\cW}$'' onto a final $l'$.

 \begin{figure}
 \centering
        \includegraphics[width=0.7\textwidth]{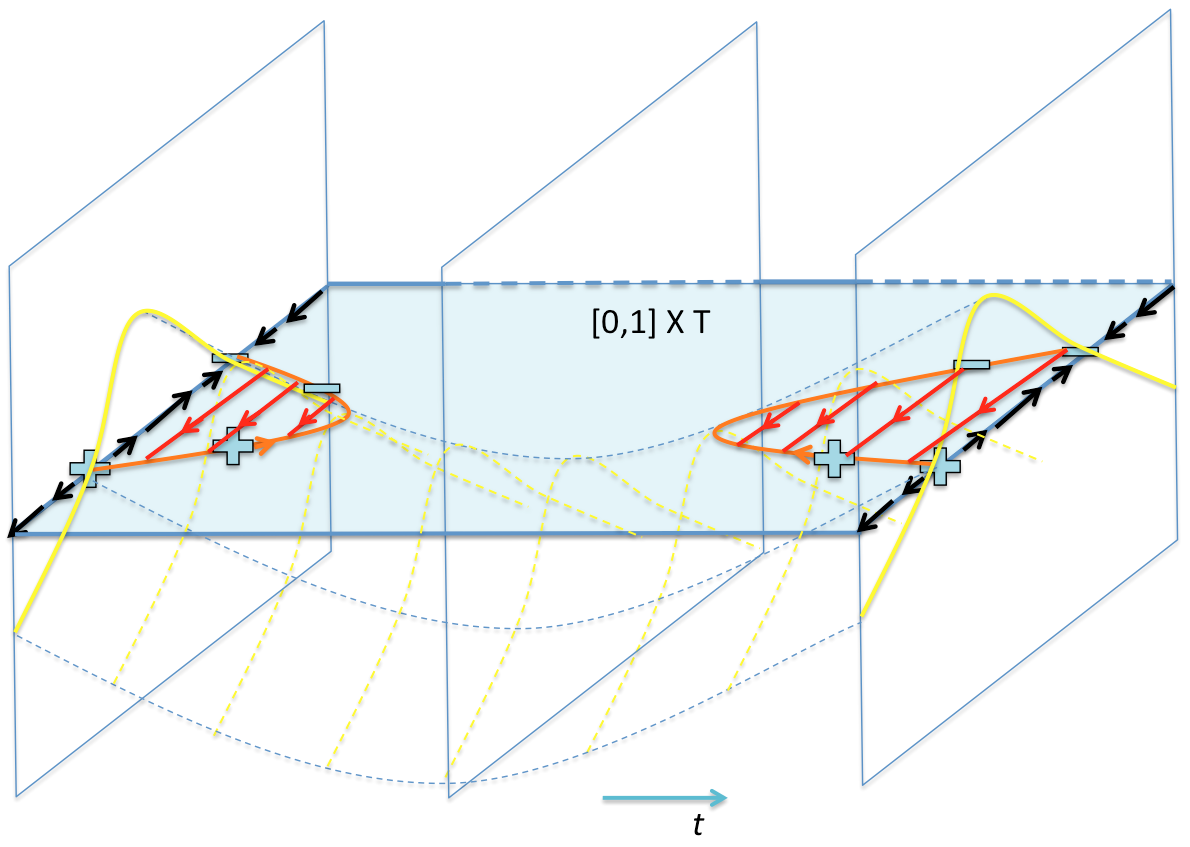}
        \caption{The homotopy parameter $t\in [0,1]$ goes horizontally, while all coordinates for the manifold $T$ are suppressed except one cyclic $S^1$-direction. The height function at each $t$ (yellow curve) indicates the length of a tangent vector on $\{t\}\times T$ in the aforementioned direction. Black arrows represent one component of the initial (left) vector field $\vect{h}$ and final (right) one $\vect{h}'$, which are taken to be equal in this example. The zero-section of the tangent bundle of $[0,1]\times T$ is the blue shaded area. A non-degenerate homotopy $\breve{\vect{h}}$ is shown, with the intermediate vector fields $\vect{h}_t$ (indicated by yellow dashed curves) generically having isolated zeroes with charges $-, +$. The orange oriented submanifold $\breve{\cW}$ comprises the singularities of $\breve{\vect{h}}$. The Euler chains for the $\vect{h}_t$ (red lines joining $-$ to $+$) are ``carried along'' $\breve{\cW}$, disappear when the Weyl points coalesce, then reappear when a pair of Weyl points are created. Under this homotopy, the homology class of the final Euler chain (for $\vect{h}'$) remains the same as that of the initial Euler chain (for $\vect{h}$), consistent with what happens if the obvious constant homotopy is chosen instead.}\label{fig:Euler-chain}
 \end{figure}

 \begin{figure}
 \centering
        \includegraphics[width=0.7\textwidth]{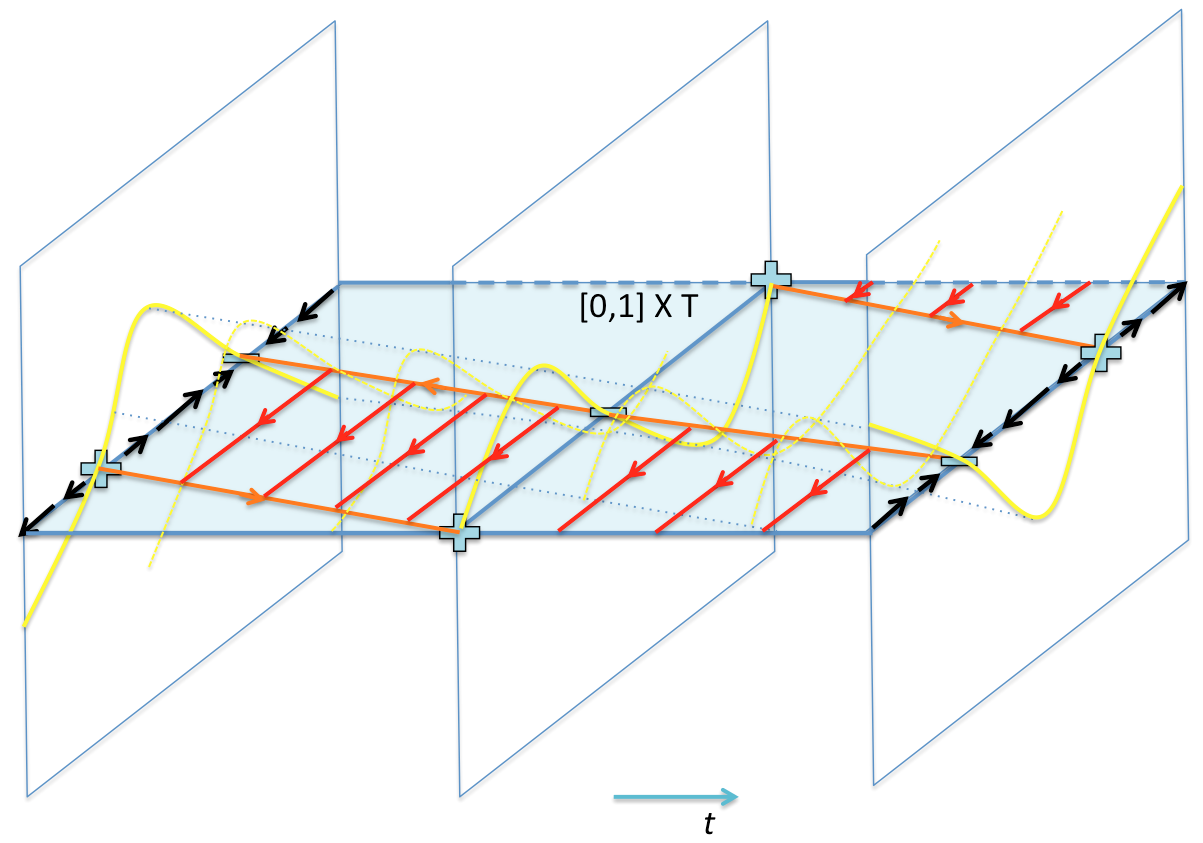}
        \caption{A homotopy $\breve{\vect{h}}$ which translates the initial vector field $\vect{h}$ by half a $S^1$-cycle, showing how the Euler chain is carried along the zero set (orange) of $\breve{\vect{h}}$. Rotation by a full cycle will return the Euler chain to its original position, with the same homology class.}\label{fig:Euler-chain2}
 \end{figure}

 \begin{figure}
 \centering
        \includegraphics[width=0.7\textwidth]{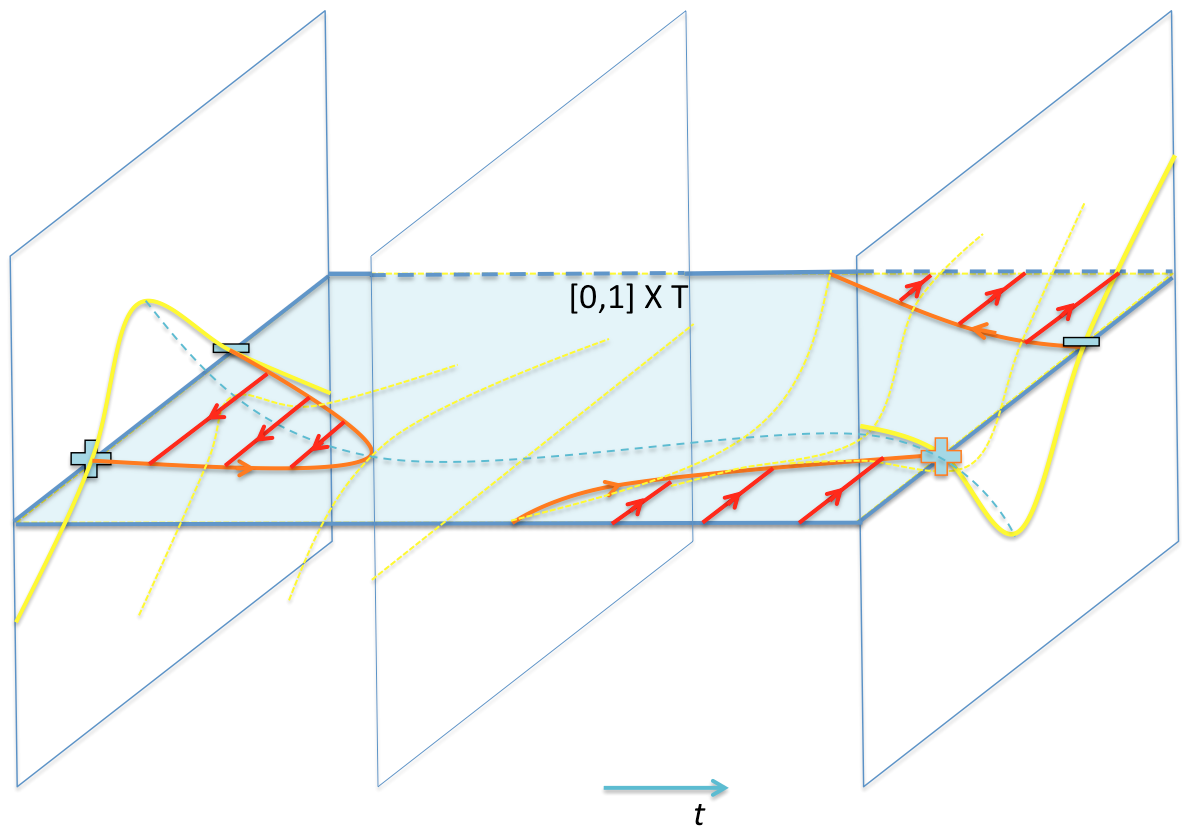}
        \caption{A homotopy $\breve{\vect{h}}$ which changes the homology class of an Euler chain. Note that there is no open ball of $T$ within which the singular set of $\vect{h}_t$ stays for all $t\in[0,1]$. Rather, such an ``enveloping open set'' contains an $S^1$ cycle.}\label{fig:Euler-chain3}
 \end{figure}

One shows \cite{Hutchings} that the above assignment $[l]\mapsto[l']$ is independent of the choices of homotopy and of $\Sigma$, descending to an affine map $\phi_{\vect{h}',\vect{h}}:\eul(T,\cW_{\vect{h}})\rightarrow \eul(T,\cW_{\vect{h}'})$. Furthermore, one has the properties $\phi_{\vect{h}'',\vect{h}}=\phi_{\vect{h}'',\vect{h}'}\circ\phi_{\vect{h}',\vect{h}}$, and $\phi_{\vect{h},\vect{h}}=\text{id}$, so each $\eul(T,\cW_{\vect{h}})$ is canonically isomorphic to a single affine space (which for oriented $T$ can be taken to be the space of smooth Euler structures $\mathfrak{vect}(T)$ in Definition \ref{defn:smootheuler}) over $H_1(T,\ZZ)$ \cite{Hutchings}. In particular, when a non-singular $\vect{h}_{\text{ref}}$ in $\mathfrak{vect}(T)$ has been chosen, a possibly singular $\vect{h}$ has an associated Euler chain $l$.

\begin{example}
If there is a $\breve{\vect{h}}$ such that the local charges $\cW_t$ are simply moved around smoothly and disjointly without creation/annihilation, i.e.\ there is a smooth 1-parameter family of diffeomorphisms $\varphi_t$ of $T$ taking $\cW_0$ to $\cW_t$, $t\in[0,1]$, then the zero set $\breve{W}$ of $\breve{\vect{h}}$ is simply a set of disjoint lines representing the trajectories of the local charges. An initial Euler chain $l$ is ``carried along $\breve{W}$'' to the final Euler chain $l'$ (Fig.\ \ref{fig:Euler-chain2}).
\end{example}

Suppose $\vect{h}, \vect{h}'$ have the same 0-chains of local charges, $\cW=\cW_{\vect{h}}=\cW_{\vect{h}'}$. Then the endomorphism $\phi_{\vect{h}',\vect{h}}:\eul(T,\cW)\rightarrow \eul(T,\cW)$ can be thought of as the homological change in the Euler chains, $[l]\mapsto [l']$, under a non-degenerate homotopy $\breve{\vect{h}}$ taking $\vect{h}$ to $\vect{h}'$. Specifically, $l'-l=\breve{\cW}-\partial P$ can be pushed forward under the projection $p$ to a 1-chain on $T$, also written $l'-l$, which is actually a cycle. Then $[l']-[l]\in H_1(T,\ZZ)$ is a ``homological difference'' between $\vect{h}$ and $\vect{h}'$; this difference was called a Chern--Simons class $\text{cs}(\vect{h},\vect{h}')$ in \cite{Burg,Molina}. For $\vect{h},\vect{h'}$ having different local charges, $\text{cs}(\vect{h},\vect{h}')$ is defined similarly, as a 1-chain modulo boundaries .

\begin{example}
Suppose further that there is a non-degenerate homotopy $\breve{\vect{h}}$ which stays non-singular outside some open ball in $T$ (necessarily containing $W_{\vect{h}}=W_{\vect{h}'}$). Then an initial Euler chain $l$ does not pick up any nonzero element of $H_1(T,\ZZ)$, so $[l']=[l]$ (Fig.\ \ref{fig:Euler-chain}, \ref{fig:Euler-chain3}). 
\end{example}

\subsection{Euler chain representation of a semimetal}\label{sec:semimetalEulerchain}
The upshot of introducing Euler chains is that we can use them to represent globally the topological data of a semimetal band structure, whenever vector fields are used to parameterise semimetal Hamiltonians. This is certainly the case for $2\times 2$ Hamiltonians in 3D, and more generally for Dirac-type Hamiltonians in higher dimensions. The singular points $W_{\vect{h}}$ of $\vect{h}$ are the band crossings of $H$, and the local charge information is contained in the 0-chain $\cW_{\vect{h}}$. Each $\vect{h}$ can be associated with an Euler chain class in $H_1(T,W_{\vect{h}},\ZZ)$ by the prescription of the previous subsection, and vector fields with the same local charges but homologically different Euler chains cannot he homotoped whilst keeping the singular set within an open ball (in particular, while keeping the singular set fixed). When we generalise to Kervaire chains for bilinear Hamiltonians later, it will be convenient to use an alternative prescription to associate an Euler (or Kervaire) chain to a semimetal Hamiltonian. 

This alternative prescription uses a definition of Euler structures which follows that given in Section \ref{sec:coEuler}, but which is defined in terms of \emph{singular} vector fields. Let $\cS$ be the sphere bundle of the tangent bundle of $T$, and $\cS|_{T\setminus W}, \cS|_{D_W}=S^{d-1}\times D_W, \cS|_{S_W}=S^{d-1}\times S_W$ be the restrictions of $\cS$ to the subspaces of $T$ appearing in our MV-sequence, and let $p$ denote the various bundle projections. All these sphere bundles have vanishing Euler class. There is also an MV-sequence for the cover $\{\cS|_{T\setminus W},\cS|_{D_W}\}$ of $\cS$, which has intersection $\cS|_{S_W}$, and we write $\widetilde{\Sigma}$ for its connecting maps. Let $\vect{h}_{\text{ref}}$ be a non-singular vector field on $T$, then $\hat{\vect{h}}_{\text{ref}}^*\circ p^*=\text{id}$.

Consider the semimetal MV-sequence and its dual homology sequence, where we have suppressed the integer coefficients,
\begin{equation*}
    \xymatrix{ 0\ar[r]&H^{d-1}(T)\ar[d]_{\text{PD}}\ar[r]^{\iota^*} & H^{d-1}(T\setminus W)\ar[d]_{\text{PD}}\ar[r]^{\beta} & H^{d-1}(S_W)\ar[d]_{\text{PD}}\ar[r]^{\Sigma} & H^d(T)\ar[d]_{\text{PD}}\ar[r] & 0\\
   0\ar[r] & H_1(T)\ar[r]& H_1(T,W)\ar[r]^{\partial} &  H_0(W) \ar[r]^{\Sigma} & H_0(T)\ar[r] & 0
    }.
\end{equation*}
The local charges can be thought of as a 0-chain $\cW$ on $W\subset T$, or dually as an element $\text{PD}(\cW)$ in $H^{d-1}(S_W)$; in either case they are required to have total charge zero. 
\begin{definition}
For a 0-chain $\cW$ of local charges on $T$, with total charge 0 and defined on $W\subset T$, we write $\coeul(T,\cW)$ for the image of $\eul(T,\cW)$ in $H^{d-1}(T\setminus W)$ under Poincar\'{e} duality; equivalently this is the inverse image of $\text{PD}(\cW)$ under $\beta$.
\end{definition}

We combine the cohomology Mayer--Vietoris (horizontal) and Gysin sequences (vertical) for $T$ and $\cS$,
\begin{equation}
    \xymatrix{     & 0\ar[d]& 0\ar[d] & 0\ar[d] &  \\
     0\ar[r] &H^{d-1}(T)\ar[d]_{p^*}\ar[r]^{\iota^*} &  H^{d-1}(T\setminus W)\ar[d]_{p^*}\ar[r]^{\beta} & H^{d-1}(S_W)\ar[d]_{p^*}\ar[r]^{\quad\Sigma}&  \cdots\\
       0\ar[r] & H^{d-1}(\cS) \ar[r]\ar[d]_{p_*} & H^{d-1}(\cS|_{T\setminus W})\oplus H^{d-1}(\cS|_{D_W}) \ar[r]\ar[d]_{p_*} &  H^{d-1}(\cS|_{S_W})
        \ar[r]^{\quad\widetilde{\Sigma}}\ar[d]_{p_*} &   \cdots\\
   0\ar[r] & H^0(T)\ar[r]\ar[d]& H^0(T\setminus W)\oplus H^0(D_W)\ar[r]\ar[d] &  H^0(S_W) \ar[r]\ar[d] & \cdots\\
    & 0& 0 &  0 & 
    },\label{MVGysin}
\end{equation}
where we have augmented the middle Gysin sequence by a $H^{d-1}(\cS|_{D_W})\xrightarrow{p_*} H^0(D_W)$ piece needed to make the bottom two horizontal MV sequences exact. The horizontal maps are restriction maps (or differences of) and so it is easy to see that the diagram commutes. The middle MV-sequence is, by the K\"{u}nneth theorem, 
\begin{equation}
0\rightarrow H^{d-1}(\cS) \rightarrow H^{d-1}(\cS|_{T\setminus W})\oplus H^{d-1}(\cS|_{D_W}) \rightarrow  H^{d-1}(S_W)\oplus H^0(S_W)
        \xrightarrow{\widetilde{\Sigma}=\widetilde{\Sigma}_1\oplus\widetilde{\Sigma}_2} \cdots\nonumber
\end{equation}
and so the restriction map for the $H^{d-1}(\cS|_{T\setminus W})$ factor has a ``local charge'' component $\tilde{\beta}:H^{d-1}(\cS|_{T\setminus W})\rightarrow H^{d-1}(S_W)$ and a second component which is less important for us. Furthermore the restriction map for the $H^{d-1}(\cS|_{D_W})\cong H^0(D_W)$ factor only lands in the  $H^0(S_W)$ factor. The relevant part of \eqref{MVGysin} is then the commuting diagram of short exact sequences
\begin{equation}
    \xymatrix{     & 0\ar[d]& 0\ar[d] & 0\ar[d] &  \\
     0\ar[r] &H^{d-1}(T)\ar[d]_{p^*}\ar[r]^{\iota^*\;\;} &  H^{d-1}(T\setminus W)\ar[d]_{p^*}\ar[r]^{\beta\;\;} & \text{ker}_\Sigma(H^{d-1}(S_W))\ar[d]_{p^*}\ar[r]&  0\\
       0\ar[r] & H^{d-1}(\cS) \ar[r]^{\iota^*\;\;}\ar[d]_{p_*}\ar[u]_{\hat{\vect{h}}_{\text{ref}}^*} & H^{d-1}(\cS|_{T\setminus W}) \ar[r]^{\tilde{\beta}\;\;}\ar[d]_{p_*}\ar[u]_{\hat{\vect{h}}_{\text{ref}}^*}  &  \text{ker}_{\widetilde{\Sigma}_1}(H^{d-1}(S_W))
        \ar[r]\ar[d] & 0\\
   0\ar[r] & H^0(T)\ar[r]^{\iota^*\;\;}\ar[d]& H^0(T\setminus W)\ar[r]\ar[d] &  0 & \\
    & 0& 0 &   & 
    }.\label{MVGysin2}
\end{equation}

\begin{definition}
Let $\cW$ be a 0-chain of local charges defined on the finite subset $W\subset T$ with total weight zero, and $\text{PD}(\cW)$ its Poincar\'{e} dual. We define $\widetilde{\coeul}(T,\cW)$ to be the set of $u\in H^{d-1}(\cS|_{T\setminus W},\ZZ)$ such that $p_*u=1$ and $\tilde{\beta}u=\text{PD}(\cW)$.
\end{definition}

\begin{proposition}
There is an identification of $\widetilde{\coeul}(T,\cW)$ and $\coeul(T,\cW)$ as affine spaces for $H^{d-1}(T,\ZZ)$.
\end{proposition}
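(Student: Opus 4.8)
The plan is to read the identification straight off the large commuting diagram \eqref{MVGysin2} of short exact rows and columns, by exploiting the splitting $\hat{\vect{h}}_{\text{ref}}^*$ of the Gysin sequences. Conceptually, $\hat{\vect{h}}_{\text{ref}}^*$ decomposes $H^{d-1}(\cS|_{T\setminus W})=p^*H^{d-1}(T\setminus W)\oplus\ZZ\theta$ for a canonical fibre class $\theta$, compatibly with all the Mayer--Vietoris restriction maps, and under this decomposition $\widetilde{\coeul}(T,\cW)$ becomes exactly the graph of $\coeul(T,\cW)$; the isomorphism then simply reads off the first coordinate. I would carry this out in four steps: extract the needed exactness/splitting facts from \eqref{MVGysin2}; construct $\theta$; define mutually inverse maps $\Psi,\Phi$ and verify well-definedness; and check the affine ($H^{d-1}(T,\ZZ)$-equivariance) property.

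From \eqref{MVGysin2} I would use: the top and middle rows are short exact; the left and middle columns are the Gysin sequences split by $\hat{\vect{h}}_{\text{ref}}^*$, so $\hat{\vect{h}}_{\text{ref}}^*\circ p^*=\text{id}$, $p_*\circ p^*=0$, and $p_*$ maps any complement of $p^*H^{d-1}(T\setminus W)$ isomorphically onto $H^0(T\setminus W)\cong\ZZ$ (using that $T\setminus W$ is connected, since $W$ is finite and $\dim T\ge 3$); and, under the K\"unneth identification of the fibrewise-$H^{d-1}$ summand of $H^{d-1}(\cS|_{S_W})$ with $H^{d-1}(S_W)$, the right-hand column identifies $\ker_\Sigma$ with $\ker_{\widetilde{\Sigma}_1}$, so that $\tilde{\beta}\circ p^*=\beta$ and, by exactness of the middle row, $\tilde{\beta}\circ\iota^*=0$. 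Also $\coeul(T,\cW)=\beta^{-1}(\text{PD}(\cW))$ is non-empty because $\text{PD}(\cW)$ lies in $\ker_\Sigma$, the image of $\beta$, by the total-charge-zero hypothesis on $\cW$. For $\theta$ I would take the restriction $\theta:=\iota^*[\text{PD}_{\vect{h}_{\text{ref}}}]\in H^{d-1}(\cS|_{T\setminus W})$ of the class representing the reference co-Euler structure (Section~\ref{sec:coEuler}); then $p_*\theta=1$, and $\hat{\vect{h}}_{\text{ref}}^*\theta=\iota^*_T\hat{\vect{h}}_{\text{ref}}^*[\text{PD}_{\vect{h}_{\text{ref}}}]=0$ by naturality of restriction together with the normalisation $\hat{\vect{h}}_{\text{ref}}^*[\text{PD}_{\vect{h}_{\text{ref}}}]=0$, and $\tilde{\beta}\theta=\tilde{\beta}\iota^*[\text{PD}_{\vect{h}_{\text{ref}}}]=0$. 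Since $p_*$ restricts to an isomorphism on $\ker(\hat{\vect{h}}_{\text{ref}}^*)$ and $p_*\theta=1$, this forces $\ker(\hat{\vect{h}}_{\text{ref}}^*)=\ZZ\theta\subseteq\ker(\tilde{\beta})$.

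I would then define $\Psi\colon\coeul(T,\cW)\to\widetilde{\coeul}(T,\cW)$ by $v\mapsto p^*v+\theta$, and $\Phi\colon\widetilde{\coeul}(T,\cW)\to\coeul(T,\cW)$ by $u\mapsto\hat{\vect{h}}_{\text{ref}}^*u$, and check they are mutually inverse affine isomorphisms. That $\Psi$ lands in $\widetilde{\coeul}(T,\cW)$ follows from $p_*(p^*v+\theta)=1$ and $\tilde{\beta}(p^*v+\theta)=\beta v=\text{PD}(\cW)$ (as $v\in\coeul(T,\cW)$). For $\Phi$: given $u$ with $p_*u=1$, the difference $u-p^*\hat{\vect{h}}_{\text{ref}}^*u$ lies in $\ker(\hat{\vect{h}}_{\text{ref}}^*)=\ZZ\theta$, and applying $p_*$ pins its coefficient to $1$, so $u=p^*\hat{\vect{h}}_{\text{ref}}^*u+\theta$; hence if moreover $\tilde{\beta}u=\text{PD}(\cW)$ then $\beta\hat{\vect{h}}_{\text{ref}}^*u=\tilde{\beta}u=\text{PD}(\cW)$, i.e.\ $\hat{\vect{h}}_{\text{ref}}^*u\in\coeul(T,\cW)$. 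The identities $u=p^*\hat{\vect{h}}_{\text{ref}}^*u+\theta$ and $\hat{\vect{h}}_{\text{ref}}^*(p^*v+\theta)=v$ then give $\Phi\Psi=\text{id}$ and $\Psi\Phi=\text{id}$. For affineness, the $H^{d-1}(T,\ZZ)$-action on $\coeul(T,\cW)$ is translation by $\iota^*_T H^{d-1}(T)$, while on $\widetilde{\coeul}(T,\cW)$ it is translation by $\ker(p_*)\cap\ker(\tilde{\beta})$, which by exactness of the middle column and top row equals $p^*\iota^*_T H^{d-1}(T)$; since $\Psi(v+\iota^*_T a)=\Psi(v)+p^*\iota^*_T a$, the bijection $\Psi$ intertwines these actions and is therefore an affine isomorphism. (It depends on the chosen $\vect{h}_{\text{ref}}$, exactly as do the identifications of each side with the group $H^{d-1}(T,\ZZ)$ itself.)

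The only step that needs genuine care --- more a bookkeeping point than a conceptual obstacle --- is the right-hand column of \eqref{MVGysin2}: one must check that $\tilde{\beta}$ really takes values in, and maps onto, the same subgroup $\ker_\Sigma\subseteq H^{d-1}(S_W)$ that contains $\text{PD}(\cW)$, so that the conditions ``$\tilde{\beta}u=\text{PD}(\cW)$'' and ``$\beta v=\text{PD}(\cW)$'' are comparisons in one and the same group and the relation $\tilde{\beta}\circ p^*=\beta$ holds literally. This is precisely the content of the K\"unneth splitting of $H^{d-1}(\cS|_{S_W})=H^{d-1}(S^{d-1}\times S_W)$ combined with naturality of the Mayer--Vietoris sequences under the bundle projection $p$, both of which are already invoked in deriving \eqref{MVGysin2}; granting them, everything above is a routine diagram chase.
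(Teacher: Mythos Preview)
Your proof is correct and follows essentially the same route as the paper: both arguments diagram-chase through \eqref{MVGysin2} to decompose an element $u\in\widetilde{\coeul}(T,\cW)$ as $p^*v+\iota^*s$ with $v\in\coeul(T,\cW)$ and $s\in\widetilde{\coeul}(T)$, and then use $\hat{\vect{h}}_{\text{ref}}^*$ as the equivariant identification. Your version is somewhat more explicit---you fix the specific choice $s=[\text{PD}_{\vect{h}_{\text{ref}}}]$, set $\theta=\iota^*s$, and build the concrete inverse $\Psi(v)=p^*v+\theta$, whereas the paper leaves $s$ unspecified and argues transitivity more abstractly---but this is a difference in presentation rather than strategy.
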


\begin{proof}
By chasing through the diagram \eqref{MVGysin2}, we see that an element $u\in \widetilde{\coeul}(T,\cW)$ is of the form $p^*v+\iota^*s$ for some $v\in H^{d-1}(T\setminus W)$ with $\beta(v)=\text{PD}(\cW)$ and some $s\in H^{d-1}(\cS)$ with $p_*(s)=1$, i.e.\ $s\in\coeul(T,\cW)$ and $v\in\coeul(T)$. Clearly, the action of $H^{d-1}(T)$ on $\widetilde{\coeul}(T,\cW)$ by pullback $(p\circ\iota)^*$ and addition is free, and is also transitive since the difference of $u_1,u_2\in\widetilde{\coeul}(T,\cW)$ is the sum of $\iota^*(s_2-s_1)$ and $p^*(v_2-v_1)$, which is something in $(p\circ\iota)^*(H^{d-1}(T))$. Given a reference non-singular $\vect{h}_{\text{ref}}$, it is easy to check that the surjective map $\hat{\vect{h}}_{\text{ref}}^*: H^{d-1}(\cS|_{T\setminus W})\rightarrow H^{d-1}(T\setminus W)$ respects the local charges $\beta,\widetilde{\beta}$, and restricts to a $H^{d-1}(T)$-equivariant bijection between $\widetilde{\coeul}(T,\cW)$ and $\coeul(T,\cW)$.
\end{proof}

The following diagram summarises the various affine spaces for $H_1(T)\cong H^{d-1}(T)$:
\begin{equation}
\xymatrix{  & H_1(T)\ar[r] \ar[d]& \eul(T,\cW)\ar[l] \ar[d]& \subset \quad\quad\;\; H_1(T,W) \\
 & H^{d-1}(T)\ar[r]\ar[d] \ar[u]& \coeul(T,\cW)\ar[l] \ar[d]\ar[u]& \subset \quad H^{d-1}(T\setminus W) \\
H^{d-1}(\cS)\quad\supset & \widetilde{\coeul}(T)\ar[r] \ar[u]& \widetilde{\coeul}(T,\cW)\ar[l] \ar[u]& \subset \quad H^{d-1}(\cS|_{T\setminus W})
}.\nonumber
\end{equation}

\begin{remark}
If $\cW$ is the zero 0-chain, then $\widetilde{\coeul}(T,0)$ coincides under $\iota^*$ with $\widetilde{\coeul}(T)$.
\end{remark}

Let $\vect{h}$ be a vector field with local charges $\cW$, so $\hat{\vect{h}}$ defines a $d$-cycle on $\cS|_{T\setminus D_W}$ relative to the boundary $\cS|_{S_W}$. Its Poincar\'{e} dual is a $(d-1)$-cocycle on $\cS|_{T\setminus D_W}$ representing an element of $\widetilde{\coeul}(T,\cW)$. Pulling back under $\hat{\vect{h}}_{\text{ref}}$ gives an element $[w_{\vect{h}}]$ in $\coeul(T,\cW)$, whose Poincar\'{e} dual is an Euler chain $[l_{\vect{h}}]$ in $\eul(T,\cW)$. In this way, a semimetal Hamiltonian specified by a vector field $\vect{h}$ with finite singularities can be represented by an Euler chain for $\cW$.

\begin{definition}\label{defn:semimetaltopinv}
Suppose $H$ is a Dirac-type Hamiltonian specified by a smooth vector field $\vect{h}$ on $T$ with finite singularities $W$. It has a cohomological topological invariant $[w_{\vect{h}}]\in \coeul(T,\cW)\subset H^{d-1}(T\setminus W,\ZZ)$ as in the above paragraph, and its \emph{Euler chain representation} is the Poincar\'{e} dual $[l_{\vect{h}}]\in \eul(T,\cW)\subset H_1(T,W,\ZZ)$.
\end{definition}

\subsubsection{``Jumps'' in topological invariants when restricted to subtori} The Euler chain representation of a semimetal is extremely useful because it invariantly characterises and generalises the idea of ``jumps in Chern numbers'' as a Weyl point is traversed, even for complicated local charge configurations. More precisely, the natural pairing $H^{d-1}(T\setminus W,\ZZ)\times H_{d-1}(T\setminus W,\ZZ)\rightarrow\ZZ$ is well-defined for any semimetal with invariant $[\omega_{\vect{h}}]$, and any homology $(d-1)$-cycle $[P]$ in $T$ which avoids $W$. For example, we can take $[P]$ to be the fundamental classes of $(d-1)$-spheres surrounding Weyl points, or (when $T=\TT^d$) of $(d-1)$ subtori at a fixed coordinate. The pairing Poincar\'{e} dualises to the \emph{intersection pairing}\footnote{The intersection pairing is more precisely defined as a map $H^{\text{BM}}_1(T\setminus W)\times H_{d-1}(T\setminus W)\rightarrow H_0(T\setminus W)$, as in 2.6.17 of \cite{Chriss}. Here $H^{\text{BM}}_1(T\setminus W)$ is Borel--Moore homology, mentioned in Section \ref{sec:dualMV}, which is isomorphic to $H_1(T,W)$ and Poincar\'{e} dual to $H^{d-1}(T\setminus W)$.} between $[l_{\vect{h}}]$ and $[P]$, i.e.\ the signed number of intersections between $l_{\vect{h}}$ and $P$.

As a concrete example in $d=3$, for those $[P]$ which are represented by $2$-submanifolds $P$, the intersection number computes the integral over $P$ of the Berry curvature 2-form --- this yields the familiar first Chern numbers (``weak'' Chern invariants) and their discontinuities across the Weyl points (Fig.\ \ref{fig:Intersection}-\ref{fig:Intersection2}).

\subsubsection{Bulk-boundary correspondence and rewiring Fermi arcs}\label{sec:BBC}
For $T=\TT^d$, a projection $\pi:(\TT^d,W)\rightarrow(\widetilde{T}=\TT^{d-1},\widetilde{W})$ induces a homology projection $\pi_*:H_1(T,W)\rightarrow H_1(\widetilde{T},\widetilde{W})$, which we can think of as being Poincar\'{e} dual to the bulk-boundary correspondence \cite{MTFermi,MT1,MT3}. Under $\pi_*$, an Euler chain becomes a surface \emph{Fermi arc} --- a 1-chain on $\widetilde{T}$ whose boundary is the projected 0-chain of local charges. Figure \ref{fig:Bulk2boundary} provides an intuitive picture in $d=3$ in which $\pi$ projects out the $k_z$ coordinate, illustrating how surface Fermi arcs are determined\footnote{In \cite{MTFermi}, we used a different sign convention for which Fermi arcs point from $+$ to $-$, so the Fermi arcs there are \emph{oppositely oriented} to those in this paper.} from bulk Euler chains by $\pi_*$. The Euler chain description is coordinate-free, and determines the Fermi arc connectivity for the projection to \emph{any} surface.

 \begin{figure}
 \centering
        \includegraphics[scale=0.20]{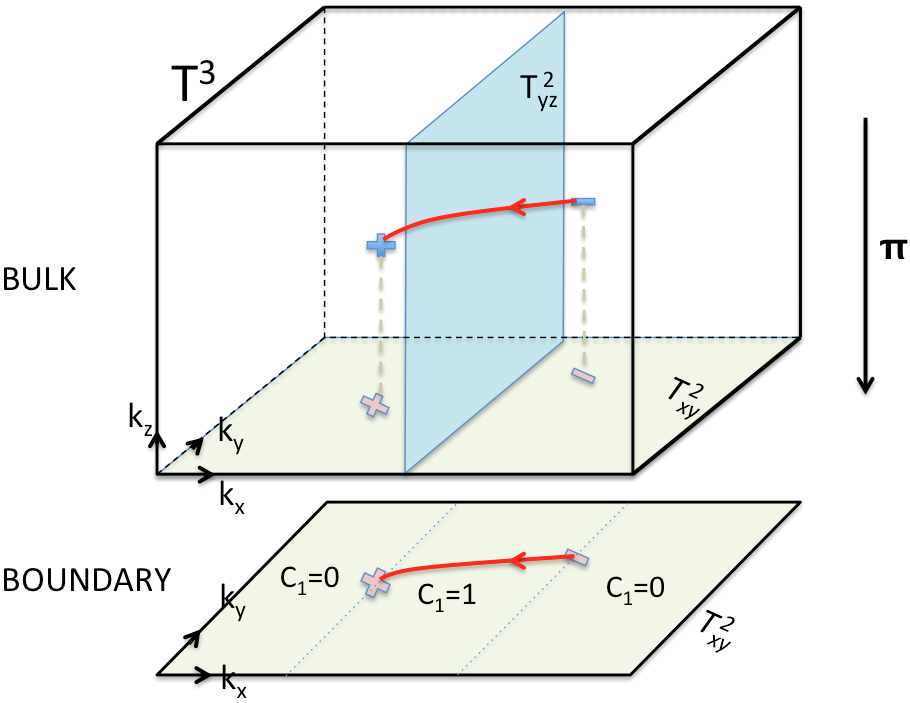}\quad\includegraphics[scale=0.20]{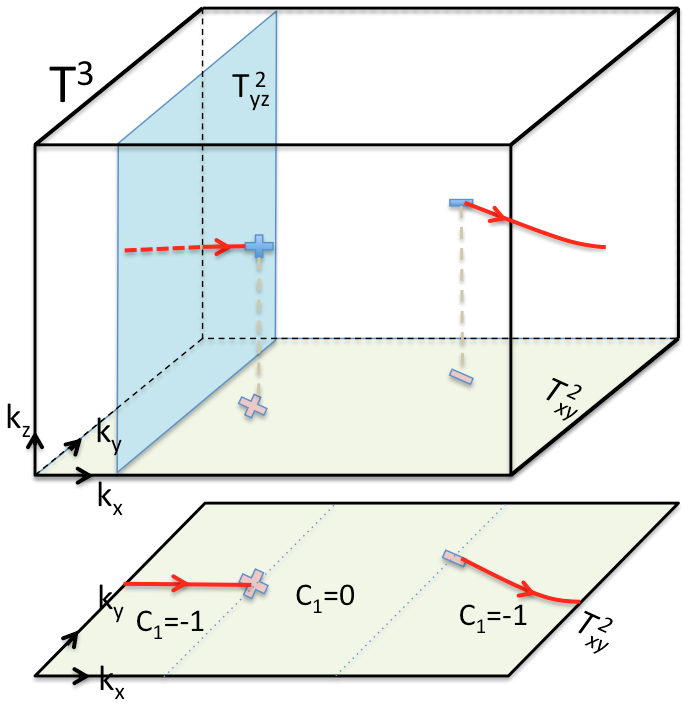}
        \caption{Red lines in the bulk Brillouin torus $\TT^3$ are Euler chains representing semimetals, encoding, in particular, how their first Chern numbers (weak invariants) $c_1^{yz}, c_1^{zx}$ vary with $k_x, k_y$. Only the values of $c_1^{yz}$ are indicated. Gapless surface states appear at points in the surface Brillouin torus $\TT^2$ where a first Chern number is non-zero, forming a Fermi arc connecting the projected Weyl points. The Fermi arc is determined from the Euler chain homologically by projection $\pi_*$}\label{fig:Bulk2boundary}
 \end{figure}

\begin{remark}
If we are able to access the complete specification of a semimetal Hamiltonian as an operator, then its surface Fermi arc is uniquely determined in principle. In practice, we have access to the Fermi arcs, and different Fermi arcs (as 1-chains) come from different Hamiltonians (as operators). The passage from Hamiltonian to Fermi arc can be modelled, for example, by transfer matrices \cite{Hatsugai2,Avila,Dwivedi2}. At the \emph{topological} level, a coarser but more appropriate question is how Fermi arcs may be used to distinguish \emph{topologically distinct} semimetal Hamiltonians. For this latter question, we note that \emph{any} Euler chain representative for a semimetal Hamiltonian $H$ maps under $\pi_*$ to a 1-chain which is topologically equivalent to the actual Fermi arc for $H$. Thus Fermi arcs which are topologically distinct must come from topologically distinct Hamiltonians; our slight abuse of language in calling the 1-chain a ``Fermi arc'' does not matter at the level of topological invariants.
\end{remark}

Certain scenarios involving tuning semimetal Hamiltonians and ``rewiring'' their surface Fermi arcs were considered in \cite{LFF,Dwivedi}. These essentially involve homotopies $\breve{\vect{h}}$ which fix the local charges $\cW=\cW_{\vect{h}}=\cW_{\vect{h}'}$ (at least up to a diffeomorphism moving the Weyl points), and so induce the identity map on the class of Euler chains in $\eul(T,\cW)\subset H_1(T,W)$. Under $\pi_*$, the surface Fermi arcs for $\vect{h}$ and $\vect{h}'$ may become ``rewired'', but only in such a way that their class in $H_1(\widetilde{T},\widetilde{W})$ remains unchanged. Rewirings which involve a change in $H_1(\widetilde{T},\widetilde{W})$ necessarily require the prefiguring Euler chain in $\eul(T,\cW)$ to change homology class, and this cannot be achieved by homotopies $\breve{\vect{h}}$ which fix the local charges $\cW$. Figures \ref{fig:Intersection}-\ref{fig:Intersection2} illustrate some examples of rewirings.

 \begin{figure}
 \centering
        \includegraphics[width=0.9\textwidth]{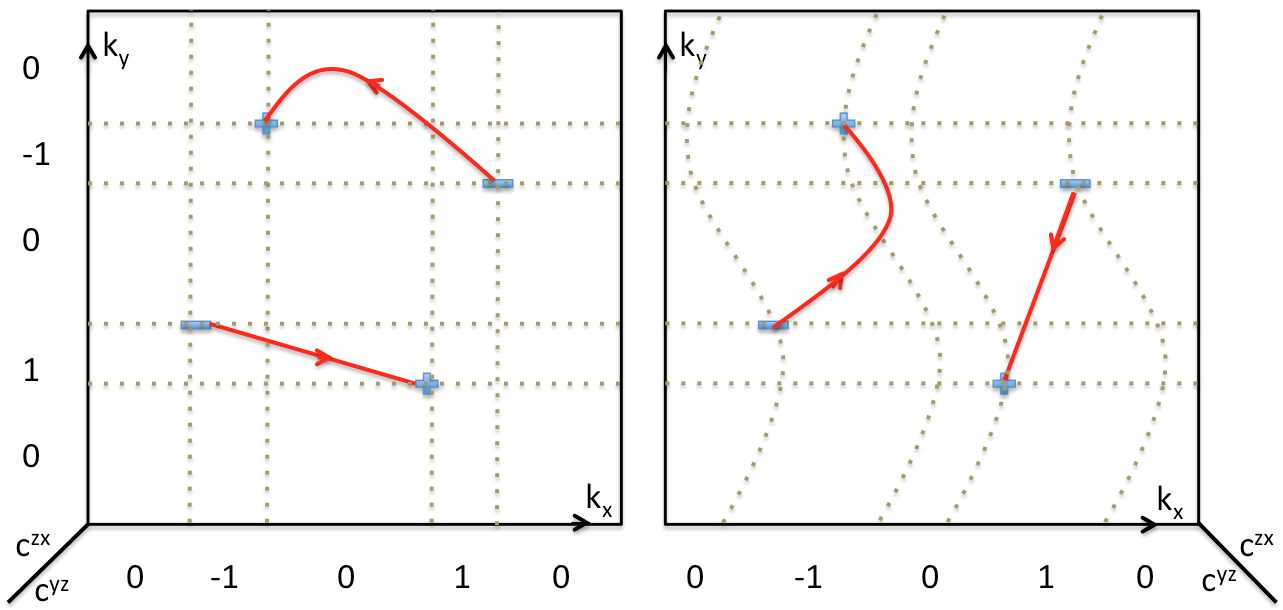}
        \caption{Black square represents $\TT^3$ with one coordinate suppressed; we can also interpret it as the projected surface Brillouin torus $\TT^2$. Two homologous Euler chains (the left and right pairs of red directed lines) having the same local charge configuration are shown, and we can also interpret them as the corresponding surface Fermi arcs. [L] First Chern numbers for 2D subtori in the $yz$ (resp.\ $zx$) directions at fixed $k_x$ (resp.\ $k_y$) are shown, and may be computed by counting the signed intersection between the Euler chain and the subtorus. The ``straightness'' of the Euler chain does not matter, and Chern numbers can also be calculated for 2-cycles other than the standard 2D subtori with a fixed coordinate [R].}\label{fig:Intersection}
 \end{figure}

 \begin{figure}
 \centering
        \includegraphics[width=0.9\textwidth]{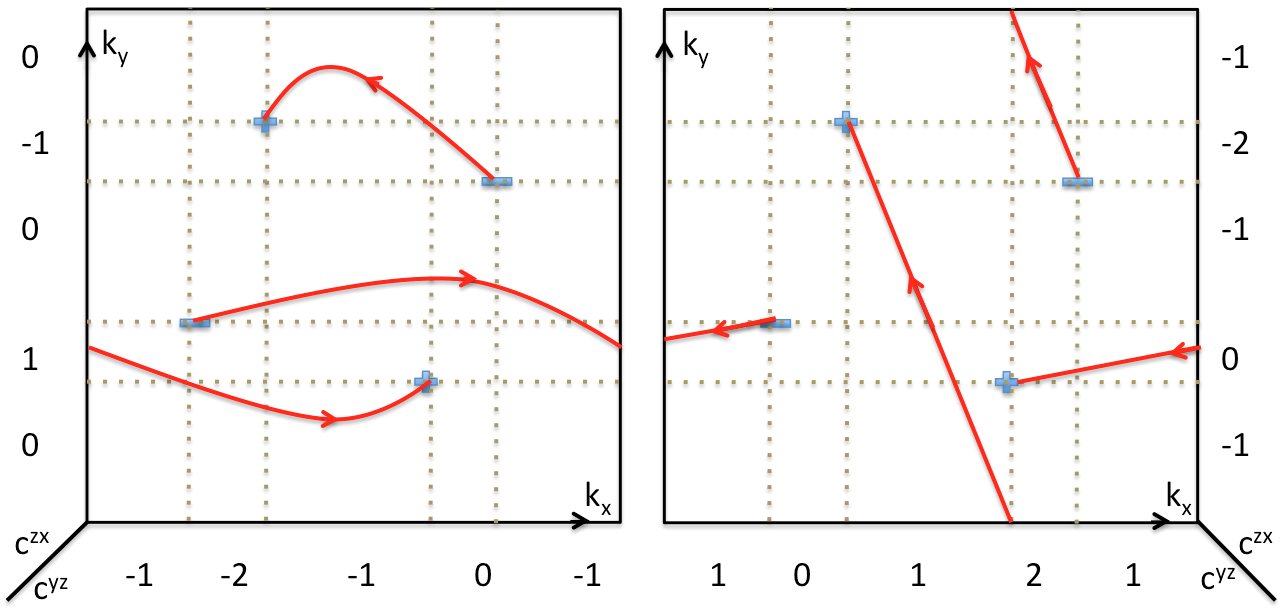}
        \caption{More complicated Euler chains in $\eul(T,\cW)\subset H_1(\TT^3, W)$ representing more semimetal invariants in $\coeul(T,\cW)\subset H^2(\TT^3\setminus W)$. The left and right Euler chains are not homologous.}\label{fig:Intersection2}
 \end{figure}

\begin{remark}\label{remark:higherjumps}
In $d>3$, the intuition afforded by Figures \ref{fig:Chern-jump} and \ref{fig:Bulk2boundary}-\ref{fig:Intersection2} carries over in much the same way. For example, in $d=4$, the blue surfaces represent hyperplane slices (3-tori) on which the semimetal gerbe of Section \ref{sec:semimetalgerbe} restricts and has a curvature 3-form. There are four independent slice directions. As a slice is moved transversely across a semimetal band crossing with local charge $q$, the Dixmier--Douady invariant of the gerbe on the slice jumps by $q$. In $d=5$, the slices are 4-tori and and it is the second Chern number of the $\mathsf{T}$-invariant semimetal (Section \ref{sec:5Dsemimetal}) which jumps by $q$.
\end{remark}

\begin{remark}\label{remark:nonzeroEuler}
It is also possible to define Euler structures for $T$ even if $\chi(T)\neq 0$ by introducing a basepoint $k_0\in T$ which ``contains $\chi(T)$'', see \cite{Burg}. An Euler chain for a 0-chain $\cW$ is then a 1-chain $l\in C_1(T,\ZZ)$ such that $\partial l=\cW-\chi(T)k_0$. Every vector field $\vect{h}$ on $T$ with isolated zeroes $W$ admits an Euler chain for its 0-chain $\cW_{\vect{h}}$ of local charges in this sense: take $l=\sum_{w\in W}\text{Ind}_w(\vect{h})(w-k_0)$, then $\partial l = \sum_{w \in W}\text{Ind}_w(\vect{h}) w-(\sum_{w\in W}\text{Ind}_w(\vect{h}))k_0=\cW_{\vect{h}}-\chi(T)k_0$ by the Poincar\'{e}--Hopf theorem. Euler structures for $T$ with basepoint $k_0$ are then defined as classes of pairs $(\vect{h},l)$ which are considered equivalent if $l_2=l_1+\text{cs}(\vect{h}_1,\vect{h}_2)$ up to a boundary. There is an action of $H_1(T,\ZZ)$ on these classes by addition to $l$, which is furthermore free and transitive.
\end{remark}


\section{Topological semimetal invariants in $d=3,4,5$}\label{sec:higherdimension}
\subsection{General remarks on Bloch bundles and gauge invariance}
Often, it is implicitly assumed that the Hamiltonians act on a \emph{trivial} Bloch bundle $\mathscr{S}=T\times\CC^n$ and that a trivialisation (choice of basis) has been given, so that the Dirac-type Hamiltonians of the form \eqref{Dirac-type-Hamiltonian} are well-defined globally. Although this default setting suffices to illustrate many interesting features of semimetals, it is worth mentioning that non-trivial $\mathscr{S}$ can arise physically and can still be handled mathematically. In general one has, after a Bloch--Floquet transform, a \emph{Hilbert bundle} $\mathscr{H}\rightarrow T$ in which the fibre $\mathscr{H}_k$ over $k\in T$ comprises the quasi-periodic Bloch wavefunctions with quasimomentum $k$ \cite{FM,RS,Kato,Kuchment}. One typically restricts attention to a low-energy subbundle $\mathscr{S}$ defined by a finite number of energy bands separated from the rest by spectral gaps. Subsequently, a \emph{Bloch bundle} $\mathscr{S}$ will refer to such a truncated finite-rank hermitian vector bundle over a manifold $T$ of (quasi)-momenta, on which the \emph{Bloch Hamiltonians} act fiberwise.

In the theory of topological band insulators, it is crucial that $\mathscr{S}$ or its subbundles can be non-trivializable, e.g.\ 2D Chern insulators are essentially specified by the first Chern class of a non-trivializable valence line bundle. Expressions such as \eqref{Dirac-type-Hamiltonian} should be understood as \emph{local} expressions, which continue to make sense on a Bloch bundle $\mathscr{S}$ as long as $\mathscr{S}$ has the structure of a Clifford module bundle (or spinor bundle) so that the $n\times n$ operators $\gamma_i$ are well-defined globally. In particular, the analysis of the spectrum of $H=\vect{h}\cdot\vect{\gamma}$ remains the same for \emph{any} $\mathscr{S}$ --- namely, the $\pm |\vect{h}(k)|$ eigenspaces degenerate (i.e.\ there is a band crossing) precisely at the zeroes of $\vect{h}$. On the other hand, the topology of the valence subbundle over $T\setminus W$, which is defined by the Fermi projection $\frac{1}{2}(1-\hat{\vect{h}}(k))$, does depend on $\mathscr{S}$ through the operators $\gamma_i$.

When $n>2$, the most general $n$-band Hamiltonian is \emph{not} of Dirac-type, and well-defined topological invariants should respect the $\text{U}(n)$ gauge invariance of the rank-$n$ Bloch bundle $\mathscr{S}$. Nevertheless, Dirac-type Hamiltonians are generic when certain time-reversal/particle-hole symmetries are imposed. Such additional symmetries restrict the gauge group, and the topological invariants which we will define for insulators/semimetals described by Dirac-type Hamiltonians, are gauge invariant in this restricted sense. Some examples of such gauge restrictions were studied in \cite{ASSS,ASSS2}, and we also analyse them in Section \ref{sec:higherdimension}.

\subsection{General Hamiltonians over a manifold}
The properties of Dirac-type Hamiltonians which are relevant for defining insulator/semimetal invariants can be abstracted as follows. An abstract (quantized) Dirac-type Hamiltonian is a section of the vector part of some Clifford algebra bundle over a momentum space manifold $T$, i.e.\ the quantization of a vector field $\vect{h}$ specifying the abstract Hamiltonian as in Definition \ref{defn:abstractHamiltonian}.  Pointwise in $T$, the spectrum of such a Hamiltonian (as a Clifford algebra element) can be found quite easily; it depends only on the vector field and has the crucial feature that it degenerates exactly at the zero set $W$ of the vector field. When such Hamiltonians are represented on some Clifford module bundle $\mathscr{S}$ (physically the Bloch bundle), they become concrete families of Hermitian operators with Fermi projections etc. The local topological invariant at a zero $w\in W$ is intrinsic to $\vect{h}$ rather than the particular choice of Bloch bundle $\mathscr{S}$, and exists even when $\mathscr{S}$ is not a trivial bundle. Furthermore, the symmetries and gauge structure for the Hamiltonians become more transparent at this abstract level, and this is especially so in higher dimensions.

{\bf Hamiltonians as quantized vector fields.}
For $d\geq 3$, let $\mathcal{E}$ be an oriented rank-$d$ real vector bundle over a compact oriented manifold $T$ with fibre metric $g$, along with a spin${}^c$ structure (e.g. take $T$ a spin${}^c$ manifold and $\mathcal{E}$ its tangent bundle, then there is a $H^2(T,\ZZ)$ worth of spin${}^c$ structures to choose from). The Clifford algebra bundle $Cl(\mathcal{E},g)$ is a ``quantization'' of the exterior algebra bundle $\bigwedge^*\mathcal{E}$ to allow for multiplication of vectors, both bundles having structure group $\text{SO}(d)$ \cite{LM}. With the spin${}^c$ structure, we can construct the (irreducible) spinor bundle $\mathscr{S}$ (the ``Bloch bundle'') of complex rank $n=2^{\floor{\frac{d}{2}}}$, with $\text{Spin}^c(d)$-invariant Hermitian inner product, and Clifford multiplication gives a \emph{real} bundle homomorphism $Cl(\mathcal{E},g)\rightarrow \text{End}(\mathscr{S})$. Thus there is a map $\tilde{c}:\bigwedge^*\cE\rightarrow Cl(\cE,g)\rightarrow\text{End}(\mathscr{S})$.

In particular, $\mathcal{E}\subset\bigwedge^*\mathcal{E}$ is identified with a subbundle of $Cl(\mathcal{E},g)$, and the map $\tilde{c}$ takes a section $\vect{h}\in\Gamma(\cE)$ to a concrete Dirac-type Hamiltonian. More explicitly, an orthonormal frame $\{e_1,\ldots,e_d\}$ for $\mathcal{E}$ becomes $\{c(e_1),\ldots,c(e_d)\}$ in $Cl(\mathcal{E},g)$ and satisfies the Clifford relations $c(e_i)c(e_j)+c(e_j)c(e_i)=2\delta_{ij}$. The $c(e_i)$ are represented on $\mathscr{S}$ as traceless self-adjoint endomorphisms $\tilde{c}(e_i)\equiv\gamma_i$ which inherit the familiar relations $\gamma_i\gamma_j+\gamma_j\gamma_i=2\delta_{ij}{\bf 1}_\mathscr{S}$.
Thus, a section $\vect{h}\in\Gamma(\mathcal{E})$ defines a (quantized) \emph{abstract Dirac-type Hamiltonian} $c(\vect{h})\in\Gamma (Cl(\mathcal{E},g))$ (c.f.\ Definition \ref{defn:abstractHamiltonian}), which is then represented by a \emph{concrete Dirac-type Hamiltonian} $\tilde{c}(\vect{h})=\vect{h}\cdot\vect{\gamma}$ acting on $\mathscr{S}$. The concrete \emph{Dirac-type Bloch Hamiltonian} above a point $k\in T$ is the operator $H(k)=\vect{h}(k)\cdot\vect{\gamma}(k)$ acting on the fiber $\mathscr{S}_k\cong \CC^n$. 

In the Clifford algebra bundle, it is already ($\text{SO}(d)$-invariantly) true that $c(\vect{h})^2=g(\vect{h},\vect{h})=|\vect{h}|^2$. It follows that the spectrum of a Bloch Hamiltonian is $\text{spec}(H(k))=\pm|\vect{h}(k)|$ with each eigenvalue $2^{\floor{\frac{d}{2}}-1}$-fold degenerate, independently of $\mathscr{S}$, and this is $\text{Spin}^c(d)$-invariant. Precisely at the zeroes of $\vect{h}$, the two $2^{\floor{\frac{d}{2}}-1}$-fold degenerate energy bands cross, and such a crossing is precisely protected by the topological index of $\vect{h}$ there.

More generally, we only require $\mathscr{S}$ to be a Clifford module bundle for $Cl(\mathcal{E},g)$, for which $c(\vect{h})\in\Gamma (Cl(\mathcal{E},g))$ acts self-adjointly. Such Clifford module bundles are in fact twisted versions of spinor bundles, obtained by tensoring with some vector bundle $\mathcal{V}$, so we shall mainly consider only irreducible spinor bundles. In Section \ref{sec:torsionsemimetal} we will also consider $\bigwedge^2\mathcal{E}\subset Cl(\mathcal{E},g)$ acting skew-adjointly on $\mathscr{S}$, giving rise to ``bilinear Hamiltonians''.

\begin{remark}
Physically, one might start off with a given Bloch bundle $\mathscr{S}$, which is a hermitian $\text{U}(n)$ vector bundle over $T$ obtained by Fourier/Bloch--Floquet transform. In order to have a notion of Clifford multiplication and Dirac-type Hamiltonians, we need some assumptions on $\mathscr{S}$ allowing it to have the structure of a (irreducible) Clifford module bundle for some $\mathcal{E}$. In the first place, $\mathscr{S}$ should have the correct (complex) rank.
In most model Hamiltonians considered in the physics literature, $T=\TT^d$ and the Bloch bundle $\mathscr{S}$ is assumed to be trivial. In this case, we take the trivial bundle $\cE$, identified with the tangent bundle of $\TT^d$, and $\vect{h}\in\Gamma(\cE)$ is a tangent vector field. Then the corresponding Dirac-type Bloch Hamiltonian $H(k)=\vect{h}(k)\cdot\vect{\gamma}$ has gamma matrices $\gamma_i$ which can be taken to be constant over $\TT^d$.
\end{remark}

\subsection{Remarks on $\mathsf{T}$ and $\mathsf{C}$ symmetries}
Formally, a family of $n\times n$ Bloch Hamiltonians over $T$ can also be regarded as a family of Hamiltonians on an $n$-level quantum mechanical system, with adiabatic phases/holonomies, etc. There is, however, a fundamental difference between $T$ as an adiabatic parameter space and as a quasi-momentum space, when antiunitary symmetries are introduced. The Brillouin torus $\TT^d$ is topologically the space of unitary characters for the translation group $\ZZ^d$ of a lattice, so the complex conjugation involved in an antiunitary symmetry operator induces an involution $\tau$ (a $\ZZ_2$-action) on $\TT^d$. Explicitly, when we parameterise $\TT^d$ by angles $k=(k_1, \ldots, k_d)$, the character $n\mapsto e^{\im n\cdot k}$ labelled by $k$ conjugates to the character labelled by $-k\equiv\tau(k)$. In the context of Bloch Hamiltonians, a fermionic time-reversal symmetry $\mathsf{T}$ is a lift of $\tau$ to a map on the Bloch bundle $\mathscr{S}$ which is antiunitary on fibres, and satisfies $\mathsf{T}^2=-1$. Compatibility of $\mathsf{T}$ with $H(k), k\in\TT^d$ is then the condition $\mathsf{T}(k)H(k)\mathsf{T}(k)^{-1}=H(\tau(k))$. Such bundles equipped with a ``Quaternionic'' structure $\mathsf{T}$ were studied in \cite{Dupont}, and have more recently been investigated in the context of topological insulators \cite{FM,dNG,MT2,Thiang}. For bosonic time-reversal, $\mathsf{T}$ squares to the identity instead, and is sometimes called a ``Real structure''.

Similarly, an antiunitary particle-hole symmetry $\mathsf{C}$ \emph{anticommutes} with the Bloch Hamiltonians in a way that respects $\tau$, namely, $\mathsf{C}(k)H(k)\mathsf{C}(k)^{-1}=-H(\tau(k))$. On the other hand, for a family of Hamiltonians parametrised by $T$, we simply have $\mathsf{T}(k)H(k)\mathsf{T}(k)^{-1}=H(k)$ and $\mathsf{C}(k)H(k)\mathsf{C}(k)^{-1}=-H(k),\, k\in T$, without the involution $\tau$. In either case, the squares $\mathsf{T}^2, \mathsf{C}^2$ may be $\pm 1$, and when both $\mathsf{T}$ and $\mathsf{C}$ are present, they may be assumed to commute \cite{FM}.

The same involution $\tau$ is induced on $\TT^d$ by a spatial inversion symmetry $\mathsf{P}$, since a translation by $n\in\ZZ^d$ becomes a translation by $-n$. Thus we can consider a time-reversal symmetry which concurrently effects spatial inversion\footnote{Note that this is not the same as requiring $\mathsf{T}$ and $ \mathsf{P}$ to \emph{separately} be symmetries, and indeed, the latter situation requires additional data specifying whether $\mathsf{P}, \mathsf{T}$ commute or anticommute \cite{Wigner,ShawLever}, leading in each case to different Wigner classes.}; we write $(\mathsf{T}\mathsf{P})$ for such a symmetry operator. Then we have $(\mathsf{T}\mathsf{P}(k))H(k)(\mathsf{T}\mathsf{P}(k))^{-1}=H(k),\, k\in\TT^d$ where the effect of $\tau$ now cancels; similarly for $(\mathsf{C}\mathsf{P})$. Thus the operator $(\mathsf{T}\mathsf{P})$ is an ordinary quaternionic structure, and the Bloch bundle $\mathscr{S}$ has invariants as a \emph{symplectic bundle}. 

{\bf Convention.} For this paper, we will simply write $\mathsf{T}, \mathsf{C}$ with the understanding that $(\mathsf{T}\mathsf{P}), (\mathsf{C}\mathsf{P})$ is meant whenever $T$ is a Brillouin torus. Thus we always have $\mathsf{T}(k)H(k)\mathsf{T}(k)^{-1}=H(k)$ and $\mathsf{C}(k)H(k)\mathsf{C}(k)^{-1}=-H(k),\,k\in T$.

\subsection{Abstract Weyl semimetal in 3D}
Let us explain the above abstractions in more detail for two-band Hamiltonians in $d=3$, in which $\text{dim}_\CC\mathscr{S}=2$. This is the setting for the basic Weyl semimetal. 

Since $\text{Spin}^c(3)=\text{U}(2)$, any rank-2 hermitian vector bundle $\mathscr{S}$ can arise as a spinor bundle: take the traceless Hermitian endomorphisms $\mathcal{H}_0$ of $\mathscr{S}$ with the Hilbert--Schmidt real inner product $\langle H_1(k), H_2(k) \rangle=\frac{1}{2}\text{tr}(H_1\circ H_2(k))$.  Then $\mathcal{H}_0$ becomes a real orientable rank-3 bundle with structure group $\text{SO}(3)=\text{PU}(2)$ liftable to $\text{Spin}^c(3)$. An orthonormal frame for $\mathcal{H}_0$ gives a set of Pauli operators $\{\sigma_1,\sigma_2,\sigma_3\}$, which is positively oriented if $\sigma_1\sigma_2=\im\sigma_3$. Such a positively-oriented frame has the form of the standard Pauli matrices in some local trivialization of $\mathscr{S}$. The Bloch Hamiltonian is given locally by $H(k)=\vect{h}(k)\cdot\vect{\sigma}$. Then $(\mathcal{H}_0,\langle\cdot,\cdot\rangle)$ has a Clifford algebra bundle and a choice of spin$^c$ structure yields a spinor bundle. $\mathscr{S}$ is some twisted version of this spinor bundle, and the twisting can be thought of as changing the spin$^c$ structure under the canonical action of $H^2(T,\ZZ)$ \cite{LM}.


Thus $\{\sigma_1,\sigma_2,\sigma_3\}$ may be thought of as $\{\tilde{c}(e_1),\tilde{c}(e_2),\tilde{c}(e_3)\}$ for an orthonormal frame $\{e_1,e_2,e_3\}$ in $(\mathcal{E},g)\cong(\mathcal{H}_0,\langle\,\cdot\,,\,\cdot\,\rangle)$. From this point of view, a vector field $\vect{h}\in\Gamma(\cE)$ is quantized to an \emph{abstract Hamiltonian} $c(\vect{h})\in Cl(\cE,g)$, which acts concretely as $\tilde{c}(\vect{h})=\vect{h}\cdot\vect{\sigma}$ on $\mathscr{S}$. Note that, in particular, the spectrum $\text{spec}(H(k))=\pm |\vect{h}(k)|$ is invariant under $\text{U}(2)=\text{Spin}^c(3)$ gauge transformations of $\mathscr{S}$.

\begin{remark}\label{rem:gauge3D} A gauge transformation of the Bloch bundle $\mathscr{S}$ implemented by local $2\times 2$ unitaries $U(k)$ turns $\vect{h}(k)\cdot\vect{\sigma}$ into the conjugate $U(k)(\vect{h}(k)\cdot\vect{\sigma})U^{-1}(k)$. The result of this conjugation is known to be equal to $\vect{h'}(k)\cdot\sigma$ where $\vect{h'}(k)$ is rotated from $\vect{h}(k)$ by the $\text{SO}(3)$ element $\pi(U(k))$ in $$1\rightarrow \text{U}(1)\rightarrow\text{U}(2)\xrightarrow{\pi} \text{PU}(2)=\text{SO}(3)\rightarrow 1.$$
Similarly, $\text{PSU}(2)=\text{SO}(3)$ in case the structure group of $\mathscr{S}$ can be reduced to $\text{SU}(2)$.
\end{remark}

\subsubsection{Symmetries of 3D semimetal}
Two band-Hamiltonians can more generally include a trace term, i.e.\ $H(k)=m(k)+\vect{h}(k)\cdot\vect{\sigma}(k)$, where $m(k)$ is not constant and so not eliminable by shifting the overall energy level. Indeed, Hamiltonians with such $m(k)$ terms (and constant $\vect{\sigma}$) can give rise to new phenomena such as ``Type-II Weyl semimetals'' \cite{XutypeII,Sol}. It is possible to distinguish the two situations, i.e.\ eliminate the $m(k)$ term, by imposing a natural symmetry constraint.

{\bf Quaternionic structure and $\mathsf{C}$-symmetry.}
Let us recall the construction of the spinor representation $S$ of $Cl_{0,3}$. Let $e_1,e_2,e_3$ be orthonormal vectors in $\RR^3$, $e_ie_j\equiv e_i\wedge e_j\in\Lambda^2(\RR^3)$, and $e_1e_2e_3\equiv e_1\wedge e_2\wedge e_3\in\Lambda^3(\RR^3)$. The even part $Cl_{0,3}^+\cong Cl_{2,0}$ is isomorphic to $\HH$, and for the complexified algebras, $\CC l_3^+\cong \CC l_2\cong M_2(\CC)=\HH\otimes_\RR\CC$. The spinor representation on $S=\CC^2$ is the (unique complex) irreducible representation of $\CC l_3^+\cong M_2(\CC)$, and is extended to $\CC l_3\supset Cl_{0,3}$ by setting a chirality condition $\tilde{c}(-\im e_1e_2e_3)=1$. $\text{Spin}(3)=\text{SU}(2)\cong\text{Sp}(1)$ lies in $Cl_{0,3}^+\cong \HH$ as the unit quaternions, and $S$ is an irreducible representation of $\text{Spin}(3)$. This is just the fundamental representation of $\text{SU}(2)\cong \text{Sp}(1)$ which is well-known to be quaternionic.

 The quantized bivectors $c(e_3e_1), c(e_3e_2)$ generate $Cl_{2,0}\cong Cl^+_{0,3}\cong\HH$. On $S$, they are represented by skew-Hermitian operators $\tilde{c}(e_3 e_1)=\im\sigma_2,\, \tilde{c}(e_3 e_2)=-\im\sigma_1$ (which generate a quaternion algebra), and with the chirality element $\tilde{c}(-\im e_1e_2e_3)$ being the identity operator on $S$, we recover the Pauli matrices $\tilde{c}(e_i)=\sigma_i$. Since $S$ is a quaternionic representation for $\text{Spin}(3)$, there is an antiunitary operator $\mathsf{\Theta}$ on $S$ commuting with $Cl^+_{0,3}\supset\text{Spin}(3)$ such that $\mathsf{\Theta}^2=-1$, from which we deduce that $\mathsf{\Theta}$ \emph{anticommutes} with $c(e_i)=\sigma_i$. For instance, we can take $\mathsf{\Theta}=\im\sigma_2\circ\kappa$ where $\kappa$ is complex conjugation in the basis where $\sigma_i$ are the standard Pauli matrices. This operator is often taken to be the fermionic time-reversal operator $\mathsf{T}$ on spin-$\frac{1}{2}$ systems. 

If $\cE$ has a spin (not just spin${}^c$) structure, there is also a notion of spinor bundles $\mathscr{S}$ associated to the spin structure \cite{LM}, and these have structure group $\text{Spin}(3)=\text{SU}(2)=\text{Sp}(1)$. Globalising the above constructions, these spinor bundles come with a quaternionic structure $\mathsf{C}$ anticommuting with $H=\tilde{c}(\vect{h})=\vect{h}\cdot\vect{\sigma}$.

{\bf Reduction to $\text{SU}(2)\cong\text{Sp}(1)$.}
Suppose we have a reduction of the structure group of the Bloch bundle from $\text{U}(2)$ to $\text{Spin}(3)=\text{SU}(2)$. This is possible exactly when $\text{det}(\mathscr{S})$ is trivial, i.e.\ $c_1(\mathscr{S})=c_1(\text{det}(\mathscr{S}))=0$ (this is a general result about reducing from $\text{U}(n)$ to $\text{SU}(n)$). Alternatively, since $\text{SU}(2)=\text{Sp}(1)$, the reduction from $\text{U}(2)$ to $\text{SU}(2)$ requires a quaternionic structure on $\mathscr{S}$, which provides an isomorphism between the line bundles $\mathcal{L}_1$ and $\overline{\mathcal{L}_2}$ for an orthogonal splitting $\mathscr{S}=\mathcal{L}_1\oplus\mathcal{L}_2$. Then $c_1(\mathscr{S})=c_1(\mathcal{L}_1)+c_1(\mathcal{L}_2)=c_1(\mathcal{L}_1)-c_1(\mathcal{L}_1)=0$. In this case, we can identify $\mathcal{H}_0$ with the tangent bundle $\mathcal{E}\rightarrow\TT^3$, as (trivial) $\text{SO}(3)$ bundles. 

Therefore, in the two-band case in 3D, the imposition of a $\mathsf{C}$-symmetry constraint is one possible way to justify (1) the restriction from arbitrary $2\times 2$ Bloch Hamiltonians to the \emph{traceless} ones, and (2) the triviality of the Bloch bundle $\mathscr{S}$.

\subsection{$\mathsf{T}$-invariant Dirac-semimetal in 5D}\label{sec:5Dsemimetal}
We now generalize the two-band crossings of the previous section to four-band crossings. These should involve crossings of \emph{pairs} of bands, in analogy to the dispersion of massless Dirac spinors. Again, the adjective ``Dirac'' is loaded and possibly misleading, and our constructions below are more analogous to Weyl fermions in 1+5 D (which is still misleading). For this Subsection, we take $T$ to be a compact oriented 5-manifold, and $\mathscr{S}=T\times\CC^4$ a trivial Hermitian vector bundle.

The bundle $\mathcal{H}_0$ of traceless Hermitian endomorphisms of $\mathscr{S}$ is now a trivial rank-15 real vector bundle whose sections are Bloch Hamiltonians $k\mapsto H(k)$ with inner product $\langle H_1(k), H_2(k) \rangle = \frac{1}{4}\text{tr}(H_1(k)\circ H_2(k))$.

For notational ease, we will sometimes suppress the dependence on $k$ when dealing with a single operator.

\subsubsection{Dirac-type Hamiltonians in 5D}\label{sec:Dirac5D}
Physicists often restrict attention to ``Dirac-type'' Hamiltonians $H(k)=\vect{h}(k)\cdot\vect{\gamma}$ where $\vect{\gamma}=\{\gamma_1,\ldots,\gamma_5\}$ is an orthonormal set of $4\times 4$ hermitian operators on $S$ satisfying the Clifford algebra relation $\gamma_i\gamma_j+\gamma_j\gamma_i=2\delta_{ij}$, and $\vect{h}$ is a 5-component vector field. The pragmatic justification is ease of manipulation, since the spectrum is again easily found to be $\pm|\vect{h}(k)|$, with doubly-degenerate eigenvalues that become four-fold degenerate precisely at the zeroes of $\vect{h}$ (Fig.\ \ref{fig:symmetric spectrum}). However, we should note that arbitrary gauge transformations will not preserve this particular form of $4\times 4$ Hermitian matrices. It is therefore useful to find a symmetry which reduces the allowed gauge transformations so that the Dirac-form of $H$ is automatically preserved. 

\begin{figure}
    \centering
    \includegraphics[width=0.4\textwidth]{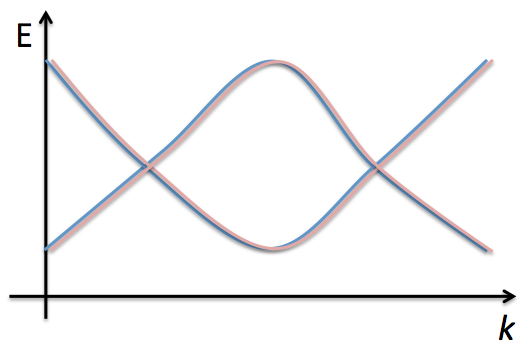}
    \caption{Four-band crossings for Dirac-type Hamiltonians, with symmetric spectrum.}
    \label{fig:symmetric spectrum}
\end{figure}

We do this by invoking a $\mathsf{T}$-symmetry constraint (a quaternionic structure) on the Hamiltonians, which will pick out the Dirac-type Hamiltonians as a rank-5 subbundle of $\mathcal{H}_0$. In fact, the observation that the proper gauge groups and Berry connections/phases for time-reversal invariant Hamiltonians are \emph{quaternionic}, had already been made in \cite{ASSS,ASSS2,Hatsugai}, and in the $4\times 4$ case such Hamiltonians can be constructed by considering $\CC^4$ as a spin-$\frac{3}{2}$ representation (so-called ``quadrupole Hamiltonians''). We give a more direct account, emphasizing the Clifford algebra and spin groups responsible for this structure.

Consider first a fibre $\CC^4\cong \CC^2\otimes\CC^2$ and define the antiunitary operator 
\begin{equation}\label{eqn:Toperator}
\Theta=(1\otimes \im\sigma_2)\circ \kappa,
\end{equation} where $\kappa$ is complex conjugation, which satisfies $\Theta^2=-1$. We verify that (real) linear combinations of the following mutually anticommuting $\gamma_i, i=1,\ldots,5$, \emph{commute} with $\Theta$:
\begin{equation}
\gamma_1=\sigma_2\otimes\sigma_1,\quad \gamma_2=\sigma_2\otimes\sigma_2,\quad\gamma_3=\sigma_2\otimes\sigma_3,\quad\gamma_4=\sigma_1\otimes 1,\quad\gamma_5=\sigma_3\otimes 1.\label{lineargamma}
\end{equation}

A more invariant way to see this is to notice that an anticommuting set of $4\times 4$ Hermitian operators $\gamma_1,\ldots,\gamma_4$ satisfying $\gamma_i\gamma_j+\gamma_j\gamma_i=2\delta_{ij}, i,j=1,\ldots,4$, generate (over the reals) a copy of $Cl_{0,4}\cong M_2(\HH)$ inside $M_4(\CC)$ which thereby commutes with a $\HH^\text{opp}$ action generated by $\im, \Theta$ (two anticommuting square roots of $-1$). The product of two or three different $\gamma_i,\,i=1,\ldots,4$ is not Hermitian, but $\gamma_5=\gamma_1\gamma_2\gamma_3\gamma_4$ is Hermitian and also commutes with $\Theta$. The remaining ten-dimensional subspace of traceless Hermitian operators is spanned by operators $\im\gamma_i\gamma_j, i\neq j$ and they \emph{anticommute} with $\Theta$ instead --- we will interpret this latter subspace of $4\times 4$ Hamiltonians as those compatible with a particle-hole symmetry $\mathsf{C}$, and these form an important class to be analysed in Section \ref{sec:Z2invariant}.

Therefore, traceless Hermitian operators of the form $H=\vect{h}\cdot\vect{\gamma}$ are precisely those which are compatible with a quaternionic structure $\Theta$, which we interpret as a time-reversal operator $\mathsf{T}$. Conversely, a quaternionic structure $\Theta$ identifies $\CC^4$ with $\HH^2$, and $M_2(\HH)\cong Cl_{0,4}$ is the real subalgebra of $M_2(\CC)$ which commutes with $\Theta$.

Globalising the above arguments, we deduce:
\begin{proposition}
A quaternionic structure $\Theta$ on a rank-$4$ Bloch bundle picks out a real orientable rank-5 subbundle $\mathcal{H}_{0,l}\subset\mathcal{H}_0$ of commuting Dirac-type Hamiltonians. The chirality condition $\gamma_5=\gamma_1\gamma_2\gamma_3\gamma_4$ for an orthonormal frame $\{\gamma_1,\ldots,\gamma_5\}$ for $\mathcal{H}_{0,l}$ defines the positive orientation.
\end{proposition}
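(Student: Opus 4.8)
The plan is to globalize the fiberwise linear algebra already established above. At each $k\in T$, let $\mathcal{H}_{0,l}|_k\subset\mathcal{H}_0|_k$ be the subspace of traceless self-adjoint endomorphisms of $\mathscr{S}_k$ that commute with $\Theta(k)$; equivalently, $\mathcal{H}_{0,l}=\ker\Phi$ for the bundle endomorphism $\Phi\co\mathcal{H}_0\to\mathcal{H}_0$, $\Phi(A)=A-\Theta A\Theta^{-1}$, which is well defined since $\Theta$ antiunitary makes $\Theta A\Theta^{-1}$ again a traceless self-adjoint $\CC$-linear operator. The pointwise analysis shows that each fiber of $\ker\Phi$ is precisely the $5$-dimensional span of a Clifford frame $\{\gamma_1,\dots,\gamma_5\}$, so $\Phi$ has locally constant (hence constant) rank $15-5=10$, and $\mathcal{H}_{0,l}$ is a smooth real rank-$5$ subbundle, carrying the restricted Hilbert--Schmidt metric. (Conceptually, $\Theta$ reduces the structure group of $\mathscr{S}$ from $\text{U}(4)$ to the quaternionic unitary group $\text{Sp}(2)\cong\text{Spin}(5)$, and $\mathcal{H}_{0,l}$ is the bundle associated to the vector representation $\text{Spin}(5)\to\text{SO}(5)$; this alone makes orientability manifest, but I give the concrete argument below since it also identifies the orientation.)

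For orientability, the crucial point is that for every $v\in\mathcal{H}_{0,l}|_k$ one has $v^2=\|v\|^2\,\text{id}$ (the $\gamma_i$ are orthonormal and anticommute), so \emph{any} orthonormal frame of a fiber is a set of Clifford generators, and the product $\gamma_1\cdots\gamma_5$ is a central element of the unital subalgebra $A\subseteq\End_\CC(\mathscr{S}_k)$ they generate. Since $A$ is a quotient of $Cl_{0,5}\cong M_2(\HH)\oplus M_2(\HH)$ while $\End_\CC(\mathscr{S}_k)\cong M_4(\CC)$ is simple as a real algebra of the same real dimension, $A$ must be one of the two simple summands $M_2(\HH)$, and the volume element of $Cl_{0,5}$ maps into it as $\pm\text{id}$. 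Hence $\gamma_1\cdots\gamma_5=\pm\text{id}_{\mathscr{S}_k}$ for every orthonormal frame, and under an $\text{O}(5)$ change of frame this product scales by the determinant. Therefore the frames with $\gamma_1\cdots\gamma_5=+\text{id}$ form an $\text{SO}(5)$-subtorsor of the orthonormal frame bundle of $\mathcal{H}_{0,l}$; choosing such frames in local trivializations of $\mathscr{S}$ (swapping two members of a frame if it yields $-\text{id}$) gives a reduction of the structure group to $\text{SO}(5)$, i.e.\ an orientation. Multiplying $\gamma_1\gamma_2\gamma_3\gamma_4\gamma_5=+\text{id}$ on the right by $\gamma_5$ and using $\gamma_5^2=\text{id}$ yields $\gamma_1\gamma_2\gamma_3\gamma_4=\gamma_5$, so this is exactly the orientation singled out by the stated chirality condition.

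The step I expect to need the most care is the assertion that $\gamma_1\cdots\gamma_5$ is always $\pm\text{id}$: this is where the hypothesis $\operatorname{rank}_\CC\mathscr{S}=4$ enters essentially, through the real simplicity of $M_4(\CC)$ together with the two-summand structure of $Cl_{0,5}$; for higher-rank $\mathscr{S}$ the sign is not forced and orientability as stated would break. The remaining points are routine: local constancy of $\operatorname{rank}\Phi$ is immediate from the fiberwise dimension count being $k$-independent; and chiral local frames can be chosen smoothly and glued consistently because $\gamma_1\cdots\gamma_5$ is continuous and equal to $\pm\text{id}$ at every point of a chart. I would also note that this argument uses only that $\Theta$ is a smooth antiunitary bundle automorphism with $\Theta^2=-1$; no spin${}^c$ structure on $T$ is needed here.
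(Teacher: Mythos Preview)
Your proof is correct and follows the same approach as the paper: the paper's argument is literally just the phrase ``Globalising the above arguments, we deduce'' applied to the fiberwise analysis preceding the proposition, and you carry out precisely this globalization. Your treatment is considerably more explicit than the paper's, particularly in justifying why the chirality condition $\gamma_1\cdots\gamma_5=+\text{id}$ is well defined and gives an orientation---your argument via the non-simplicity of $Cl_{0,5}\cong M_2(\HH)\oplus M_2(\HH)$ versus the simplicity of $M_4(\CC)$ as a real algebra is a clean way to see that the volume element must act as $\pm\text{id}$, a point the paper takes for granted from standard spin geometry.
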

A quaternionic structure always exists on a trivial Bloch bundle, and in this case, we identify $\mathcal{H}_{0,l}$ with the tangent bundle $\mathcal{E}$ as $\text{SO}(5)$-bundles. 
Thus $\{\gamma_1,\dots,\gamma_5\}$ corresponds to a positively-oriented orthonormal frame $\{e_1,\ldots,e_5\}$ for $\cE$. In terms of Clifford modules, the Clifford algebra $Cl_{0,5}$ has $S=\CC^4$ as the spinor representation in which the central chirality element $\tilde{c}(e_1e_2e_3e_4e_5)$ is the identity operator. A vector field $\vect{h}$ is quantized to an abstract Dirac-type Hamiltonian $c(\vect{h})$, which is represented concretely on $\mathscr{S}$ as the Hamiltonian $H=\tilde{c}(\vect{h})=\vect{h}\cdot \vect{\gamma}=\sum_{i=1}^5 h^i\gamma_i$ isometrically: $\langle{\vect{h}_1,\vect{h}_2}\rangle=\frac{1}{4}\text{tr}(\vect{h}_1\cdot\vect{h}_2+\ldots)=\vect{h}_1\cdot\vect{h}_2$.

\begin{remark}
The observation that the space $\mathcal{H}_{0,l}$ has structure group $\text{SO}(5)$ was made in Section 5 of \cite{ASSS2} (for trivial bundles $\mathscr{S}$). Generalizing Remark \ref{rem:gauge3D}, $S$ is the spinor representation for $\text{Spin}(5)\cong\text{Sp}(2)\subset\text{U}(4)$, and conjugating $H(k)=\vect{h}(k)\cdot\vect{\gamma}$ by a $\text{Spin}(5)$ unitary matrix $U(k)$ takes it to $H'(k)=\vect{h}'(k)\cdot\vect{\gamma}$ where $\vect{h}'(k)$ is rotated from $\vect{h}(k)$ by the $\text{SO}(5)$ matrix $R(k)$ covered by $U(k)$. The reduction of the gauge group is important: to stay within the class of Dirac-type Hamiltonians, we cannot allow all $\text{U}(4)$ gauge transformations of the Bloch bundle $\mathscr{S}$, but only those in $\text{Spin}(5)\cong\text{Sp}(2)$.
\end{remark}

\subsubsection{Quaternionic valence line bundle, local $\ZZ$-charge, and second Chern class}\label{sec:quaternionbundle}

For a 5-component vector field with finite isolated zeroes $W\subset T$, the local index at $w\in W$ is again given by the degree of the unit vector map $\hat{\vect{h}}:S^4_w\rightarrow S^4\subset\RR^5$ restricted to a small sphere around $w$. This $\ZZ$-valued local index measures the obstruction to ``gapping'' out a four-band crossing \emph{within the family of Dirac-type Hamiltonians} (or $\mathsf{T}$-invariant Hamiltonians). Away from $W$, $H(k)$ can be spectrally-flattened to $\text{sgn}(H(k))=\hat{\vect{h}}(k)\cdot\vect{\gamma}$, and the Fermi projection $\frac{1}{2}(1-\text{sgn}(H(k))$ projects onto a complex rank-2 subspace of $S=\CC^4$. This subspace is actually a \emph{quaternionic line}, i.e.\ an element of $\HH\PP^1$, due to the existence of $\Theta$ commuting with $H(k)$. Just as a unit 3-vector defines a point in $\CC\PP^1\cong S^2$ via $\vect{h}\leftrightarrow$ $(-1)$-subspace of $\vect{h}\cdot\vect{\sigma}$, we also have $\HH\PP^1\cong S^4$ via \{unit 5-vector $\vect{h}$\} $\leftrightarrow$ \{$(-1)$-subspace of $\vect{h}\cdot\vect{\gamma}$\}. 

The valence subbundle $\cE_F$ of a $\mathsf{T}$-invariant Dirac-type Hamiltonian in 5D is thus a quaternionic line bundle over $T\setminus W$. These quaternionic line bundles are pulled back from the tautological one over $\HH\PP^1=S^4$, and may have non-trivial \emph{symplectic} Pontryagin class/second Chern class in $H^4(\,\cdot\,,\ZZ)$ \cite{BH,ASSS}. 

\begin{remark}
The \emph{symplectic} Pontryagin classes $q_i$ for $\text{Sp}(n)$ bundles \cite{BH}, should be distinguished from the real Pontryagin classes $p_i$ for $\text{O(n)}$ bundles. The former comes from $H^{4i}(\text{BSp}(n),\ZZ)$ and equals $(-1)^ic_{2i}$ of the underlying complex bundle of a quaternionic bundle, while the latter comes from $H^{4i}(\text{BO}(n),\ZZ)$ and equals $(-1)^ic_{2i}$ of the complexification of the underlying real bundle of a complex bundle (e.g.\ underlying a quaternionic one). For the tautological quaternionic line bundle $\cH\rightarrow\HH\PP^1$, the symplectic Pontryagin class/second Chern class is a generator of $H^4(S^4,\ZZ)=\ZZ$.
\end{remark}

The $H^4$ part of the semimetal Mayer--Vietoris sequence 
\begin{equation}
\cdots 0\rightarrow H^4(T)\xrightarrow{\iota^*} H^4(T\setminus W) 
\xrightarrow{\beta} H^4(S^4_W)\xrightarrow{\Sigma} H^5(T)\rightarrow 0\ldots\label{MV5D}
\end{equation}
is slightly harder to interpret in terms of characteristic classes of $\cE_F$: unlike $H^2(\cdot,\ZZ)$ and complex line bundles, it is not generally true that $H^4(\cdot,\ZZ)$ classifies all quaternionic line bundles (equivalently $\text{Sp}(1)=\text{SU}(2)$-bundles) over a given space. For example, $\text{SU}(3)$ is a non-trivial $\text{SU}(2)$ bundle over $S^5$ but $H^4(S^5,\ZZ)=0$ so $c_2$ doesn't detect this bundle. Nevertheless, $H^4(\TT^5,\ZZ)\cong\ZZ^5$ with the generators coming from pulling back $\cH$ along the projection $\vect{h}:\TT^5\rightarrow\TT^4\rightarrow S^4=\HH\PP^1$ for each of the five independent choices of 4-subtori. Thus all elements of $H^4(\TT^5,\ZZ)$ do arise as the second Chern class of some valence quaternionic line bundle $\cE_F\rightarrow\TT^5$ (weak insulator invariants), and their restrictions to $H^4(\TT^5\setminus W)$ form the kernel of the local charge map $\beta$ in \eqref{MV5D}. The other elements $\omega\in H^4(\TT^5\setminus W)$ have some non-zero local charges under $\beta$. From a singular vector field $\vect{h}$ with those local charges $\beta(\omega)$, the valence bundle for $\vect{h}\cdot\vect{\gamma}$ has $c_2(\cE_F)=\omega$. Thus for $T=\TT^5$, the $H^4(\cdot,\ZZ)$ groups in \eqref{MV5D} can be interpreted as the weak invariants $c_2$ for the valence bundles $\cE_F$ of the Dirac-type Hamiltonians as constructed above.

\begin{remark}
Quaternionic line bundles are already stable for dim$(T)=5$ \cite{AR}, so we can classify the $\cE_F$ by (reduced) quaternionic $K$-theory $KSp$. Therefore, the semimetal MV-sequence in $KSp$ can be interpreted in terms of obstructions to extending a semimetal valence bundle (over $\TT^5\setminus W$) to an insulator one (over $\TT^5$).
\end{remark}

\subsection{$\mathsf{T}$-invariant and chiral symmetric Dirac-semimetal in 4D}
When we introduced $\mathsf{T}$ to $4\times 4$ Hamiltonians, we found that they had the form $H(k)=\vect{h}(k)\cdot\vect{\gamma}$ for some 5-component vector $\vect{h}(k)$. Now introduce a further (unitary) chiral symmetry $\mathsf{S}$, with $\mathsf{S}^2=1, \mathsf{S}\mathsf{T}=\mathsf{T}\mathsf{S}$ which a compatible $H$ must \emph{anticommute} with. Without loss, we can take $\mathsf{S}=\gamma_5$. Equivalently, this is an additional (antiunitary) $\mathsf{C}$ symmetry, with $\mathsf{C}\mathsf{T}=\mathsf{T}\mathsf{C}$ and $\mathsf{C}^2=-1$: take $\mathsf{C}=(-\sigma_3\otimes \im\sigma_2)\circ\kappa=-\mathsf{T}\circ\mathsf{S}$. The additional anticommutation requirement $\mathsf{S} H=-H\mathsf{S}$ (or $\mathsf{C}H=-H\mathsf{C}$) forces $H(k)$ to be a linear combination of only the first \emph{four} gamma matrices in \eqref{lineargamma}. Furthermore, the traceless condition on $H$ is now automatic.

Thus in 4D, we may identify the $(\mathsf{T},\mathsf{C})$-compatible (or $(\mathsf{T},\mathsf{S})$-compatible) $4\times 4$ Hamiltonians with the tangent bundle of $\TT^4$, via $H(k)=\vect{h}(k)\cdot\vect{\gamma}$, where $\vect{h}, \vect{\gamma}$ now have four components. Similarly, $S=\CC^4$ is the spinor representation for $Cl_{0,4}\otimes_\RR\CC\cong M_4(\CC)$ which has a quaternionic structure $\Theta=\mathsf{T}$ compatible with $Cl_{0,4}=M_2(\HH)$. The spectrum is again $\text{spec}(H(k))=\pm|\vect{h}(k)|$, vanishing at the zeroes $W\subset\TT^4$ of $\vect{h}$. The topological invariant for a four-band crossing at a zero $w\in W$ is the degree of the unit vector map $\hat{\vect{h}}:S_w^3\rightarrow S^3\in\RR^4$.

\begin{remark} There is also an analysis based on supercommutants. The requirement of having symmetry operators $\mathsf{T},\mathsf{C}$ is equivalent to having a \emph{$\ZZ_2$-graded representation} of the Clifford algebra $Cl_{3,0}$ --- $\mathsf{C},\im\mathsf{C},\im\mathsf{C}\mathsf{T}$ are the three anticommuting Clifford generators, acting irreducibly as odd operators on the superspace $\CC^4=\HH\otimes\RR^{1|1}$. The supercommutant, by the super-Schur lemma \cite{Var}, is a real superdivision algebra $Cl_{0,3}$. More explicitly, we can take
\begin{eqnarray}
Cl_{3,0}\, \text{generators}:& &\mathsf{C} = (-\sigma_3\otimes\im\sigma_2)\circ\kappa,\;\; \im\mathsf{C}=(\sigma_3\otimes\sigma_2)\circ\kappa,\;\;\im\mathsf{C}\mathsf{T} =\im\sigma_3\otimes 1=\im\gamma_5\nonumber\\
Cl_{0,3}\, \text{generators}: & & \gamma_1, \gamma_2, \gamma_3\nonumber\\
\text{Grading operator}: & & \gamma_4\nonumber
\end{eqnarray}
\end{remark}
The $(\mathsf{C},\mathsf{T})$-compatible Hamiltonians are spanned by the above $Cl_{0,3}$ generators and the grading operator.

\subsubsection{Semimetal gerbe from Hamiltonian}\label{sec:semimetalgerbe}
Having identified the $\mathsf{T}$ and $\mathsf{S}$-symmetric $4\times 4$ Hamiltonians as those which are linear combinations of $\gamma_1,\ldots, \gamma_4$, we now explain how a gerbe can be associated to such Hamiltonians.

First, it has become usual to associate a gapped chiral symmetric Hamiltonian $H$ with a unitary in half the number of dimensions as follows. In a basis in which $\mathsf{S}$ is $\begin{pmatrix} 1 & 0 \\ 0 & -1\end{pmatrix}$, a chiral symmetric Hamiltonian $H$ is off-diagonal as is $\text{sgn}(H)$. Thus $\text{sgn}(H)=\begin{pmatrix} 0 & U^\dagger \\ U & 0\end{pmatrix}$ where $U$ is unitary since $(\text{sgn}(H))^2=1$. Note that this association $H\leftrightarrow U$ is basis-dependant (under $\text{U}(2)\times\text{U}(2)$ transformations which respect $\mathsf{S}$), and for families of Hamiltonians $k\mapsto H(k)$, the topological invariants of the map $k\mapsto U(k)$, which is defined wherever $\text{sgn}(H(k))$ is defined (i.e.\ away from $W$), can change under ``large'' gauge transformations.

We require a further $\mathsf{T}$ symmetry such that $\mathsf{T}$ commutes with $\mathsf{S}$, so it is of the form $\begin{pmatrix} V & 0 \\ 0 & V'\end{pmatrix}\circ\kappa$ for some $2\times 2$ unitaries $V, V'$. Since $\mathsf{T}^2=-1$, up to a unitary transformation preserving $\mathsf{S}$, we can take $\mathsf{T}=\Theta=\begin{pmatrix} \im\sigma_2 & 0 \\ 0 & \im\sigma_2\end{pmatrix}\circ\kappa$ as in \eqref{eqn:Toperator}. Having done so, a basis for the $\mathsf{T}$ and $\mathsf{S}$ compatible Hamiltonians is $\gamma_1,\ldots, \gamma_4$ given in \eqref{lineargamma}. Note that the unitary transformations which preserve both $\mathsf{S}$ and $\mathsf{T}$ form a $\text{Sp}(1)\times\text{Sp}(1)\cong\text{Spin}(4)$ subgroup of $\text{U}(4)$, and the matrices $\gamma_1,\ldots, \gamma_4$ can be conjugated by such unitaries.

In the basis where
\begin{equation}
\gamma_1=\begin{pmatrix} 0 & -\im\sigma_1 \\ \im\sigma_1 & 0\end{pmatrix},\quad \gamma_2=\begin{pmatrix} 0 & -\im\sigma_2 \\ \im\sigma_2 & 0\end{pmatrix}, \quad \gamma_3=\begin{pmatrix} 0 & -\im\sigma_3 \\ \im\sigma_3 & 0\end{pmatrix},\quad \gamma_4=\begin{pmatrix} 0 & 1 \\ 1 & 0\end{pmatrix}\nonumber
\end{equation}
as in \eqref{lineargamma}, the spectrally-flattened Hamiltonian is $\text{sgn}(H(k))=\hat{\vect{h}}(k)\cdot\vect{\gamma}=\begin{pmatrix} 0 & U(k)^\dagger \\ U(k) & 0\end{pmatrix}$ with 
\begin{equation}
U(k)=\begin{pmatrix} h_4+\im h_3 & h_2+\im h_1 \\ -h_2+\im h_1 & h_4-\im h_3  \end{pmatrix}, \qquad h_1^2+h_2^2+h_3^2+h_4^2=1.\label{specialunitary}
\end{equation}
We recognise \eqref{specialunitary} as a parametrisation of $\text{SU}(2)$. Thus $k\mapsto U(k)$ is a map from $T\setminus W\rightarrow\text{SU}(2)$, from which we can associate a gerbe and analyse the gerbe extension problem to $T$, using the semimetal MV-sequence as in Section \ref{sec:extension}.

\begin{remark}
Gauge transformations live in $\text{Sp}(1)\times\text{Sp}(1)=\text{SU}(2)\times\text{SU}(2)$, and they effect $U(k)\mapsto V_1(k)U(k)V_2^\dagger(k)$ for $(V_1(k), V_2(k))\in\text{SU}(2)\times\text{SU}(2)$.
\end{remark}

\begin{remark}
The roles of $\mathsf{S}$ and $\text{sgn}(H)$ can be exchanged, in the sense that we can write $\text{sgn}(H)=\begin{pmatrix}1 & 0 \\ 0 & - 1\end{pmatrix}$ and $\mathsf{S}=\begin{pmatrix}0 & U^\dagger \\ U & 0\end{pmatrix}$ instead. From this point of view, $\text{sgn}(H)$ is the difference of the spectral projections $p_\pm=\frac{1\pm\text{sgn}(H)}{2}$ onto the positive and negative eigenbundles $\mathscr{S}_\pm$ (over $T\setminus W$). Since $\mathsf{S}p_\pm\mathsf{S}^{-1}=p_\mp$,  $U$ gives an isomorphism between $\mathscr{S}_+$ and $\mathscr{S}_-$, c.f.\ Section IV of \cite{Read}.
\end{remark}


\section{New classes of $\ZZ_2$ semimetals}\label{sec:torsionsemimetal}
\subsection{Bilinear Hamiltonians in 5D}
There is another interesting class of $4\times 4$ Bloch Hamiltonians in 5D: those \emph{quadratic}, or \emph{bilinear}, in $\gamma$. As explained in Section \ref{sec:Dirac5D}, a quaternionic structure $\Theta$ singles out $\mathcal{H}_{0,l}$ in $\mathcal{H}_0$ as the $\Theta$-commuting (traceless) Hamiltonians, and we identify $\mathcal{H}_{0,l}$ with the tangent bundle $\cE$ of $T$. Furthermore, the $\Theta$-\emph{anticommuting} Hamiltonians form a complementary rank-10 subbundle $\mathcal{H}_{0,q}$ to $\mathcal{H}_{0,l}$ in $\mathcal{H}_0$, also with structure group $\text{SO}(5)$. In terms of the $\gamma$-matrices, an orthonormal basis for $\mathcal{H}_{0,q}$ is given by $\im \gamma_I$ where $I=\{ij\},1\leq i<j\leq 5$ is a multi-index and e.g.\ $\gamma_{12}\coloneqq\gamma_1\gamma_2=\frac{1}{2}[\gamma_1,\gamma_2]$. We should think of $\gamma_I$ as coming from $\{e_I=e_i\wedge e_j\}_I\subset\bigwedge^2\mathcal{E}$, just as $\gamma_i$ came from an orthonormal frame $\{e_i\}_i\subset\cE$. In accordance with terminology in the physics literature, elements of $\mathcal{H}_{0,q}$ are said to be particle-hole symmetric, with $\mathsf{C}=\Theta$. Thus $\mathsf{C}$-symmetric Hamiltonians are exactly the ``$\gamma$-bilinear'' ones --- they are ($\im$ times of) quantizations of $\bigwedge^2\mathcal{E}$, and we will refer to such Hamiltonians simply as \emph{bilinear Hamiltonians}.

{\bf Hamiltonians parametrised by a pair of vector fields.} For example, a \emph{pair} of vector fields $(\vect{a},\vect{b})$ defines an abstract Hamiltonian $\im c(\vect{a}\wedge\vect{b})=\frac{\im}{2}[c(\vect{a}),c(\vect{b})]$ in $Cl(\cE,g)\otimes_\RR\CC$. This is represented on $\mathscr{S}$ as the concrete Hamiltonian $H_{\vect{a},\vect{b}}=\im\tilde{c}(\vect{a}\wedge\vect{b})=\frac{\im}{2}[\vect{a}\cdot\vect{\gamma},\vect{b}\cdot\vect{\gamma}]=\im(\vect{a}\wedge\vect{b})_I\gamma_I$. Note that $\vect{a}\wedge\vect{b}$ is geometrically just an oriented area element, and is invariant under rotations in the plane of $\vect{a},\vect{b}$. Furthermore, $\vect{a}\wedge\vect{b}=0$ iff $\vect{a},\vect{b}$ are linearly dependent. Following \cite{thomas2}, a pair of vector fields $(\vect{a},\vect{b})$ over $T$ which are linearly independent everywhere is called a \emph{tangent $2$-field}; if $(\vect{a},\vect{b})$ is defined only on the complement of a finite number of points $W$, it is called a \emph{tangent $2$-field over $T$ with finite singularities}, or a  \emph{tangent $2$-field over $T\setminus W$}.

\begin{remark}\label{rmk:twoterm}
The general element of  $\bigwedge^2\RR^5$ is not a simple tensor $\vect{a}\wedge\vect{b}$ but a sum of two such tensors $\vect{a}\wedge\vect{b}+\vect{c}\wedge\vect{d}$, which can be taken to be orthogonal to each other --- this is a linear algebraic fact about canonical forms of antisymmetric matrices 
$$A\overset{\text{SO}(5)}{=}\begin{pmatrix}0 & \lambda & & \\ -\lambda & 0 & & & \\ & & 0 & \mu & \\ & & -\mu & 0 & \\ & & & & 0\end{pmatrix}.$$
Thus we can more generally consider $H=H_{\vect{a},\vect{b}}+H_{\vect{c},\vect{d}}$ for mutually orthogonal pairs of vector fields $(\vect{a},\vect{b})$ and $(\vect{c},\vect{d})$.
\end{remark}

\subsection{A $\mathsf{C}$-invariant four-band $\ZZ_2$-semimetal in 5D}\label{sec:Z2invariant}

Interestingly, the spectrum of $\gamma$-bilinear Hamiltonians $H=H_{\vect{a},\vect{b}}+H_{\vect{c},\vect{d}}$ can be found in much the same way as the linear ones, utilising only the Clifford algebra relations. 
\begin{proposition}
Let $H=H_{\vect{a},\vect{b}}+H_{\vect{c},\vect{d}}$ be a bilinear Hamiltonian as in Remark \ref{rmk:twoterm}, and let $\lambda=|\vect{a}\wedge\vect{b}|,\, \mu=|\vect{c}\wedge\vect{d}|$.
Then $\text{spec}(H(k))=\pm(\lambda(k)\pm\mu(k))$, and all four signs occur.
\end{proposition}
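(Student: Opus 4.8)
The plan is to use the gauge freedom to put $H(k)$ into a normal form in which it becomes a linear combination of two commuting Hermitian involutions, and then read off the spectrum. Fix $k$ and abbreviate $\vect a\wedge\vect b$, etc. By the canonical form of antisymmetric matrices recalled in Remark~\ref{rmk:twoterm}, after conjugating $H(k)$ by a suitable $\text{Spin}(5)$ unitary — which only permutes the spectrum, by the $\text{SO}(5)$-covariance of $\tilde c$ extended from $\cE$ to $\bigwedge^2\cE$ (the spin cover acts on all of $Cl(\cE,g)$, hence on its bivector part) — I may assume $\vect a\wedge\vect b=\lambda\, e_1\wedge e_2$ and $\vect c\wedge\vect d=\mu\, e_3\wedge e_4$ for a positively oriented orthonormal frame $\{e_1,\ldots,e_5\}$ of $\cE_k$, with $\lambda=|\vect a\wedge\vect b|$ and $\mu=|\vect c\wedge\vect d|$. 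Writing $\gamma_i=\tilde c(e_i)$, this gives $H=\lambda A+\mu B$ with $A\coloneqq \im\gamma_1\gamma_2$ and $B\coloneqq \im\gamma_3\gamma_4$.

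Next, using only the Clifford relations $\gamma_i\gamma_j+\gamma_j\gamma_i=2\delta_{ij}$, I would check that $A^2=-(\gamma_1\gamma_2)^2=\gamma_1^2\gamma_2^2=\mathbf 1$ and likewise $B^2=\mathbf 1$, and that $A$ and $B$ commute (each is a product of two of the mutually anticommuting $\gamma_1,\ldots,\gamma_4$ with disjoint index sets). Thus $A,B$ are commuting Hermitian involutions on $S=\CC^4$. Simultaneously diagonalizing them decomposes $S$ into joint eigenspaces $V_{\epsilon_1\epsilon_2}$, $\epsilon_i\in\{+,-\}$, on each of which $H$ acts by the scalar $\epsilon_1\lambda+\epsilon_2\mu$. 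This already yields $\text{spec}(H(k))\subseteq\{\pm(\lambda\pm\mu)\}$, and letting $k$ vary recovers $\text{spec}(H(k))=\pm(\lambda(k)\pm\mu(k))$.

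To see that all four sign combinations actually occur — i.e.\ each $V_{\epsilon_1\epsilon_2}$ is nonzero — I would compute the traces $\text{tr}\,A=\text{tr}\,B=0$ (products of distinct $\gamma$'s are traceless, since $\text{tr}(\gamma_i\gamma_j)=-\text{tr}(\gamma_j\gamma_i)$ for $i\neq j$) and $\text{tr}(AB)=-\text{tr}(\gamma_1\gamma_2\gamma_3\gamma_4)=-\text{tr}\,\gamma_5=0$, using the chirality relation $\gamma_1\gamma_2\gamma_3\gamma_4=\gamma_5$ and that each $\gamma_i$ is a traceless involution. Together with $\text{tr}\,\mathbf 1=4$, the four equations for $\text{tr}(A^aB^b)$, $a,b\in\{0,1\}$, form an invertible linear system in $(\dim V_{++},\dim V_{+-},\dim V_{-+},\dim V_{--})$ forcing each to equal $1$. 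Hence every $\epsilon_1\lambda+\epsilon_2\mu$ is genuinely an eigenvalue, with multiplicity one generically (i.e.\ when $\lambda\neq\mu$ and $\mu\neq 0$), as claimed.

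I do not expect a serious obstacle here: the argument is essentially Clifford-algebra bookkeeping once the normal form is in place. The only point needing a little care is the reduction to normal form — one must be sure that conjugating by the $\text{Spin}(5)$ lift genuinely leaves the spectrum unchanged and realizes the $\text{SO}(5)$ action on bivectors, which follows from the discussion in Section~\ref{sec:Dirac5D}. Everything else, including the trace computation, is routine.
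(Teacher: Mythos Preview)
Your proof is correct. It differs from the paper's in execution, though the two are close in spirit. The paper does not first pass to a normal form and diagonalize commuting involutions; instead it squares $H$ directly at the Clifford algebra level to obtain $H^2=\lambda^2+\mu^2-2\tilde c(\vect a\wedge\vect b\wedge\vect c\wedge\vect d)$, then invokes the chirality condition $\tilde c(e_1e_2e_3e_4e_5)=1$ to identify the 4-vector term with $\lambda\mu\,\tilde c(e_5)$, and reads off $\text{spec}(H^2)=(\lambda\pm\mu)^2$ from the known spectrum of $\gamma_5$. Your route writes $H=\lambda A+\mu B$ with $A=\im\gamma_1\gamma_2$, $B=\im\gamma_3\gamma_4$ commuting Hermitian involutions and diagonalizes them simultaneously; note that your product $AB=-\gamma_5$, so the two arguments are really using the same Clifford identity from different ends. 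What your approach buys is a direct read-off of the eigenvalues of $H$ (no square-root extraction) and an explicit multiplicity count via the trace system, whereas the paper's squaring argument is slightly more invariant (the frame is introduced only at the very end) but leaves the ``all four signs occur'' step implicit in the double degeneracy of $\tilde c(e_5)$ together with tracelessness of $H$. Both are straightforward Clifford bookkeeping; the only minor simplification available to you is that since Remark~\ref{rmk:twoterm} already hands you mutually orthogonal pairs, you need not invoke the full $\text{Spin}(5)$ conjugation---simply choosing the orthonormal frame adapted to the two planes suffices.
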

\begin{proof}Let $\lambda=|\vect{a}\wedge\vect{b}|, \mu=|\vect{c}\wedge\vect{d}|$. At the Clifford algebra level,
\begin{eqnarray}
c(\vect{a}\wedge\vect{b}+\vect{c}\wedge\vect{d})^2&=&c(\vect{a}\wedge\vect{b})^2+c(\vect{c}\wedge\vect{d})^2+2c(\vect{a}\wedge\vect{b}\wedge\vect{c}\wedge\vect{d})\nonumber\\
&=& -\lambda^2-\mu^2+2c(\vect{a}\wedge\vect{b}\wedge\vect{c}\wedge\vect{d}).
\end{eqnarray}
Thus, the square of $H=\im\tilde{c}(\vect{a}\wedge\vect{b}+\vect{c}\wedge\vect{d})$ is $H^2=\lambda^2+\mu^2-2\tilde{c}(\vect{a}\wedge\vect{b}\wedge\vect{c}\wedge\vect{d})$. If $\lambda$ or $\mu$ is zero, we are done. Otherwise, $\vect{a}\wedge\vect{b}\wedge\vect{c}\wedge\vect{d}=\lambda\mu\,\tilde{c}(e_1e_2e_3e_4)=\lambda\mu\,\tilde{c}(e_5)$ for some positively-oriented orthonormal frame $\{e_i\}_{i=1,\ldots,5}$, where we have suppressed the wedge product notation and used the chirality condition $\tilde{c}(e_1e_2e_3e_4e_5)=1$. Since $\tilde{c}(e_5)$ has spectrum $\pm 1$ with each doubly-degenerate, it follows that $\text{spec}(H^2)=\lambda^2+\mu^2\pm 2\lambda\mu=(\lambda\pm\mu)^2$ and $\text{spec}(H)=\pm(\lambda\pm\mu)$, where we have suppressed the dependence on $k\in T$.
\end{proof}

The zeroes of $H$ are easily read off, and there are two types: (1) when $\lambda(k)=\mu(k)=0$, and (2) when $\lambda(k)=\mu(k)\neq 0$. The former involves a four-fold degenerate zero, so all four bands cross, whereas the latter involves crossing between only a pair of bands (Fig.\ \ref{fig:four-band-crossing}).

\begin{figure}
    \centering
    \includegraphics[width=0.7\textwidth]{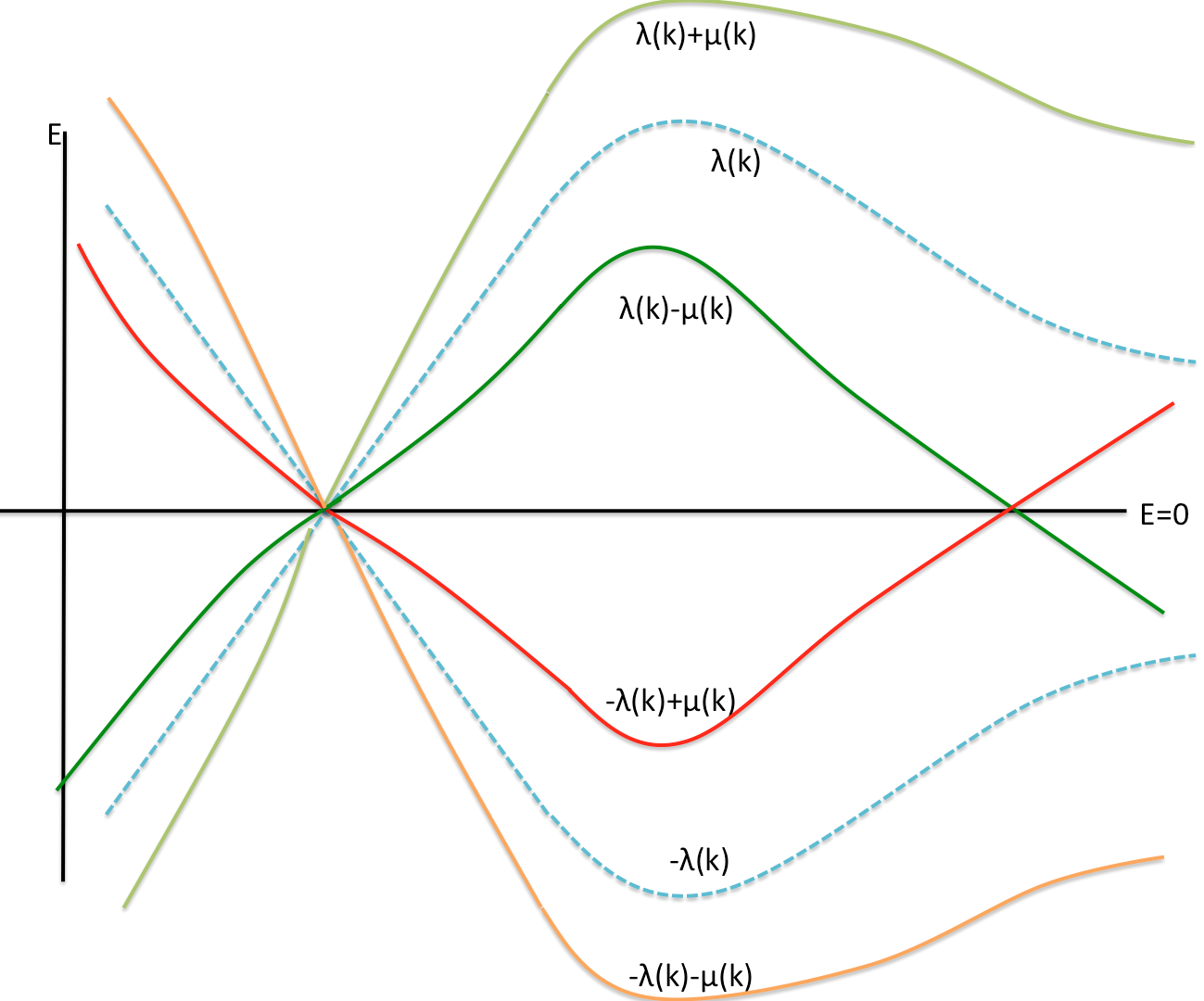}
    \caption{Spectrum of $\gamma$-quadratic Hamiltonians. If $\mu=0$, the spectrum degenerates to the dotted blue lines, which are each 2-fold degenerate.}
    \label{fig:four-band-crossing}
\end{figure}

When $\mu$ is zero, the four-band spectrum becomes a pair of doubly-degenerate bands, and these degenerate pairs cross exactly at the points where $\lambda=0$, much like the spectrum of the Dirac-type Hamiltonians which are linear in $\gamma$. We are interested in such ``bilinear Hamiltonians'', and they are precisely the pure tensor ones $H_{\vect{a},\vect{b}}(k)=\frac{\im}{2}[\vect{a}(k)\cdot\vect{\gamma},\vect{b}(k)\cdot\vect{\gamma})]=\im\sum_I (\vect{a}(k)\wedge\vect{b}(k))_I\gamma_I$, specified by a \emph{single} pair of vector fields $(\vect{a}\wedge\vect{b})$, with $\text{spec}(H_{\vect{a},\vect{b}}(k))=\pm|\vect{a}\wedge\vect{b}|(k)=\pm\lambda(k)$.

We will \emph{assume} that the singularities (points where $\vect{a}\wedge\vect{b}$ vanishes) form a finite set of isolated points $W$. In other words, $(\vect{a},\vect{b})$ defines a tangent 2-field over $T\setminus W$. At a singularity $w\in W$, there may be topological obstructions to extending $(\vect{a},\vect{b})$ across $w$, in which case, we cannot open up a gap at the four-band crossing at $w$ by simply deforming $\lambda$ into a strictly positive function. Despite the spectrum resembling that of a (linear) Dirac-type Hamiltonian, this type of four-band crossing can be topologically protected by a very different and much more subtle mechanism, see Section \ref{sec:ATinvariant}.

\begin{remark}
We may also consider bilinear Hamiltonians $H_{\vect{a},\vect{b}}$ specified by a pair of vector fields $(\vect{a},\vect{b})$ over a $d$-manifold with $d\neq 5$. However, it is no longer the case that a $\mathsf{C}$-symmetry singles out such Hamiltonians, and one needs to motivate such model Hamiltonians in a different way. They may nevertheless be studied as a toy model.
\end{remark}

\subsection{Atiyah--Thomas $\ZZ_2$-invariant}\label{sec:ATinvariant}
A detailed study of tangent 2-fields with finite singularities was carried out in \cite{thomas,thomas2,AD,Korbas}, where analogues of the Poincar\'{e}--Hopf theorem for vector fields were obtained. For this, we need the notion of the local index at a singularity $w\in W$ of a tangent 2-field $(\vect{a},\vect{b})$ on $T\setminus W$. Let $S_w^4$ be a small oriented 4-sphere surrounding $w\in W$, then $(\vect{a},\vect{b})$ gives a map $S_w^4\rightarrow W_{5,2}$ to the non-compact Stiefel manifold $W_{5,2}=\text{GL}(5,\RR)/\text{GL}(3,\RR)$ of 2-frames in $\RR^5$. There is a homotopy equivalence between $W_{5,2}$ and the (compact) Stiefel manifold $\mathcal{V}_{5,2}=\text{SO}(5)/\text{SO}(3)$ of orthonormal 2-frames in $\RR^5$. The \emph{local index of $(\vect{a},\vect{b})$ at $w$} is defined to be the homotopy class of $(\vect{a},\vect{b})|_{S_w^4}$ in $\pi_4(W_{5,2})\cong\pi_4(\mathcal{V}_{5,2})\cong\ZZ_2$, and is the local obstruction to deforming $(\vect{a},\vect{b})$ to remove the singularity at $w$.

Recall from the previous subsection the construction of a bilinear Dirac Hamiltonian $H_{\vect{a},\vect{b}}$ from a tangent 2-field $(\vect{a},\vect{b})$ over $T$ with finite singularities $W$, and that four-band crossings occur precisely at $W$. The local index at $w$ of $(\vect{a},\vect{b})$ has the physical interpretation as a $\ZZ_2$ local topological charge protecting the band crossing $w\in W$.

To understand $\pi_4(\mathcal{V}_{5,2})\cong \ZZ_2$ better, recall the fibration 
\begin{equation}
S^3=\mathcal{V}_{4,1}\rightarrow\mathcal{V}_{5,2}\rightarrow\mathcal{V}_{5,1}=S^4,\nonumber
\end{equation}
where the $S^4$ base parametrises the choice of a first normalised vector $\hat{\vect{a}}$, and the fiber parametrises (locally) the choice of $\hat{\vect{b}}$ orthonormal to $\hat{\vect{a}}$. From the homotopy exact sequence,
\begin{equation}
\ldots \rightarrow\pi_4(S^3)=\ZZ_2\rightarrow \pi_4(\mathcal{V}_{5,2})=\ZZ_2\rightarrow \pi_4(S^4)=\ZZ\rightarrow \pi_3(S^3)=\ZZ\rightarrow\ldots\nonumber
\end{equation}
we see that $\pi_4(\mathcal{V}_{5,2})=\ZZ_2$ comes from the inclusion of the fiber $S^3=\mathcal{V}_{4,1}$ which has the famous $\pi_4(S^3)=\pi_4(\text{SU}(2))=\ZZ_2$.\nonumber

Thus, we can construct an explicit example of a $\ZZ_2$-protected four-band crossing as follows. Choose some local Euclidean coordinates $k$ for a neighbourhood $\mathcal{O}$ of $w$ centred at 0, and a trivialization $e_1,\ldots,e_5$ of the tangent bundle above $\mathcal{O}$. Take $\hat{\vect{a}}=e_1$, and for $k\in S_w^4$, we choose $\hat{\vect{b}}$ orthonormal to $\hat{\vect{a}}$ by letting $k\mapsto\hat{\vect{b}}(k)\in S^4$ be given by a generator of $\pi_4(\mathcal{V}_{4,1})=\pi_4(S^3)$. Extend to orthogonal vectors $\vect{a}, \vect{b}$ to all of $\mathcal{O}$ by scaling by $|k|$. Thus $H_{\vect{a},\vect{b}}(k)=\im|k|\sum_I(\hat{\vect{a}}\wedge\hat{\vect{b}}(k))_I\gamma_I$ on $\mathcal{O}\setminus\{w\}$ and $H_{\vect{a},\vect{b}}(w)=0$. Then $w$ is a four-band crossing with non-trivial $\ZZ_2$-charge.

\begin{remark}
By taking $\frac{\vect{a}\wedge\vect{b}}{|\vect{a}\wedge\vect{b}|}$ we obtain a \emph{2-plane field}\footnote{2-plane fields are also called \emph{oriented $2$-distributions}, see \cite{Korbas} for a guide to $k$-distributions on manifolds.} defining for each $k\in T\setminus W$ a point in the oriented Grassmannian $\cG^+_{5,2}$ of 2-planes in $\RR^5$. For a 2-plane field over $T\setminus W$, there is similarly a map $S_w^4\rightarrow\mathcal{G}^+_{5,2}$, and a local index can be defined at $w\in W$. Note that $\mathcal{G}^+_{5,2}=\mathcal{V}_{5,2}/\text{SO}(2)$, and at the level of $\pi_4$ the exact sequence
\begin{equation}
\ldots \rightarrow 0=\pi_5(\text{SO}(2))\rightarrow\pi_4(\mathcal{V}_{5,2})\xrightarrow{\cong}\pi_4(\mathcal{G}^+_{5,2})\rightarrow 0=\pi_4(\text{SO}(2))\rightarrow\ldots\nonumber
\end{equation}
says that the index is in $\pi_4(\mathcal{V}_{5,2})\cong\pi_4(\mathcal{G}^+_{5,2})\cong\ZZ_2$. Although we can consider bilinear Hamiltonians specified by a field of oriented area elements, i.e. a 2-plane field along with areas (magnitudes), which will also have $\ZZ_2$ local indices, we only consider those specified by a tangent 2-field (with finite singularities) in this paper. 

\end{remark}

For tangent 2-fields in $d\geq 3$-dimensions, the local topological index at a point singularity is defined similarly, as an element of $\pi_{d-1}(\cV_{d,2})$. These homotopy groups are known to be $\ZZ_2$ when $d$ is odd and $\ZZ\oplus \ZZ_2$ when $d$ is even \cite{paechter}. Unlike the case of vector fields, the global constraint for these indices depends on $d$ (mod 4) \cite{thomas,thomas2,Atiyah,AD} and in the following subsection, we will restrict to $d=1$ (mod 4) .

\subsection{Kervaire semicharacteristic, Kervaire chains and $\ZZ_2$-Fermi arcs}\label{sec:Kervairestructures}
For this section, we assume that $T$ is a compact oriented $d$-manifold $T$ with $d=1$ (mod 4) and $d\geq 5$. Most of the constructions of Section \ref{sec:Euler} can be carried out for $\ZZ_2$ coefficients, and there is also a generalisation of the Poincar\'{e}--Hopf theorem for tangent 2-fields and 2-plane fields with finite singularities \cite{thomas,thomas2,Atiyah,AD}. This suggests that Hamiltonians parametrised by a tangent 2-field, such as the 5D $\ZZ_2$-Dirac semimetals introduced in Section \ref{sec:Z2invariant}, can have topologically protected crossings which are furthermore constrained by the global Atiyah--Dupont condition. If $\kappa(T)=0$, we expect there to be {\em $\ZZ_2$-charge cancellation} and also a notion of {\em $\ZZ_2$-Fermi arcs}. 

Let us simplify notation and write $\cP$ for the tangent 2-field $(\vect{a},\vect{b})$ over $T\setminus W$, and $\text{Ind}_w(\cP)$ for its local $\ZZ_2$ charge at $w\in W$, which is assumed to be a finite set as before. The sum $\text{Ind}(\cP)=\sum_{w\in W}\text{Ind}_w(\cP)$ is a global obstruction to deforming $\mathcal{P}$ to a tangent 2-field on all of $T$. In \cite{AD}, Atiyah--Dupont proved that $\text{Ind}(\cP)$ is equal to the \emph{real Kervaire semicharacteristic}
$\kappa(T)$, defined as
\begin{equation}
\kappa(T)=\sum_{q\,\text{even}} \text{dim}_\RR \,H^q(T,\RR)\;\;\text{mod }2.\nonumber
\end{equation}
Thus $\text{Ind}(\cP)=\kappa(T)$ for any $\cP$. 
It is known that the manifold $T$ admits tangent 2-fields $\cP$ with {\em finite} singularity set $W$ if and only if the Stiefel--Whitney class $w_{d-1}(T)$ vanishes,
(\cite{thomas2}), which holds for instance when $T=\TT^d$, in which case the Kervaire semicharacteristic vanishes, $\kappa(\TT^d)=0$.

\begin{remark}
Although we have assumed the bilinear Hamiltonians to be specified by tangent 2-fields, we could also work more generally with 2-fields for other vector bundles using a generalisation of the Atiyah--Dupont theory in \cite{TangZhang}. 
\end{remark}

\begin{definition}
Let $T$ be a compact oriented $d$-manifold $T$ with $d=1$ (mod 4) and $\kappa(T)=0$. For a 0-chain $\cW\in C_0(T,\ZZ_2)$ defined on $W$ and with total weight zero, a \emph{Kervaire chain} for $\cW$ is a 1-chain $l\in C_1(T,\ZZ_2)$ such that $\partial l=\cW$, thus $[l]\in H_1(T,W,\ZZ_2)$. We write $\kerv(T,\cW)$ for the subset of homology classes of Kervaire chains for $\cW$. 
\end{definition}

The $\ZZ_2$-versions of the semimetal MV sequence and the dual homology sequence are
\begin{equation}
\small
    \xymatrix{ 0\ar[r]&H^{d-1}(T,\ZZ_2)\ar[d]_{\text{PD}}\ar[r] & H^{d-1}(T\setminus W,\ZZ_2)\ar[d]_{\text{PD}}\ar[r]^{\beta} &H^{d-1}(S_W,\ZZ_2)\ar[d]_{\text{PD}}\ar[r]^{\Sigma} & H^d(T,\ZZ_2)\ar[d]_{\text{PD}}\ar[r] & 0\\
   0\ar[r] & H_1(T,\ZZ_2)\ar[r]& H_1(T,W,\ZZ_2)\ar[r]^{\partial} & H_0(W,\ZZ_2) \ar[r]^{\Sigma} & H_0(T,\ZZ_2)\ar[r] & 0
    }.\nonumber
\normalsize
\end{equation}
As before, we define $\cokerv(T,\cW)\subset H^{d-1}(T\setminus W,\ZZ_2)$ to be the Poincar\'{e} dual image of $\kerv(T,\cW)$. We will then explain how a tangent 2-field with finite singularity set $W$ has a topological invariant in $\cokerv(T,\cW)$ and therefore a Kervaire chain representation.

\begin{definition}
Let $T$ be a compact oriented $d$-manifold $T$ with $d=1$ (mod 4) and $\kappa(T)=0$, and $\cS $ be the sphere bundle for its tangent bundle. A \emph{Kervaire structure} for $T$ is a class in $H^{d-1}(\cS ,\ZZ_2)$ whose restriction to each fibre $\cS_k, k\in T$ generates $H^{d-1}(S^{d-1},\ZZ_2)$. Define $\widetilde{\cokerv}(T)$ to be the set of Kervaire structures for $T$.
\end{definition}
From the Gysin sequence \eqref{gysin} with $\ZZ_2$-coefficients, we see that $H^{d-1}(T,\ZZ_2)$ acts on $\widetilde{\cokerv}(T)$ freely and transitively by pullback and addition. Let $\mV_{d,2}T$ be the bundle of Stiefel manifolds $\cV_{d,2}$ associated to the orthonormal frame bundle of $T$ (endowed with some Riemannian metric). Corresponding to the fibration $S^{d-2}=\cV_{d-1,1}\rightarrow \cV_{d,2}\rightarrow \cV_{d,1}=S^{d-1}$, there is a fibration
\begin{equation}
	S^{d-2}\rightarrow\mV_{d,2}\overset{\rho}{\longrightarrow}\cS .\nonumber
\end{equation}
A (non-singular) tangent 2-field $\cP$ on $T$ gives rise to a Kervaire structure as follows. Let $\widehat{\cP}$ denote the section of $\mV_{d,2}$ determined by $\cP$, which is a $d$-cycle on $\mV_{d,2}$. The Poincar\'{e} dual is a $(2d-3)$-cocycle on $\mV_{d,2}$, and pushforward under $\rho$ gives a $(d-1)$-cocycle in $\cS $. Furthermore, the construction of the cocycle is such that its restriction to $\cS_k$ generates $H^{d-1}(S^{d-1},\ZZ_2)$ for each $k\in T$.

Now consider tangent 2-fields on $T$ with finite singularities $W$. As in Section \ref{sec:semimetalEulerchain}, we can define a local charge operator $$\tilde{\beta}:H^{d-1}(\cS|_{T\setminus W},\ZZ_2)\rightarrow H^{d-1}(S_W,\ZZ_2)\overset{\text{PD}}{\longleftrightarrow} H_1(W,\ZZ_2).$$ Then for each 0-chain $\cW$ with total charge zero, we define $\widetilde{\cokerv}(T,\cW)$ to be the subset of $H^{d-1}(\cS_{T\setminus W},\ZZ_2)$ whose restriction to each fibre over $k\in T\setminus W$ generates $H^{d-1}(S^{d-1},\ZZ_2)$, and whose local charges is $\text{PD}(\cW)$. Each $\widetilde{\cokerv}(T,\cW)$ is an affine space for $H^{d-1}(T,\ZZ_2)$. Suppose $\cP$ is a tangent 2-field on $T\setminus W$ with local charge 0-chain $\cW$, then it determines a $d$-cycle in $\mV_{d,2}|_{T\setminus D_W}$ relative to the boundary $\mV_{d,2}|_{S_W}$, which Poincar\'{e} dualises to a $(2d-3)$-cocycle on $\mV_{d,2}|_{T\setminus D_W}$. Pushforward under $\rho$ gives a $(d-1)$-cocycle on $\cS|_{T\setminus D_W}$ representing a class in $\widetilde{\cokerv}(T,\cW)$. Fix a reference non-singular vector field $\vect{h}_{\text{ref}}$, then $\vect{h}_{\text{ref}}^*$ identifies $\widetilde{\cokerv}(T,\cW)$ with $\cokerv(T,\cW)$. 

The following diagram summarises the various affines spaces for $H_1(T,\ZZ_2)\cong H^{d-1}(T,\ZZ_2)$:
\begin{equation}
\xymatrix{  & H_1(T,\ZZ_2)\ar[r] \ar[d]& \kerv(T,\cW)\ar[l] \ar[d]& \subset\quad\;\; H_1(T,W,\ZZ_2) \\
 & H^{d-1}(T,\ZZ_2)\ar[r]\ar[d] \ar[u]& \cokerv(T,\cW)\ar[l] \ar[d]\ar[u]& \subset  H^{d-1}(T\setminus W,\ZZ_2) \\
H^{d-1}(\cS,\ZZ_2)\supset & \widetilde{\cokerv}(T)\ar[r] \ar[u]& \widetilde{\cokerv}(T,\cW)\ar[l] \ar[u]& \subset H^{d-1}(\cS|_{T\setminus W},\ZZ_2)
}.\nonumber
\end{equation}

\begin{definition}
Let $H$ be a bilinear Hamiltonian specified by a tangent 2-field $\cP$ over a compact oriented 5-manifold $T$ with $\kappa(T)=0$, with finite singularities $W$. It has a cohomological topological invariant $[\omega_{\cP}]\in H^4(T\setminus W,\ZZ_2)$ defined as in the previous paragraph, and its \emph{Kervaire chain representation} is the Poincar\'{e} dual $[l_\cP]\in H_1(T,W,\ZZ_2)$.
\end{definition}

\subsubsection*{$\ZZ_2$-Fermi arcs}
The bulk-boundary correspondence proceeds exactly as before, taking a Kervaire chain for $\cW$ to a ``$\ZZ_2$-Fermi arc'' with topological invariant in $H_1(\widetilde{T},\widetilde{W},\ZZ_2)$. Unlike the usual $\ZZ$-Fermi arcs, the $\ZZ_2$-Fermi arcs cancel in pairs so that there is at most one arc between two projected band crossings, and the arc never fully winds around a 1-cycle in $\wt{T}$.

\vspace{1em}

\remark{It is known that $\mathsf{T}$-invariant insulators in 2D and 3D, where $\mathsf{T}$ now refers to ordinary time-reversal rather than the combination $\mathsf{T}\mathsf{P}$), are classified by certain $\ZZ_2$-valued \emph{Fu--Kane--Mele} (FKM) invariants \cite{FKM,dNG}. In ongoing work, we are applying the ideas of this paper to understand $\mathsf{T}$-invariant semimetals, which were experimentally realised in \cite{Lv,Xu1,Kauf}. In this setting, the band crossings appear at conjugate-momentum pairs, and the points $w_i, \tau(w_i)$ in each pair have the same \emph{integer} charge. The relation of these local $\ZZ$ charges to FKM $\ZZ_2$-invariants is a subtle generalisation of the relation to weak Chern numbers in the $\mathsf{T}$-broken case. We emphasize that the $\ZZ_2$ invariants there are mathematically \emph{different} from the ones due to Atiyah--Dupoint--Thomas and the Kervaire semicharacteristic, and arise under physically different symmetry conditions.}

\section{Outlook and acknowledgements}\label{sec:Outlook}
The topological semimetal-insulator relationship encoded in the semimetal MV sequence can, in principle, be refined using an MV sequence in \emph{differential cohomology} \cite{Simons}. Another refinement comes from additional spatial symmetries. For example, a spatial inversion symmetry $\mathsf{P}$ induces an action on the Brillouin torus $T$ under which $W$ should be invariant. The same action arises if $\mathsf{T}$ rather than $\mathsf{T}\mathsf{P}$ symmetry is present. The vector field $\vect{h}$ needs to be suitably equivariant, and the local index at $w\in W$ is an equivariant degree. Finally, the $K$-theory MV sequence can be used (with $T\setminus W$ replaced by $T\setminus D_W$) to analyse stable topological semimetal invariants. This paves the way for the development of a new type of T-duality transformation $K^*(T\setminus D_W)\Leftrightarrow K^{*+d}(T,W)$, which would provide a T-dual picture of semimetals along the lines of \cite{MT2,MT3,MT1} in the case of insulators. 

We end by posing the question: associated to the MV sequence analysis in the case of Kervaire semimetals (Section \ref{sec:Kervairestructures}), is there a geometric extension problem?

\bigskip

\noindent{\it Acknowledgements}. We thank Keith Hannabuss for some helpful discussions, and Siye Wu for his useful criticisms. This work was supported by the Australian Research Council via ARC Discovery Project grant DP150100008 and ARC DECRA grant DE170100149.


\begin{appendices}

\section{Hopf bundles, basic gerbe, Stiefel manifold}\label{appendix}

To be self-contained, we recall here some universal constructions of geometric objects with connections that are used in the main text.\\

\noindent {\bf The Hopf line bundle over $\CC \PP^1$.} (Sections \ref{sec:physback} and \ref{sec:extension}) Recall that the 2D-sphere $\CC \PP^1$ is defined as the space of complex lines in $\CC^2$. The Hopf line bundle $\cL$ over $\CC \PP^1$ is defined as the sub-bundle of the trivial bundle $\CC \PP^1 \times \CC^2$ given by
$$
\cL = \{(\ell, v) \in \CC \PP^1 \times \CC^2 \big| v\in \ell\}.
$$
As a sub-bundle of the trivial bundle $\CC \PP^1 \times \CC^2$
(which has the obvious Hermitian metric and unitary connection) $\cL$ comes equipped with a Hermitian metric and unitary connection $\nabla^\cL$.  Explicitly, a natural projection defining $\cL$ is given by $P(\vect{n})= \frac{1}{2}\left(1+ \vect{n}\cdot\vect{\sigma}\right)$,
where $\vect{n} \in S^2$ is the unit vector in $\RR^3$ and $\vect{\sigma}=(\sigma_1, \sigma_2, \sigma_3)$ are the Pauli matrices, $\sigma_j \in M_2(\CC)$ for $j=1,2,3$. 
Then  $\nabla^\cL= PdP$ and the curvature is $F_\cL = PdPdP.$

The 1st Chern class classifying $\cL$ up to isomorphism is, $$c_1(\cL) = \frac{1}{2\pi i}\int_{\CC \PP^1} F_{\cL} =1.$$ 

\noindent {\bf The basic gerbe over $S^3$.} (Sections  \ref{sec:semimetalgerbe}  and \ref{sec:extension})
We begin by briefly recalling the definition of bundle gerbes (cf. \cite{MS}), which are 
geometric examples of more abstract gerbes (cf. \cite{Brylinski}). They are higher analogs of line bundles,
and are classified up to (stable) equivalence by the Dixmier--Douady class in integer-valued degeree 3 cohomology (which is the analog of the 
Chern class for line bundles).
Bundle gerbes $(\cG, \cL)$ are specified by the following data:

\begin{itemize}
\item A smooth fiber bundle of manifolds
\begin{equation*}
\xymatrix{
Z\ar[r]&Y\ar[d]^\pi\\
& X.
}
\end{equation*}
where $Z, Y$ are not required to be finite dimensional.

\item A primitive line bundle  $\cL$ over $Y^{[2]} = Y\times_X Y$, in the sense that under
lifting by the three projection maps $\pi_j: Y^{[3]} \to Y^{[2]}$, which omits the 
$j$-th factor,
\begin{equation*}
\xymatrix{Y^{[3]}\ar@<2ex>[r]^{\pi_1}
\ar[r]^{\pi_2}
\ar@<-2ex>[r]^{\pi_3}& Y^{[2]}}
\end{equation*}
%
there is an isomorphism of line bundles on $Y^{[3]}$
\begin{equation}\label{gerbeprod}
\pi_3^*\cL\otimes\pi_1^*\cL\cong \pi_2^*\cL.
\end{equation}

\item The equation above can be viewed as a product, and this product is required to be associative, which is a compatibility condition on $Y^{[4]}$,
\begin{equation*}
\xymatrix{
 \cL_{(y_1,y_2)} \otimes \cL_{(y_2,y_3)}  \otimes \cL_{(y_3,y_4)}\ar[r]\ar[d]^-{\sigma}& \cL_{(y_1,y_3)}  \otimes \cL_{(y_3,y_4)} \ar[d]^-{\sigma}\\
\cL_{(y_1,y_2)}  \otimes \cL_{(y_2,y_4)} \ar[r]&\cL_{(y_1,y_4)},
}
\end{equation*}
for all $(y_1, y_2, y_3, y_4) \in Y^{[4]}$.
\end{itemize}
A {\em connection} on such a bundle gerbe is a unitary connection on the line bundle $\cL$ with curvature equal to $\pi_2^*(B)-\pi_1^*(B)$, for some 2-form $B$ on $Y$. Necessarily, $dB=\pi^*H$, where $H$ is a closed 3-form on $X$ called the curvature of the gerbe $(\cG,\cL)$, and is a de Rham representative of the Dixmier--Douady class.

Next we describe the {\em basic (bundle) gerbe} on $\text{SU}(2)$, used in the main text.
We identify $S^3\cong \text{SU}(2)$ and consider $Y \subset \text{U}(1)\times \text{SU}(2)$ defined by 
$$
Y = \{(z, g) \in \text{U}(1)\times \text{SU}(2) \big| z\ne 1, z \notin {\rm spec}(g)\}
$$
Then the projection to the 2nd factor is a submersion $pr_2:Y\longrightarrow \text{SU}(2)$ and the fibred product
$$
Y^{[2]} = \{(z_1, z_2, g) \in \text{U}(1)^2  \times \text{SU}(2) \big| z_1, z_2 \ne 1, z_1, z_2 \notin {\rm spec}(g)\}.
$$
For $(\lambda, g) \in \text{U}(1)\times \text{SU}(2), \, \lambda\ne 1$, let $E_{(\lambda, g)}$ denote the $\lambda$ eigenspace of $g$
and 
$$
E_{(z_1, z_2, g)} = \bigoplus E_{(\lambda, g)}
$$
where $\lambda$ is an eigenvalue lying in the open arc component in $\text{U}(1)$ joining $z_1, z_2$ that does not contain $1$. Let $\cL_{(z_1, z_2, g)} = \det(E_{(z_1, z_2, g)})$ denote the determinant line, which is the highest exterior power. Then 
$\cG=(\cL, Y)$ is a bundle gerbe description of the basic gerbe over $S^3\cong \text{SU}(2)$, see \cite{MS}, \cite{Gaw}. It comes with gerbe connection (Theorem 5.1 \cite{MS2}), with 2-form $B$ on $Y$ defined as,
\begin{equation} 
B_{(z,g)} = \frac{1}{8\pi^2}\oint_{C_{(z,g)}}
\log_z \xi \, {\rm tr}\left((\xi - g)^{-1}dg(\xi - g)^{-2}dg\right) d\xi.
\end{equation}
Here  for any $(z,g)\in Y$,  $C_{(z,g)} \subset \CC \setminus R_z$  is a choice of anti-clockwise 
closed contour which encloses ${\rm spec}(g)$, where for each complex 
number $z$ with $|z| = 1$, $R_z$ denotes the closed ray from the origin through $z$, and
one defines the branch of the logarithm, $\log_z \colon \CC \setminus R_z \to \CC $,  
by making the cut along $R_z$ and also setting $\log_z(1) = 0$.

Then the curvature $H$ of the gerbe $(\cG, \cL)$  is a universal multiple of the Wess--Zumino--Witten (WZW) integrand 
${\rm tr}((g^{-1} dg)^3)$ on  $\text{SU}(2)$, whose Dixmier--Douady number classifying the gerbe up to (stable) equivalence is,
$$
DD(\cG) = - \frac{1}{24\pi^2} \int_{\text{SU}(2)} {\rm tr}((g^{-1} dg)^3) =1.
$$
\\
\noindent {\bf The quaternionic Hopf line bundle over $\HH \PP^1$.} (Sections \ref{sec:5Dsemimetal} and \ref{sec:extension}) Recall that the 4D-sphere $\HH \PP^1$ is defined as the space of quaternionic lines in $\HH^2$ and that the quaternionic Hopf line bundle $\cH$ over $\HH \PP^1$ is defined as the sub-bundle of the trivial bundle $\HH \PP^1 \times \HH^2$ given by
$$
\cH = \{(\ell, v) \in \HH \PP^1 \times \HH^2 \big| v\in \ell\}.
$$
As a sub-bundle of the trivial bundle $ \HH \PP^1 \times \HH^2$ (which has the obvious Hermitian metric and unitary connection) $\cH$ comes equipped with a Hermitian metric and unitary connection  $\nabla^\cH$. 
Explicitly, a natural projection defining $\cH$ is given by $P(\vect{n})= \frac{1}{2}\left(1+ \vect{n}\cdot\vect{\gamma}\right)$,
where $\vect{n} \in S^4$ is the unit vector in $\RR^5$ and $\vect{\gamma}=(\gamma_1, \gamma_2, \gamma_3, \gamma_4, \gamma_5)$ are gamma matrices, with $\gamma_j \in M_2(\HH)\cong Cl_{0,4}$ for $j=1,2,3,4$ generating the Clifford algebra and $\gamma_5=\gamma_1\gamma_2\gamma_3\gamma_4$. 
Then  $\nabla^\cH= PdP$ and the curvature is $F_\cH = PdPdP.$

The 2nd Chern number classifying $\cH$ up to isomorphism is, 
$$c_2(\cH) =\frac{1}{4\pi^2} \int_{\HH \PP^1} {\rm tr}(F_{\cH}\wedge F_{\cH}) =1.$$ 
\\

\noindent {\bf The Stiefel manifold $\cV_{5,2}$.} (Section \ref{sec:torsionsemimetal}) It can be defined as the space of all orthonormal 2-frames in $\RR^5$. As a homogeneous space, $\cV_{5,2} = SO(5)/SO(3)$, where we notice that it is a compact, connected, oriented $7$-dimensional manifold,  which can be identified with the unit sphere $S(S^4)$ subbundle  of the tangent bundle of the sphere $S^4$. So 
$\cV_{5,2}$ is a non-principal $SO(4)/SO(3)=S^3 $-bundle over $S^4$ with Euler class equal to $2$.
The Pontryagin number of the bundle $S^3 \hookrightarrow \cV_{5,2} \twoheadrightarrow S^4$ is just the Pontryagin number of $S^4$, which is zero since 
the sphere $S^4$ is the oriented boundary of the $5$ dimensional ball. By \cite{Milnor}, the Euler and Pontryagin numbers together determine the bundle up to isomorphism.
Via the long exact sequence for the fibration $S^3 \hookrightarrow \cV_{5,2} \twoheadrightarrow S^4$, we see that there is a natural isomorphism, (this is related to \cite{LinYau})
$$\pi_4(\cV_{5,2}) \cong \pi_4(S^3)\cong \ZZ_2.$$
The generator of $\pi_4(S^3)$ determines the non-trivial $S^3$-bundle over $S^5$ via the clutching construction, $$SU(2) \hookrightarrow SU(3) \twoheadrightarrow S^5.$$
By the Pontryagin--Thom construction, $\pi_4(S^3)\cong \ZZ_2$ comes from $\pi_1(\text{SO}(3))$ and so relates to the t' Hooft--Polyakov $\text{SO}(3)$ monopole \cite{tHooft,Poly}.

\bigskip

\end{appendices}


\end{document}